
\documentclass[12pt]{article}         
\usepackage{amssymb,amsmath,amsbsy,amsfonts}
\usepackage{cite}
\usepackage{epsfig}
\usepackage{graphics}
\usepackage{epsf}
\usepackage[pdfborder={0 0 0}, colorlinks=false, breaklinks=false]{hyperref}

%
%
\setlength{\unitlength}{1mm}

\newlength{\dinwidth} 
\newlength{\dinmargin}
\setlength{\dinwidth}{21.0cm}
\setlength{\textwidth}{14.7cm}
\setlength{\textheight}{23.0cm}
\setlength{\dinmargin}{\dinwidth}
\addtolength{\dinmargin}{-\textwidth}
\setlength{\dinmargin}{0.5\dinmargin}
\setlength{\oddsidemargin}{-1.0in}
\addtolength{\oddsidemargin}{\dinmargin}
\setlength{\evensidemargin}{\oddsidemargin}
\setlength{\marginparwidth}{0.9\dinmargin}
\setlength{\marginparsep}{8pt}
\setlength{\marginparpush}{5pt}
\setlength{\columnseprule}{0mm}
\setlength{\columnsep}{7mm}
\setlength{\topmargin}{-0.5in}
\setlength{\headheight}{30pt}
\setlength{\headsep}{10pt}
\setlength{\footskip}{20pt}

  \parindent=15pt              
  \parskip=1ex
%

\newtheorem{theorem}{Theorem}[section]
\newtheorem{prop}[theorem]{Proposition}
\newtheorem{lemma}[theorem]{Lemma}

\newtheorem{definition}[theorem]{Definition}
 
\newenvironment{proof}{\medskip \noindent 
            {\bf Proof.}}{ \hfill $\square$ \medskip}


%
\newcommand{\ie}{{\it i.e.\ }}

\newcommand{\cf}{{\it cf.\ }}

     

\newcommand{\RR}{\mathbb R}
\newcommand{\CC}{\mathbb C}
\newcommand{\NN}{\mathbb N}
\newcommand{\ZZ}{\mathbb Z}

\newcommand{\BH}{{\cal B(H)}}

\newcommand{\C}{{\cal C}}

\newcommand{\tq}{\times_Q}
\newcommand{\tQ}{\times^Q}


\newcommand{\RC}{\boldsymbol{\cal C}}
\newcommand{\IC}{\boldsymbol{\cal C}^\infty}

\newcommand{\ra}{\boldsymbol A}
\newcommand{\rb}{\boldsymbol B}

\newcommand{\rtm}{\boldsymbol{\times}}

\def\idty{{\leavevmode\hbox{\rm 1\kern -.3em I}}}

\def\As{{\cal A}}
\def\Bs{{\cal B}}
\def\Cs{{\cal C}}
\def\Ds{{\cal D}}

\def\Fs{{\cal F}}

\def\Hs{{\cal H}}

\def\Ls{{\cal L}}

\def\Ns{{\cal N}}

\def\Ps{{\cal P}}

\def\Rs{{\cal R}}
\def\Ss{{\cal S}}
\def\Ts{{\cal T}}

\def\Ws{{\cal W}}

\def\Pid{{\Ps_+ ^{\uparrow}}}
\def\Lid{{\Ls_+ ^{\uparrow}}}

\def\idty{{\leavevmode\hbox{\rm 1\kern -.3em I}}}

\def\RR{{\mathbb R}}
\def\CC{{\mathbb C}}
\def\NN{{\mathbb N}}

\def\ZZ{{\mathbb Z}}

\def\rest{\upharpoonright}


\def\cA{{\cal A}}

\def\cC{{\cal C}}

\def\cH{{\cal H}}

\def\cO{{\cal O}}

\def\cR{{\cal R}}

\def\cW{{\cal W}}

\def\CC{{\mathbb C}}

\def\NN{{\mathbb N}}

\def\RR{{\mathbb R}}

\def\ZZ{{\mathbb Z}}


%


\begin{document}

\title{\Large \bf Warped Convolutions, Rieffel Deformations \\
and the Construction of Quantum Field Theories}

\author{\large { Detlev Buchholz\,$^a$\thanks{Supported by the German 
Research Foundation (Deutsche Forschungsgemeinschaft (DFG)) through the 
Institutional Strategy of the University of G\"ottingen}, 
\ Gandalf Lechner\,$^b$  and \
Stephen J.\ Summers\,$^c$\thanks{Research supported by the NSF Grant DMS-0901370.} }\\[4mm]
${}^a$ Institut f\"ur Theoretische Physik and 
Courant Centre \\
``Higher Order Structures in Mathematics'', 
Universit\"at G\"ottingen, \\ 37077 G\"ottingen, Germany  \\[2mm]
${}^b$ Fakult\"at f\"ur Physik, 
Universit\"at Wien, \\ 1090 Vienna, Austria  \\[2mm]
${}^c$ Department of Mathematics, 
University of Florida, \\ Gainesville FL 32611, USA}

\date{\large }

\maketitle 

\begin{abstract}  \noindent
Warped convolutions of operators were recently introduced in the
algebraic framework of quantum physics as a new constructive tool. 
It is shown here that these convolutions provide isometric 
representations of Rieffel's strict deformations of $C^*$--dynamical
systems with automorphic actions of~$\RR^n$, whenever the latter are 
presented in a covariant representation. Moreover, the device can 
be used for the deformation of relativistic quantum field theories
by adjusting the convolutions to the geometry of Minkowski space.
The resulting deformed theories still comply with pertinent physical
principles and their Tomita--Takesaki modular data coincide with those
of the undeformed theory; but they are in general inequivalent to the
undeformed theory and exhibit different physical 
interpretations.

\end{abstract}

\newpage

\section{Introduction}

     Recent advances in algebraic quantum field theory have led to
purely algebraic constructions of quantum field models on Minkowski
space, both classical and noncommutative
\cite{BrGuLo2,Sch,Le,BuLe,Le2,MuSchYng,BuSuads,BuSu1,Le3,GrLe,GrLe2,BuSu2}, 
many of which cannot be achieved by the standard methods of constructive
quantum field theory. Some of these models are local and free, some
are local and have nontrivial $S$--matrices, and yet others manifest only
certain remnants of locality, though these remnants suffice to enable
the computation of nontrivial $S$--matrix elements.

     In order to construct a quantum field model on noncommutative 
Minkowski space, Grosse and one of us \cite{GrLe} have deformed
the free quantum field in a certain manner to find a family of 
theories which are Poincar\'e covariant and comply with a slightly weakened 
version of the principle of Einstein causality  (``wedge locality''). 
As pointed out in \cite{GrLe}, a completely analogous deformation can be 
carried out on a free field on classical
Minkowski space. In \cite{BuSu2} two of us presented a generalization (called 
a warped convolution) of that deformation which can be applied to any 
Minkowski space quantum field model in any number of dimensions. This 
deformation results in a family of distinct theories  
which are wedge--local and covariant under the representation of the 
Poincar\'e group associated with the initial, undeformed theory. It turns out 
that also the $S$--matrix changes under this deformation, and the scattering
is nontrivial even if the scattering of the initial theory is trivial.  
When taking the free quantum field as the initial
model, this deformation coincides with that of Grosse and Lechner. It provides the first 
fully consistent examples of relativistic quantum field theories on
four--dimensional Minkowski space describing nontrivial elastic 
scattering processes \cite{GrLe,BuSu2}. 

    Warped convolution was subsequently studied in the language of Wightman
quantum field theory in \cite{GrLe2}, where it was shown that the 
deformation of the field operators can be understood as resulting in a 
certain deformation of the canonical product on the Borchers--Uhlmann algebra 
-- the algebra of test functions canonically associated with a Wightman theory. 
This was the first indication that the deformation of operators resulting
from warped convolution may be equivalent to a deformation of the 
operator product. 

    A well known example of this latter type is the strict deformation 
theory of $C^*$--dynamical systems with an action of $\RR^n$
developed by Rieffel in \cite{Ri}. It was originally introduced
for the quantization of classical models. We shall show in this
paper that the warped convolution applied to  
any $C^*$--dynamical system provides a 
covariant representation of the corresponding deformed
Rieffel algebra. In particular, all states in a covariant representation of 
the initial system can be lifted to states on the deformed algebra
(compare \cite[Corollary 4.4]{KNW}). 

     In spite of this tight relation between the two deformation 
procedures, the concept of warped convolution appears to be more
appropriate in applications to quantum field theory. For there one has 
to deal simultaneously with a multitude of different deformations and to 
establish relations between the resulting 
\mbox{operators}. This can be done 
most conveniently in a common representation space of the various Rieffel 
algebras, and such a space is provided by the warped convolution procedure. 
Suitably adjusting the deformation parameters to the geometry of Minkowski 
space, we apply the warped convolutions to quantum field theories, as 
outlined in \cite{BuSu2}, and prove as well as extend the results given there. 
As we shall further explain, any quantum field theory on Minkowski space
can be constructed from a (causal) Borchers triple consisting of a 
von Neumann algebra, a representation of the Poincar\'e group and a vector 
representing the vacuum state. The physical constraints of causality and 
covariance can conveniently be expressed in terms of a few conditions on 
these triples. We shall show that these properties are preserved under a 
distinguished group of warped convolutions, thereby giving rise to interesting 
new theories.

     The article is organized as follows. In Section \ref{strict} we prove 
and extend the results about warped convolution given in \cite{BuSu2}. These 
extensions allow us to establish the relation with Rieffel deformed 
dynamical systems. In Section \ref{triple} a restricted family of 
warped convolutions is applied to Borchers triples to 
construct quantum field theories in two spacetime dimensions. We show, in 
particular, that the Tomita--Takesaki modular objects associated with such 
triples remain fixed under these deformations. The application of the warped 
convolutions to general relativistic quantum field theories in higher 
dimensions is discussed in Section \ref{application}. We present there the 
salient results given in \cite{BuSu2} in the framework of causal Borchers 
triples and also establish further physically relevant properties of 
the deformed theories, not addressed in \cite{BuSu2}. Finally, we 
briefly discuss prospects for further development of this approach in 
Section \ref{conclusions}.

\section{\hspace*{-6mm} 
Warped Convolutions  and Rieffel Deformations} \label{strict}
\setcounter{equation}{0} 

     We clarify here the relation between the notion of warped convolution, 
recently introduced in \cite{BuSu2}, and the strict deformation of 
$C^*$--algebras established in \cite{Ri} by Rieffel. In either case one 
proceeds from a $C^*$--dynamical system $(\As,\RR^n)$,  \cf \cite{Pe}. 
It consists of a $C^*$--algebra $\As$ equipped  
with a strongly continuous automorphic action of the group $\RR^n$
which will be denoted by $\alpha$.

     In order to relate the two settings, it will be convenient to 
consider the system $(\As,\RR^n)$ in a covariant representation. 
That is, we regard $\As$ as a concrete $C^*$--algebra on a Hilbert 
space $\Hs$ on which the automorphisms $\alpha$ are implemented by the
adjoint action of a weakly continuous unitary representation $U$ of
$\RR^n$,
$$ \alpha_x(A) = U(x) A U(x)^{-1} \, , \quad x \in \RR^n \, . $$
As a matter of fact, this assumption imposes no significant restriction of 
generality. For if the abstract algebra $\As$ can be represented faithfully on
some separable Hilbert space, then there also exists a faithful covariant 
representation of $(\As,\RR^n)$, \cf 
\cite[Lemma 7.4.9 and Prop. 7.4.7]{Pe}. 
Furthermore, since the adjoint action 
$\alpha$ of the unitary representation $U$ can be extended to the algebra 
$\Bs(\Hs)$ of all bounded operators on $\Hs$, no generality will be lost 
when we proceed to the $C^*$--dynamical system $(\Cs,\RR^n)$, where 
$\Cs \subset \Bs(\Hs)$ is the $C^*$--algebra of all operators on which 
$\alpha$ acts strongly continuously. We shall then be able to restrict to 
suitable subalgebras $\As$ of $\Cs$ as necessary.

\subsection{Rieffel Deformations} 

     We begin by considering the $C^*$--algebra $\RC$ of all  
uniformly continuous 
bounded functions $\ra: \, \RR^n \rightarrow {\cal B(H)}$. The 
algebraic structure of $\RC$ is the natural one inherited from ${\cal B(H)}$, 
\ie the algebraic operations in $\RC$ are pointwise defined,
$$ (\ra + \rb)(x) = \ra(x) + \rb(x), \
(\ra \, \rb)(x) = \ra(x) \, \rb(x), \ \ra^*(x) = \ra(x)^* \quad
x \in \RR^n \, ,   $$
and the norm is given by\footnote{Risking some confusion,
we use the same symbol for the norms on $\RC$ and ${\cal B(H)}$.}
$$ || \ra || = \sup_{ x \in \RR^n} ||\ra(x)|| \, .$$
Following Rieffel \cite{Ri}, we consider the subalgebra $\IC \subset \RC$ 
of smooth (in the norm topology) elements $\ra$, \ie 
$||\partial^\mu \ra|| < \infty$ for all multi-indices  $\mu$.\footnote{We use 
the notation $\mu = (\mu_1, \dots , \mu_n)$, \, 
$\partial^\mu_x = \partial^{\mu_1}_{x_1}  \cdots \partial^{\mu_n}_{x_n}$
and $ (\partial^\mu \ra)(x) = \partial^\mu_x \ra (x) $, where $x_k$ are the 
components of $x$ with respect to a fixed orthonormal basis in $\RR^n$ and 
$\partial_{x_k}$ are the corresponding partial derivatives, $k = 1, \dots n$.} 
Note that the elements $C \in \BH$ act as multipliers on $\IC$,
if one identifies $C$ with the corresponding constant function in $\IC$ 
(denoted by the same symbol). Clearly, the maps $(C, \ra) \mapsto C \ra$ 
and $(C, \ra) \mapsto \ra C$ are norm continuous in both variables,
$|| C \ra || \leq || C || \, || \ra || \geq || \ra C ||$, and the 
multiplication by $C$ commutes with the operations of
differentiation and integration on $\IC$. 

     In the subsequent analysis we shall find it necessary to integrate the 
functions $x \mapsto \ra(x)$. In order to handle the fact that these 
functions are in general not absolutely integrable with respect to Lebesgue 
measure due to a lack of suitable decay properties, we introduce mollifiers
$L_n : \RR^n \rightarrow \CC$. A convenient choice is given by
$$ L_n(x) = (i + x_1 + \dots + x_n)^{-1} \prod_{k = 1,  \dots ,n} 
(i + x_k)^{-1} \, , \quad x \in \RR^n \, . $$
Because of this simple form one easily verifies that
$(\partial^\mu L_n)(x) = N_{n, \mu}(x) L_n(x)$, where $N_{n, \mu}$ is smooth and 
bounded for any multi-index $\mu$; moreover $L_n \in L^1(\RR^n)$. We 
therefore choose $L_n$ as a universal mollifier on $\IC$. 

     It follows from the preceding remarks that, for any multi-index $\mu$, 
the functions $x \mapsto \partial^\mu_x \big( L_n(x) \ra(x) \big)$
are Bochner integrable in $\BH$ with respect to the Lebesgue measure. 
Moreover, applying Leibniz's  rule, one gets 
$$ \int \! dx \, || \partial^\mu_x \big( L_n(x) \ra(x) \big) || 
\leq c_{n , \mu} \, || \ra ||_{|\mu|} \, , \quad \ra \in \IC \, , $$
where $c_{n , \mu}$ does not depend on $\ra$, and we have introduced the
norms, $m \in \NN_0$,  
$$  || \ra ||_m  = \sum_{\mu, \, |\mu| \leq m} || \partial^\mu
\ra || \, . $$

     The following technical lemma is a basic ingredient in the subsequent 
discussion. In its proof, we make use of arguments furnished by Rieffel 
\cite{Ri}. 

\begin{lemma} \label{1.1}
Let $\ra, \rb \in \RC$ be $n+1$ times continuously differentiable and 
let $f \in {\cal S} (\RR^{n} \times \RR^{n})$ with $f(0,0) = 1 $. \\[1mm] 
(i) The norm limit of Bochner integrals in $\BH$, 
$$ \lim_{\varepsilon \rightarrow 0} \ (2 \pi)^{-n} \! 
\iint  \! dx dy \,  f(\varepsilon x, \varepsilon y)  \, e^{- i xy}  \, 
\ra(x) \rb(y) \doteq \ra \rtm \rb \, , $$
exists and does not depend on $f$. {Here $x y, \, x,y \in \RR^n $ is any 
symmetric bilinear form on $\RR^n$} with 
determinant $1$ or $-1$. \\[1mm]
(ii) \ With $L_n$ as above, there exists a polynomial $u,v \mapsto P_n(u,v)$ 
on $\RR^n \times \RR^n$ of degree $n + 1$ in the components of 
$u$ and $v$, respectively, such that
$$ \ra \rtm \rb = (2 \pi)^{-n} \! 
\iint \! dx dy  \, e^{- i xy}  \, P_n(\partial_x, \partial_y) \,  
L_n(x) \ra(x) \, L_n(y) \rb(y) \, , $$
where the integral is defined as a Bochner integral in $\BH$. \\[1mm]
(iii) \, $ ||  \ra \rtm \rb || \leq  c_n \, || \ra ||_{n+ 1} \,
|| \rb ||_{n+1} $,  \
for a universal constant $c_n$.  \\[1mm]
(iv) Let $C \in \BH$. Then
\begin{equation*} 
\begin{split}
& (C \ra \rtm \rb) = C (\ra \rtm \rb), \quad 
(\ra \rtm \rb C) = (\ra \rtm \rb ) C, \\
&  (\ra C \rtm \rb) = (\ra \rtm C \rb) \, , 
\end{split}
\end{equation*}
and the linear map
$$ C \mapsto \ra \rtm C \rb $$
is continuous on the unit sphere of $\BH$ in the strong operator topology.
\end{lemma}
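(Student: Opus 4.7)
The plan is to follow Rieffel's integration-by-parts strategy, exploiting the fact that $L_n(x)^{-1}$ is a polynomial of degree $n+1$ in the components of $x$, so that $L_n(x)^{-1}L_n(y)^{-1}$ is a polynomial in $(x,y)$ of the same bi-degree in each variable. Using the identities $x_k\, e^{-ixy} = i\partial_{y_k} e^{-ixy}$ and $y_k\, e^{-ixy} = i\partial_{x_k} e^{-ixy}$ (the linear change between components of $x$ and derivatives in $y$ being invertible since the bilinear form has $|\det|=1$), this polynomial can be traded, via repeated integration by parts, for the action of a differential operator $P_n(\partial_x, \partial_y)$ of degree $n+1$ in each set of derivatives.

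To establish (i) and (ii) simultaneously, insert the trivial factor $L_n(x)^{-1}L_n(x)\cdot L_n(y)^{-1}L_n(y)$ into the regularized integrand $f(\varepsilon x, \varepsilon y) e^{-ixy}\ra(x)\rb(y)$ and apply the above procedure. Since $f\in{\cal S}(\RR^n\times\RR^n)$ kills all boundary terms, one obtains
\begin{equation*}
(2\pi)^{-n}\!\iint \! dx\,dy\,e^{-ixy} P_n(\partial_x,\partial_y)\bigl[f(\varepsilon x,\varepsilon y)L_n(x)\ra(x)L_n(y)\rb(y)\bigr].
\end{equation*}
Expand by Leibniz: terms in which every derivative falls on the $L_n\ra$ and $L_n\rb$ factors reproduce the formula of (ii) multiplied by $f(\varepsilon x,\varepsilon y)$, while terms in which at least one derivative hits $f(\varepsilon\,\cdot\,,\varepsilon\,\cdot\,)$ acquire a factor of $\varepsilon$ and vanish pointwise as $\varepsilon\to 0$. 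Each term is majorized in norm by a fixed $L^1$--function, using the estimate $\int dx\,\|\partial^\mu_x(L_n(x)\ra(x))\|\le c_{n,\mu}\|\ra\|_{|\mu|}$ stated just before the lemma (and its analogue for $\rb$), together with the uniform boundedness of $f$ and its derivatives. Lebesgue dominated convergence in the Bochner sense then yields the existence of the limit in (i), identifies it with the formula of (ii), and shows its independence of $f$. The estimate (iii) now follows at once from (ii) and the same $L^1$--bounds, the constant $c_n$ absorbing the coefficients of $P_n$.

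For (iv), the first two identities are immediate from (ii): a constant $C\in\BH$ commutes with $\partial^\mu_x,\partial^\mu_y$ and with Bochner integration, hence pulls out of the integrand on the appropriate side. The central identity $\ra C\rtm\rb=\ra\rtm C\rb$ holds because the scalar factors $L_n(x), L_n(y)$ commute with $C$, so both sides reduce to the Bochner integral of $P_n(\partial_x,\partial_y)[L_n(x)\ra(x)\,C\,L_n(y)\rb(y)]$, with $C$ sitting between factors of disjoint variable dependence and therefore transparent to the derivatives. For strong continuity on the unit sphere, let $C_\lambda\to C$ strongly with $\|C_\lambda\|\le 1$; then for each $\xi\in\Hs$ the integrand applied to $\xi$ converges in norm pointwise in $(x,y)$ to the corresponding limit, and is dominated by a fixed $L^1$--function independent of $\lambda$. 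Vector-valued dominated convergence then yields $\ra\rtm C_\lambda\rb\,\xi\to\ra\rtm C\rb\,\xi$, i.e.\ strong convergence.

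The main technical hurdle is the uniform-in-$\varepsilon$ bookkeeping of the Leibniz expansion: one must check simultaneously that all terms produced by $P_n(\partial_x,\partial_y)$ acting on the product containing $f(\varepsilon x,\varepsilon y)$ are dominated by a single $L^1$--function and that the terms carrying derivatives of the cutoff actually vanish as $\varepsilon\to 0$, so that the limit indeed produces the $\varepsilon$-- and $f$--independent formula of (ii).
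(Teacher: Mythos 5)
Your proof follows essentially the same strategy as the paper's: insert the trivial factor $L_n^{-1}L_n$ in each variable, trade the polynomial $L_n(x)^{-1}L_n(y)^{-1}$ for a differential operator acting on $e^{-ixy}$ via Fourier duality, integrate by parts, apply Leibniz and observe that the derivatives hitting the cutoff $f(\varepsilon\cdot,\varepsilon\cdot)$ carry factors of $\varepsilon$, and conclude by dominated convergence for Bochner integrals. Parts (ii), (iii) and the three algebraic identities in (iv) are obtained exactly as in the paper.

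The one place where you depart from the paper is the strong continuity assertion in (iv). The paper approximates the Bochner integral in operator norm, uniformly in $C$ on the unit ball, by finite Riemann sums of the form $\sum c_{\mu,\nu,i,k}\,(\partial^\mu\ra)(x_i)\,C\,(\partial^\nu\rb)(y_k)$; since each such sum is manifestly SOT--continuous in $C$ on bounded sets, the uniform limit is as well, for arbitrary nets. You instead invoke a vector-valued dominated convergence theorem along a net $C_\lambda\to C$. Dominated convergence fails for general nets (even for scalar functions), so as stated your argument establishes only sequential continuity. This suffices when $\Hs$ is separable, since the SOT on the unit ball is then metrizable, but the lemma as formulated has no separability hypothesis. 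To make your continuity proof airtight, either restrict to sequences and add the metrizability remark, or switch to the paper's uniform Riemann-sum approximation, which sidesteps the net issue entirely.
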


\begin{proof} (i) Crucial for the result are certain properties of 
the function $x,y \mapsto e^{- i xy}$. Namely, for each polynomial 
$x,y \mapsto Q_n(x,y)$ of degree $n+1$ in the variables $x$ and $y$, 
respectively, there is a corresponding polynomial $P_n$
in the same family such that
$$Q_n(x,y) \,  e^{ - i xy} = P_n( - \partial_x, - \partial_y) \,
e^{- i xy} \, ,
$$
and \textit{vice versa}. The step from the right hand side to the left hand 
side is easily accomplished by differentiation; the opposite direction can 
likewise be established, noticing that the Fourier transform 
of $x,y \mapsto e^{ - i xy}$ is again of this form. Choosing 
\begin{align*}
Q_n(x, y) = L_n(x)^{-1} \,  L_n(y)^{-1}, \quad x,y \in \RR^n ,
\end{align*}

with $L_n$ specified above, one observes that 
\begin{equation*}
\begin{split}
& \iint \! dx dy  \,   f(\varepsilon x, \varepsilon y) \, e^{- i xy}  
\, \ra(x) \rb(y)  \\
& = \iint \! dx dy  \, Q_n(x,y) \,e^{- i xy} \, f(\varepsilon x, \varepsilon y)  
\, L_n(x) \ra(x) \, L_n(y) \rb(y)  \\
& = \iint \! dx dy  \, \big( P_n(-\partial_x, - \partial_y)) \,e^{- i xy}
\big) \, 
f(\varepsilon x, \varepsilon y)  \, L_n(x) \ra(x) \,
L_n(y) \rb(y) \\
& = \iint \! dx dy  \, e^{- i xy} \,  P_n(\partial_x, \partial_y)
f(\varepsilon x, \varepsilon y)  \, L_n(x) \ra(x) \,
L_n(y) \rb(y) \, ,
\end{split}
\end{equation*}
where $P_n$ is the polynomial corresponding to the chosen $Q_n$. In view of 
the smoothness and rapid decay properties of the integrands, the integrals 
are defined as Bochner integrals in $\BH$, and the last equality is obtained 
by partial integration. Decomposing $ P_n(\partial_x, \partial_y)$ into a sum 
of monomials of the form $\partial_x^\mu \partial_y^\nu$ with 
$|\mu|, |\nu| \leq n +1 $, performing the differentiations and taking into 
account the properties of $L_n$, one gets by an application of the
dominated convergence theorem 
\begin{equation} \label{eq1.1}
\begin{split}
& \lim_{\varepsilon \rightarrow 0} 
\iint \! dx dy  \, e^{-i xy} \,  P_n(\partial_x, \partial_y)
f(\varepsilon x, \varepsilon y)  \, L_n(x) \, \ra(x) \,
L_n(y) \rb(y) \\
& = \iint \! dx dy  \, e^{-i xy}  \, P_n(\partial_x, \partial_y) \, 
L_n(x) \ra(x) \, L_n(y) \rb(y) \, .
\end{split} 
\end{equation}
Note that the derivatives of $x,y \mapsto f(\varepsilon x, \varepsilon y)$ 
contain powers of $\varepsilon$ as factors and therefore disappear in the 
limit. In particular, the limit does not depend on the choice of $f$. 

     Assertion (ii) has been established in (\ref{eq1.1}). From this relation 
one also obtains the estimate (iii), because of the properties of $L_n$ 
established above. 

     The proof of the equalities in (iv) is another straightforward 
consequence of (ii) and is therefore omitted. It remains to establish
the continuity of the map. In view of the continuity and decay 
properties of the functions appearing in the representation  
(ii) of the product $\rtm$, the integrals underlying the definition of  
$\ra \rtm C \rb$ can be approximated in norm, uniformly for $C \in \BH$, 
$|| C || \leq 1 $, by finite sums of the form
$$ \sum_{\mu, \nu, i, k} c_{\mu, \nu, i, k} \, 
(\partial^\mu \ra)(x_i) 
\, C \, (\partial^\nu \rb)(y_k)
\, , $$
where $ c_{\mu. \nu, i, k}$ are constants which do not depend on
$C$. Since the operators $ (\partial^\mu \ra)(x_i)$,
$(\partial^\nu \rb)(y_k)$ are bounded, the stated
continuity properties of the map with respect to $C$ then follow.
\end{proof}

     Within this general setting, the Rieffel deformations of the 
$C^*$--dynamical system $(\Cs, \RR^n)$ \cite{Ri} can be presented as follows. 
Let $\Cs^\infty \subset \Cs$ be the ${}^*$--algebra of smooth elements with 
respect to the action of $\alpha$ and let $Q$ be a real skew symmetric matrix 
relative to the chosen bilinear form on $\RR^n$, \ie $x Q y = -  y Q x$, 
$x,y \in \RR^n$. One then considers  for $A, B \in \Cs^\infty$ 
the functions in $\IC$ given by
$$ x \mapsto \boldsymbol{A_{\alpha_Q}} (x) \doteq  \alpha_{Qx}(A) \, , \quad  
y \mapsto \boldsymbol{B_{\alpha}} (y) \doteq  \alpha_{y}(B) $$
and sets  
$$ A \tq B \doteq  \boldsymbol{A_{\alpha_Q}} \rtm  \boldsymbol{B_{\alpha}} 
\, , \quad  A, B \in \C^\infty \, . $$
It has been shown by Rieffel  \cite{Ri} that $\tq$ defines an associative
product on $\Cs^\infty$, the Rieffel product, which is compatible with the 
${}^*$--operation. In view of the normalization of the bilinear form 
chosen in Lemma \ref{1.1} (i), the original identity operator $1$ still
acts as the identity with respect to 
the new product $\tq$. Moreover, there exists a 
$C^*$--norm on the deformed algebra $(\Cs^\infty, \tq)$. It is of interest here 
that, by Lemma \ref{1.1}, the Rieffel product extends to more general 
functions in $\RC$ in a natural manner. We shall take advantage of this fact 
in the following subsection. 

\subsection{Warped Convolution}

     We turn now to the discussion of the warped convolution on $(\Cs, \RR^n)$ 
introduced in \cite{BuSu2}.  Many of the results below
were stated there and provided with sketches of proofs; in addition
to supplying complete proofs of those assertions, here we also prove results 
which strengthen and complement those discussed in \cite{BuSu2}. 

     The weakly continuous unitary representation $U$ of $\RR^n$ implementing 
$\alpha$ enters into the definition of the warped convolution. 
In a first step we want to give proper meaning to the 
formal expressions $\int_B \, \alpha_{Qx}(A) \, dE(x)$ and 
$\int_B \, dE(x) \, \alpha_{Qx}(A) $, $A \in \Cs^\infty$, where $Q$ is any real 
$n \times n$ matrix, $E$ is the spectral resolution of $U$ and 
$B \subset \RR^n$ is any bounded Borel set. If $F$ is a  
finite--dimensional projection, it follows from the spectral
calculus that the integrals $\int_B \, \alpha_{Qx}(A) \, F dE(x)$ and 
$\int_B \, dE(x) F \, \alpha_{Qx}(A) $ are well--defined in the
strong operator topology. Moreover, since 
$U(y) = \int \! e^{i xy} \, dE(x)$, $y \in \RR^n$, one obtains 
for any test function $f$ as in Lemma \ref{1.1}
\begin{equation*} 
\begin{split}
& \int_B  \alpha_{Qx}(A) \, F dE(x) =  \lim_{\varepsilon \rightarrow
  0} \ (2 \pi)^{-n} 
\iint \! dx dy \, f(\varepsilon x, \varepsilon y) 
e^{- ixy} \, \alpha_{Qx}(A) \, F E(B) U(y) \, , \\
& \int_B  dE(x) \, F \alpha_{Qx}(A)  =  \lim_{\varepsilon \rightarrow
  0} \ (2 \pi)^{-n} 
\iint \! dx dy \, f(\varepsilon x, \varepsilon y) 
e^{- ixy} \ E(B) U(x) F \, \alpha_{Qy}(A) \, .
\end{split}
\end{equation*}
Hence, adopting the notation in Lemma \ref{1.1},  
$$ \int_B  \alpha_{Qx}(A) \, F dE(x)
= \boldsymbol{A_{\alpha_Q}} \rtm F \boldsymbol{U_B} \, , \quad
 \int_B  dE(x) \, F \alpha_{Qx}(A) 
= \boldsymbol{U_B} F \rtm  \boldsymbol{A_{\alpha_Q}} \, , $$ 
where we have introduced the function 
$x \mapsto \boldsymbol{U_B}(x) \doteq E(B) \, U(x)$, 
which is an element of $\IC$ since the set $B$ is bounded. Choosing any net 
of projections $F$ converging monotonically to the identity operator
$1$, it follows from part~(iv) of Lemma \ref{1.1} that 
the right hand side of these equalities converges in the strong operator 
topology. Hence the corresponding limits of the integrals on the left hand side 
exist in $\BH$ and can be used to define the warped convolution 
integrals as
\begin{equation}  \label{eq2.1}
\begin{split}
& \int_B  \alpha_{Qx}(A) \, dE(x)
\doteq
\lim_{F\nearrow1} \int_B  \alpha_{Qx}(A) \,F\, dE(x)
=
\boldsymbol{A_{\alpha_Q}} \rtm \boldsymbol{U_B} \, , \\
&  \int_B  dE(x) \, \alpha_{Qx}(A)
\doteq
\lim_{F\nearrow1} \int_B  dE(x)\,F\alpha_{Qx}(A)
= 
\boldsymbol{U_B} \rtm  \boldsymbol{A_{\alpha_Q}} \, .
\end{split}
\end{equation}
If the spectrum $\mbox{sp} \, U$ of $U$ is compact, the integrals do not 
depend on $B$ if $B \supset \mbox{sp} \, U$, so one can proceed to the 
limit $B \nearrow \RR^n$ in $\BH$. In the general case, however, we have 
\textit{a~priori} no control on the continuity properties of the resulting 
operators. In order to cope with this problem, we consider the dense domain 
${\cal D \subset H}$ of vectors which are smooth with respect to the action 
of $U$. Let $P = (P_1, \dots ,P_n)$ be the generators of $U$ and let
$$x \mapsto \big( \boldsymbol{U_{\RR^n}} (1 + P^2)^{-n-1} \big)(x) \doteq 
U(x)  (1 + P^2)^{-n-1} \, . $$ 
This function is an element of $\RC$ which is $n + 1$ times continuously 
differentiable. Making use of the first half of part (iv) of Lemma \ref{1.1}, 
we thus get for $\Phi \in {\cal D}$
\begin{equation*}
\begin{split}
& \big( \boldsymbol{A_{\alpha_Q}} \rtm \boldsymbol{U_B} \big) \, \Phi = 
\big( \boldsymbol{A_{\alpha_Q}} \rtm 
\boldsymbol{U_B} \big) (1 + P^2)^{-n-1} \, (1 + P^2)^{n+1} \Phi \\
& = \big( \boldsymbol{A_{\alpha_Q}} \rtm E(B) \,
\boldsymbol{U_{\RR^n}} (1 + P^2)^{-n-1} \big) \, (1 + P^2)^{n+1} \Phi \, .
\end{split}
\end{equation*}
In the latter expression we can proceed to the limit 
$B \nearrow \RR^n$ according to the second half of part (iv) of Lemma 
\ref{1.1}, since the projections $E(B)$ converge strongly to $1$ in this 
limit. In view of relation (\ref{eq2.1}), this proves the existence of the 
first type of integrals, 
\begin{equation} \label{covariance}
\begin{split}
& \int  \alpha_{Qx}(A) \, dE(x) \, \Phi \doteq
\lim_{B \nearrow \RR^n} \int_B  \alpha_{Qx}(A) \, dE(x) \, \Phi \\
& = \big( \boldsymbol{A_{\alpha_Q}} \rtm \boldsymbol{U_{\RR^n}} (1 + P^2)^{-n-1} \big) \, 
(1 + P^2)^{n+1} \Phi \, .
\end{split}
\end{equation}
Part (iii) and the first half of part (iv) of Lemma \ref{1.1} 
imply that the functions 
$x \mapsto \alpha_x 
\big( \boldsymbol{A_{\alpha_Q}} \rtm \boldsymbol{U_{\RR^n}} (1 +
P^2)^{-n-1} \big)
= U(x) \boldsymbol{A_{\alpha_Q}} U(x)^{-1} \rtm \boldsymbol{U_{\RR^n}} (1 +
P^2)^{-n-1}$
are elements of $\IC$, since $x,y \mapsto (U(x)
\boldsymbol{A_{\alpha_Q}} U(x)^{-1})(y) = \alpha_{x + Qy}(A)$ is smooth
in both variables. So it is also clear that the domain $\cal D$ 
is stable under the action of the integrals. 

     For the proof of existence of the second type of integrals, we 
make use of the fact that the functions 
$x \mapsto \alpha_x \big( (1 + P^2)^{n+1} A (1 + P^2)^{-n-1} \big)$, 
$A \in \Cs^\infty$ are elements of $\IC$.\footnote{This can be seen by pulling 
the components of $P$ through from the left to the right of $A$, noticing that 
their commutators with $A$ yield derivatives of $A$.}  Thus, by the first 
half of part (iv) of Lemma \ref{1.1}, we get for $\Phi \in {\cal D}$
\begin{equation*}
\begin{split}
& \big( \boldsymbol{U_B} \rtm \boldsymbol{A_{\alpha_Q}} \big) \, \Phi 
 =  \big( \boldsymbol{U_B} \rtm \boldsymbol{A_{\alpha_Q}} \big)
\, (1 + P^2)^{-n-1}  (1 + P^2)^{n+1} \Phi \\
& =  \big( \boldsymbol{U_B} \rtm (1 + P^2)^{-n-1} (1 + P^2)^{n+1}  \boldsymbol{A_{\alpha_Q}} 
(1 + P^2)^{-n-1} \big) \,  (1 + P^2)^{n+1} \Phi \\
& =  \big( \boldsymbol{U_{\RR^n}} E(B) (1 + P^2)^{-n-1}   
\rtm (1 + P^2)^{n+1}  \boldsymbol{A_{\alpha_Q}} 
(1 + P^2)^{-n-1} \big)  (1 + P^2)^{n+1} \Phi \, ,
\end{split}
\end{equation*}
where in the latter expression we can proceed again to the 
strong limit $E(B) \nearrow 1$ if $B \nearrow \RR^n$. 
In view of relation (\ref{eq2.1}), this proves existence of the strong limits
\begin{equation*}
\begin{split}
& \int  dE(x) \, \alpha_{Qx}(A) \, \Phi \doteq
\lim_{B \nearrow \RR^n} \int_B  dE(x) \, \alpha_{Qx}(A) \, \Phi \\
& =  \big( \boldsymbol{U_{\RR^n}} (1 + P^2)^{-n-1} 
\rtm (1 + P^2)^{n+1}  \boldsymbol{A_{\alpha_Q}} 
(1 + P^2)^{-n-1} \big)  (1 + P^2)^{n+1} \Phi \, .
\end{split}
\end{equation*}
By a similar argument as before, one finds that the domain $\cal D$ is stable 
under the action of the second type of integrals, as well.

     It follows at once from the preceding results and Lemma \ref{1.1} that 
one can conveniently present the two types of integrals on the domain 
${\cal D}$ in terms of the strong limits   
\begin{equation} \label{convenient} 
\begin{split}
& \int  \alpha_{Qx}(A) \, dE(x) \, \Phi 
=  \lim_{\varepsilon \rightarrow 0} \ (2 \pi)^{-n} \! \iint \! dx dy
\, f(\varepsilon x, \varepsilon y) \, e^{-ixy} \, 
\alpha_{Qx}(A) \, U(y) \, \Phi \, , \\
& \int  dE(x) \, \alpha_{Qx}(A) \, \Phi 
=  \lim_{\varepsilon \rightarrow 0} \ (2 \pi)^{-n} \! \iint \! dx dy
\,   f(\varepsilon x, \varepsilon y) \,  e^{-ixy} \, 
U(x) \, \alpha_{Qy}(A)  \, \Phi \, . 
\end{split}
\end{equation}
We shall make use of these relations throughout the subsequent 
analysis. 

     In the following, we limit ourselves to the particularly interesting case 
where the matrix $Q$ entering into the definition of the integrals is skew 
symmetric relative to the chosen bilinear form on $\RR^n$. 
It is understood without further mention that the integrals 
are defined on the common stable domain $\cal D$. 

\begin{lemma} \label{symmetric}
Let $Q$ be any real skew symmetric matrix on $\RR^n$ and
let $A \in \Cs^\infty$.  Then \\[1mm]
(i) \ $\int \! dE(x) \, \alpha_{Qx}(A) = \int \! \alpha_{Qx}(A) \,
dE(x)$ and \\[1mm]
(ii) \ $\big( \int \! \alpha_{Qx}(A)  \, dE(x) \big)^* \supset
\int \! \alpha_{Qx}(A^*) \, dE(x) $.  
\end{lemma}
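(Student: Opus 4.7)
My plan is to reduce (i) to a commutator argument that exploits the skew-symmetry of $Q$ together with the commutativity of the derivations associated with the abelian action $\alpha$, and then to deduce (ii) from (i) using the independence of the oscillatory integral from its mollifier (Lemma \ref{1.1}(i)). Throughout I write $R(A) := \int \alpha_{Qx}(A)\, dE(x)$ and $L(A) := \int dE(x)\, \alpha_{Qx}(A)$.

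For (i), I first apply the strong-limit representation (\ref{convenient}) to both operators. Choosing a symmetric Schwartz mollifier $f$ with $f(0,0)=1$ and swapping the dummy variables $x \leftrightarrow y$ in the expression for $L(A) \Phi$ gives
\[
L(A)\Phi = \lim_\varepsilon (2\pi)^{-n} \iint f(\varepsilon x, \varepsilon y)\, e^{-ixy}\, U(y)\, \alpha_{Qx}(A)\, \Phi\, dx\, dy .
\]
Using the intertwining relation $U(y)\, \alpha_{Qx}(A) = \alpha_{y + Qx}(A)\, U(y)$, which holds because $\alpha$ is an $\RR^n$-action, I obtain
\[
(R(A) - L(A))\Phi = \lim_\varepsilon (2\pi)^{-n} \iint f(\varepsilon x, \varepsilon y)\, e^{-ixy}\, \bigl(\alpha_{Qx}(A) - \alpha_{y + Qx}(A)\bigr)\, U(y)\, \Phi\, dx\, dy .
\]
A first-order Taylor expansion yields $\alpha_{Qx}(A) - \alpha_{y + Qx}(A) = -\sum_k y_k \int_0^1 \alpha_{sy + Qx}(\delta_k A)\, ds$, where $\delta_k$ are the generators of $\alpha$ on $\Cs^\infty$.

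I would then exploit $y_k\, e^{-ixy} = i\, \partial_{x_k} e^{-ixy}$ and integrate by parts in $x_k$. The Schwartz mollifier kills the boundary terms; the contribution in which $\partial_{x_k}$ lands on $f(\varepsilon x, \varepsilon y)$ carries an extra factor $\varepsilon$ and vanishes in the limit (Lemma \ref{1.1}(iii) applied to the resulting integrand provides a uniform bound since $\delta_k A \in \Cs^\infty$). The surviving piece uses $\partial_{x_k}\, \alpha_{sy + Qx}(\delta_k A) = \sum_j Q_{jk}\, \alpha_{sy + Qx}(\delta_j \delta_k A)$ and gives
\[
(R(A) - L(A))\Phi = i\, \sum_{j, k} Q_{jk} \int_0^1\! ds\, \lim_\varepsilon (2\pi)^{-n} \iint f(\varepsilon x, \varepsilon y)\, e^{-ixy}\, \alpha_{sy + Qx}(\delta_j \delta_k A)\, U(y)\, \Phi\, dx\, dy .
\]
Because the derivations $\delta_j$ commute (abelianness of $\alpha$), the element $\delta_j \delta_k A$ is symmetric in $(j,k)$, while $Q_{jk}$ is antisymmetric; hence $\sum_{j,k} Q_{jk}\, \delta_j \delta_k A = 0$ and thus $R(A) = L(A)$.

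For (ii), I use (i) together with Lemma \ref{1.1}(i). Starting from $\langle R(A)\Phi, \Psi\rangle = \overline{\langle \Psi, R(A)\Phi\rangle}$ and the oscillatory representation of $R(A)$, complex conjugation together with $\alpha_{Qx}(A)^* = \alpha_{Qx}(A^*)$ and $U(y)^* = U(-y)$ yields
\[
\langle R(A)\Phi, \Psi\rangle = \lim_\varepsilon (2\pi)^{-n} \iint \overline{f(\varepsilon x, \varepsilon y)}\, e^{ixy}\, \langle \Phi, U(-y)\, \alpha_{Qx}(A^*)\, \Psi\rangle\, dx\, dy .
\]
Substituting $y \to -y$ and then swapping $x \leftrightarrow y$ produces the new mollifier $\tilde f(u,v) := \overline{f(v,-u)}$, which still satisfies $\tilde f(0,0)=1$. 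By Lemma \ref{1.1}(i) the oscillatory limit is independent of the mollifier, so the right-hand side coincides with $\langle \Phi, L(A^*)\Psi\rangle$, and by (i) with $\langle \Phi, R(A^*)\Psi\rangle$. Thus $R(A^*) \subset R(A)^*$, as required.

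The main obstacle is the rigorous justification of the integration by parts in (i): one must show that the term arising when $\partial_{x_k}$ hits $f(\varepsilon x,\varepsilon y)$ is genuinely $O(\varepsilon)$ in the strong-operator limit and that the $s$-integral can be interchanged with the $\varepsilon\to 0$ limit. Both points reduce to Lemma \ref{1.1}-type uniform Bochner estimates, which are available because $A$ and all iterated derivatives $\delta_j \delta_k A$ belong to $\Cs^\infty$.
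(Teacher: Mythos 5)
Your proof of part (ii) is essentially the paper's: compute the adjoint of the approximating oscillatory integral, substitute $y\to -y$ and swap $x\leftrightarrow y$ to produce a new mollifier still equal to $1$ at the origin, appeal to the mollifier-independence of Lemma~\ref{1.1}(i), and then invoke part~(i). This is the same argument.

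For part (i), however, you take a genuinely different route. The paper restricts the integration to $(\ker Q)^\perp\times(\ker Q)^\perp$, performs the substitution $x\to x+Q^{-1}y$ (under which $e^{-ixy}$ is invariant because $Q^{-1}$ is skew-symmetric and $\alpha_{Qx+y}(A)\,U(y)=U(y)\,\alpha_{Qx}(A)$), swaps the dummy variables, and concludes by the mollifier-independence of the Rieffel product. Your argument instead forms the difference $R(A)-L(A)$, Taylor-expands $\alpha_{Qx}(A)-\alpha_{y+Qx}(A)$ to first order, integrates by parts in $x$ to trade $y_k$ for a derivative, and observes that the surviving coefficient $\sum_{j,k}Q_{jk}\,\delta_j\delta_k A$ vanishes because $Q$ is antisymmetric while the commuting derivations $\delta_j\delta_k$ are symmetric. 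This is a clean alternative which avoids the restriction to the orthogonal complement of $\ker Q$; it also makes transparent exactly why skew-symmetry of $Q$ is needed. Two small caveats: the identity $y_k e^{-ixy}=i\,\partial_{x_k}e^{-ixy}$ as written holds only for the Euclidean form -- for a general symmetric bilinear form $xy=x^TGy$ one gets $(Gy)_k$ instead of $y_k$, so the surviving coefficient is $QG^{-1}$ rather than $Q$; but $QG^{-1}$ is antisymmetric as a matrix precisely because $Q$ is skew with respect to the form, so the cancellation still goes through. And the integrand after Taylor expansion, $(x,y)\mapsto\int_0^1\alpha_{sy+Qx}(\delta_kA)\,ds$, is a joint smooth bounded function of $(x,y)$ rather than a product $\ra(x)\rb(y)$, so one cannot cite Lemma~\ref{1.1} verbatim for the claimed $O(\varepsilon)$ estimate; one must rerun the $L_n$-regularization/integration-by-parts argument from the proof of Lemma~\ref{1.1}(i) in this slightly more general setting, which works without difficulty. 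You flag this, so the proof is sound.
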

 
\begin{proof} (i) As pointed out above, one has for $\Phi \in {\cal D}$
$$ \int  \alpha_{Qx}(A) \, dE(x) \, \Phi  = 
\lim_{\varepsilon \rightarrow 0} \ \, (2 \pi)^{-n} \! 
\iint \! dx dy \, f(\varepsilon x, \varepsilon y) \, 
e^{-ixy} \, \alpha_{Qx}(A) \, U(y) \, \Phi \, . $$
The integration can be restricted to the submanifold 
$(\mbox{ker} \, Q)^\perp \times (\mbox{ker} \, Q)^\perp$, since  
the remaining integrals merely produce factors of $2 \pi$. 
Substituting $x \rightarrow x + Q^{-1}y$ and taking into
account that $Q^{-1}$ is skew symmetric, one gets 
\begin{equation*}
\begin{split}
& \iint \! dx dy \, f(\varepsilon x, \varepsilon y) \, 
e^{-ixy} \, \alpha_{Qx}(A) \, U(y)   \\
& = \iint \! dx dy \, g(\varepsilon x, \varepsilon y) \, 
e^{-ixy} \, U(y) \alpha_{Qx}(A)  \\
& =  \iint \! dx dy \, g(\varepsilon y, \varepsilon x) \, 
e^{-ixy}  \, U(x) \, 
\alpha_{Qy}(A) \, ,
\end{split}
\end{equation*}
where $g(x,y) \doteq f(x + Q^{-1}y, y)$. In the limit 
$\varepsilon \rightarrow 0$ one obtains 
$$  \lim_{\varepsilon \rightarrow 0} \ (2 \pi)^{-n} \!
\iint \! dx dy \, g(\varepsilon y, \varepsilon x) \, 
e^{-ixy}  \, U(x) \, 
\alpha_{Qy}(A) \, \Phi = \int  dE(x) \, \alpha_{Qx}(A) \, \Phi  
\, ,$$
proving assertion (i). \\[1mm]

(ii) For the proof of the second assertion, note that 
\begin{equation*}
\begin{split}
&  \Big( \iint \! dx dy \, f(\varepsilon x, \varepsilon y) \, 
e^{-ixy} \, \alpha_{Qx}(A) \, U(y) \Big)^*  \\
& =  \iint \! dx dy \, \overline{f(\varepsilon x, \varepsilon y)} \, 
e^{ixy} \,  U(-y) \, \alpha_{Qx}(A^*) \\
& =  \iint \! dx dy \, \overline{f(\varepsilon y, - \varepsilon x)} \, 
e^{-ixy} \,  U(x) \, \alpha_{Qy}(A^*)  \, .
\end{split}
\end{equation*}
Hence, for $\Psi, \Phi \in {\cal D}$, 
\begin{equation*}
\begin{split}
\langle \Psi, \int  \! \alpha_{Qx}(A) \, dE(x) \, \Phi \rangle 
& = \lim_{\varepsilon \rightarrow 0} \, (2\pi)^{-n}
\langle \Psi, \iint \! dx dy \, f(\varepsilon x, \varepsilon y) \, 
e^{-ixy} \, \alpha_{Qx}(A) \, U(y) \Phi \rangle \\
& =  \lim_{\varepsilon \rightarrow 0} \,   (2\pi)^{-n}
\langle \iint \! dx dy \, \overline{f(\varepsilon y, - \varepsilon x)} \, 
e^{-ixy} \,  U(x) \, \alpha_{Qy}(A^*) \Psi, \Phi \rangle \\
& = \langle \int  \!  dE(x) \ \alpha_{Qx}(A^*) \, \Psi,  \Phi \rangle \, . 
 \end{split}
\end{equation*}
The assertion now follows from the preceding step. 
\end{proof}

     We may therefore meaningfully declare the following definition
as in \cite{BuSu2}.

\begin{definition} Let $Q$ be a real skew symmetric matrix on
$\RR^n$ and let $A \in \Cs^\infty$. The corresponding warped convolution $A_Q$ 
of $A$ is defined on the domain ${\cal D}$ by means of the preceding results 
according to   
$$ A_Q \doteq \int \! dE(x) \, \alpha_{Qx}(A) = \int \! 
\alpha_{Qx}(A) \, dE(x) \, .$$
In particular, $1_Q = 1$.
\end{definition}

     We shall next show that the warped convolution provides a
representation of the algebra $(\Cs^\infty, \tq)$  defined by
$A \mapsto \pi_Q(A) \doteq A_Q$. The argument proceeds through a 
number of steps. It is apparent from the definition that the map
$\pi_Q$ is linear and, by the second part of the preceding lemma, we have
$\pi_Q(A)^* \supset \pi_Q(A^*)$. The proof that $\pi_Q$  is also 
multiplicative requires more work.

\begin{lemma} \label{rep}
Let $Q$ be a real skew symmetric matrix on $\RR^n$ and let $A, B \in \Cs^\infty$.
Then (understood as an equality on $\Ds$)
$$ A_Q  B_Q = (A \tq B)_Q \, , $$
where $\tq$ denotes the Rieffel product on $\Cs^\infty$. In other words, 
$$\pi_Q(A) \pi_Q(B) = \pi_Q(A \tq B) \, .$$ 
\end{lemma}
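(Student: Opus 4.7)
The plan is to compare both sides on $\Phi\in\Ds$ by expressing each as a single fourfold oscillatory integral via (\ref{convenient}), and showing that after a change of variables exploiting the skew-symmetry of $Q$ the two integrals coincide.

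Explicitly, writing $B_Q\Phi$ by Form~1 of (\ref{convenient}) and noting that every integrand vector $\alpha_{Qu}(B)U(v)\Phi$ lies in $\Ds$ by stability of the smooth domain under $\alpha$ and $U$, I apply $A_Q$ under the Bochner integral and expand it analogously, obtaining
\[
A_Q B_Q\Phi=\lim_{\varepsilon,\eta\to 0}(2\pi)^{-2n}\!\int\! dx\,dy\,du\,dv\,f(\varepsilon x,\varepsilon y)f(\eta u,\eta v)e^{-ixy-iuv}\alpha_{Qx}(A)U(y)\alpha_{Qu}(B)U(v)\Phi.
\]
Using the intertwining $U(y)\alpha_{Qu}(B)=\alpha_{y+Qu}(B)U(y)$ together with $U(y)U(v)=U(y+v)$, then substituting $t=y+Qu$ and $p=y+v$ (Jacobian one) and invoking the skew-symmetry identities $uQu=0$ and $xQu=-uQx$, the phase reorganises to $iu(t-p-Qx)-ixt$ and the integrand becomes $\alpha_{Qx}(A)\alpha_t(B)U(p)\Phi$. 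Expanding $(A\tq B)_Q\Phi$ by combining (\ref{convenient}) for $(\,\cdot\,)_Q$ with the Rieffel-product formula of Lemma~\ref{1.1} and performing the analogous shifts of the inner $\tq$-integration variables against the outer warped-convolution variables yields a fourfold integral with exactly the same phase and operator-valued integrand; renaming the dummy variables then completes the match and gives $A_Q B_Q\Phi=(A\tq B)_Q\Phi$, whence $\pi_Q(A)\pi_Q(B)=\pi_Q(A\tq B)$ on $\Ds$.

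The main obstacle is purely analytic: (a) interchanging $A_Q$ with the Bochner integral defining $B_Q\Phi$; (b) merging the iterated limits $\lim_\varepsilon\lim_\eta$ into a joint limit with Fubini on the fourfold integral; and (c) legitimating the change of variables under the oscillatory integrand. All three are controlled by the mollifier technique already exploited in Lemma~\ref{1.1}: the factor $f$ renders each integral at positive $\varepsilon,\eta$ absolutely convergent, integration by parts against the $L_n$-factors provides uniform bounds in the $\|\cdot\|_{n+1}$-norms (finite because $A,B\in\Cs^\infty$ ensure $\boldsymbol{A_{\alpha_Q}},\boldsymbol{B_{\alpha_Q}}\in\IC$), and the dominated convergence theorem then legitimates the interchanges and the passage to the joint limit.
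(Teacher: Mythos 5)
Your proposal is correct and follows essentially the same route as the paper's proof: express both $A_QB_Q\Phi$ and $(A\tq B)_Q\Phi$ as regularized fourfold oscillatory integrals via (\ref{convenient}), use skew-symmetry of $Q$ and a measure-preserving linear substitution to bring both to a common phase $e^{-ivw-ix(y+Qv-w)}$ and common integrand $\alpha_{Qv}(A)\alpha_w(B)U(y)\Phi$, and then argue that the residual discrepancy — the two integrals carry differently transported mollifier factors — washes out in the limit, exactly as the paper does via the $L_n$-based integration-by-parts argument on $\RR^{4n}$. One small caveat: your item (a) should be phrased as interchanging the $\varepsilon$-regularized Bochner integral for $A_Q$ with the $\eta$-regularized Bochner integral for $B_Q\Phi$ (Fubini for Bochner integrals), rather than interchanging the operator $A_Q$ itself with an integral, since at this stage in the paper $A_Q$ is not yet known to be bounded; and it is worth making explicit that the decisive step is not Fubini per se but the insensitivity of the joint limit to the choice of mollifying family, which is precisely what the $P(\boldsymbol{\partial})$–$L(\boldsymbol{u})^{-1}$ integration-by-parts device establishes.
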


\begin{proof} Let $f,g$ be test functions as in Lemma \ref{1.1}.
Recalling that $A \tq B \in \Cs^\infty$ can be approximated in norm according to 
$$ A \tq B =
\lim_{\delta \rightarrow 0} \ (2 \pi)^{-n} \! \iint \! dv dw \, 
f(\delta v, \delta w) \,  e^{-ivw} \,  
\alpha_{Qv}(A) \alpha_{w}(B)  \, , $$
one finds for $\Phi \in {\cal D}$
\begin{equation*} \scriptscriptstyle  
\begin{split} 
&  (2 \pi)^{2n}  (A \tq B)_Q \Phi \\
= & \lim_{\varepsilon, \delta \rightarrow 0} \, 
\iiiint \! dv dw dx dy \, 
 f(\delta v, \delta w) g(\varepsilon x, \varepsilon y) \,  e^{-ivw - ixy} \,  
\alpha_{Qv + Qx}(A) \alpha_{w + Qx}(B)  \, U(y) \, \Phi \, ,
\end{split}
\end{equation*}
in the sense of strong convergence. Similarly, one has 
\begin{equation*} 
\begin{split} 
& (2 \pi)^{2n}  A_Q B_Q \Phi \\
 =  &  \lim_{\varepsilon, \delta \rightarrow 0} \, 
\iiiint \! dv dw dx dy \, g(\varepsilon v, \varepsilon w) f(\delta x,
\delta y) \, e^{-ivw -ixy}  \alpha_{Qv}(A) \, U(w) \, 
\alpha_{Qx}(B) \, U(y) \Phi \, .
\end{split}
\end{equation*}
In both cases the limits are to be performed in the given order. In order to 
see that these limits coincide, one rewrites the two integrals. For the first 
one, by substituting $(v,w) \rightarrow (v - x, w - Qx) $ and bearing in 
mind that $Q$ is skew symmetric, one obtains
\begin{equation*} 
\begin{split} 
& \iiiint \! dv dw dx dy \, 
 f(\delta v, \delta w) g(\varepsilon x, \varepsilon y) \,  e^{-ivw - ixy} \,  
\alpha_{Qv + Qx}(A) \alpha_{w + Qx}(B)  \, U(y) \, \Phi \\
& = \iiiint \! dv dw dx dy \,  
h_{\delta, \varepsilon}(v,w,x,y) \, e^{-ivw - ix(y +Qv - w)} \, 
\alpha_{Qv}(A) \, \alpha_{w}(B) \, U(y) \, \Phi
\, , 
\end{split}
\end{equation*}
where $h_{\delta, \varepsilon}(v,w,x,y) \doteq 
f(\delta (v - x), \delta (w - Qx)) \, g(\varepsilon x, \varepsilon y) \, $. 
For the second integral, by substituting 
$(w,y) \rightarrow (w - Qx, y +Qx - w) $ and
making use again of the fact that $Q$ is skew symmetric, one finds 
\begin{equation*} 
\begin{split} 
& \iiiint \! dv dw dx dy \, g(\varepsilon v, \varepsilon w) f(\delta x,
\delta y) \, e^{-ivw -ixy}  \alpha_{Qv}(A) \, U(w) \, 
\alpha_{Qx}(B) \, U(y) \Phi \\
& = \iiiint \! dv dw dx dy \,  k_{\delta, \varepsilon}(v,w,x,y) \, 
e^{-ivw - ix(y +Qv -w)} \,
\alpha_{Qv}(A) \, \alpha_{w}(B) \, U(y) \, \Phi \, ,
\end{split}
\end{equation*}
where $k_{\delta, \varepsilon}(v,w,x,y) =  f(\delta x, \delta (y + Qx -w)) 
g(\varepsilon v, \varepsilon (w - Qx))$. Thus the two integrals
coincide apart from the mollifying test functions 
$h_{\delta, \varepsilon}$ and $k_{\delta, \varepsilon}$, respectively. 

     In order to show that the two integrals have the same limits, one proceeds
as in the proof of part (i) of Lemma \ref{1.1}. Again one finds by 
Fourier transformation that for any given polynomial $L$ on $\RR^{4n}$ there 
is a corresponding polynomial $P$ such that
$$  L(v,w,x,y) \, e^{-ivw - ix(y +Qv -w)}
= P(-\partial_v, -\partial_w, -\partial_x, -\partial_y)  \, e^{-ivw - ix(y
  +Qv -w)} \, . $$
A convenient choice for $L$ is given by
\begin{align}\label{polynomial-L}
 L(v,w,x,y) = \big( L_n(v)  L_n(w)  L_n(x)  L_n(y) \big)^{-1} \, ,
\end{align}
where $L_n$ are the mollifiers introduced before. With this choice one gets 
by partial integration, setting $\boldsymbol{u} \doteq (v,w,x,y) $,
$d\boldsymbol{u} = dvdwdxdy$ and  
$\boldsymbol{\partial} \doteq 
(\partial_v, \partial_w, \partial_x, \partial_y) $, 
\begin{equation*} 
\begin{split} 
& \iiiint \! d\boldsymbol{u} \,  h_{\delta, \varepsilon}(\boldsymbol{u}) \, 
e^{-ivw - ix(y +Qv -w)} \,
\alpha_{Qv}(A) \, \alpha_{w}(B) \, U(y) \, \Phi \\
= & \iiiint \! d\boldsymbol{u} \, e^{-ivw - ix(y +Qv -w)} \, 
P(\boldsymbol{\partial}) \, 
h_{\delta, \varepsilon}(\boldsymbol{u}) \, L(\boldsymbol{u})^{-1}
\alpha_{Qv}(A) \, \alpha_{w}(B) \, U(y) \, \Phi \, .
\end{split}
\end{equation*}
The derivatives in the second line are well--defined, since
$A,B \in \Cs^\infty$ and \mbox{$\Phi \in {\cal D}$}. Moreover, all derivatives
of $\boldsymbol{u} \mapsto  L(\boldsymbol{u})^{-1} $ are absolutely 
integrable, and the derivatives of 
$\boldsymbol{u} \mapsto h_{\delta, \varepsilon}(\boldsymbol{u})$ produce factors of 
$\delta$ and $\varepsilon$, respectively. Thus in the limit of small 
$\delta$ and $\varepsilon$ one can replace in the above integral the 
test function $h_{\delta, \varepsilon}$ by its value at the origin, \ie $1$. The 
same argument applies if one replaces
$h_{\delta, \varepsilon}$  by $k_{\delta, \varepsilon}$, proving
equality of the limits of the respective integrals.
\end{proof}

     At this point, the operators $\pi_Q(A) = A_Q$, $A \in \Cs^\infty$, are 
well--defined only on the dense, invariant domain $\Ds$, defining 
there a ${}^*$--algebra. We next show that they may be extended to bounded 
operators on $\Hs$, in contradiction to an assertion made in \cite{BuSu2}. 
In the proof we make use of the fact that the algebra $(\Cs^\infty,\tq)$ admits 
a $C^*$--norm $|| \cdot ||_Q$, \cf \ \cite[Ch.\ 4]{Ri}. It thus can be 
completed to a $C^*$--algebra, denoted by $(\Cs_Q,\tq)$, to which the group of
automorphisms $\alpha_x$, $x \in \RR^n$, extends in a strongly continuous manner
\cite[Prop.\ 5.11]{Ri}. 

     As is well known, every positive element of a $C^*$--algebra has a 
positive square root in the algebra. We need here the following more 
detailed information. 

\begin{lemma}  \label{positive}
Let $A \in \Cs^\infty$ be strictly positive in $(\Cs_Q, \tq)$, \ie
$A - \delta \, 1 = \! B^* \tq B$ for some $\delta > 0$
and $B \in  (\Cs_Q, \tq)$. Then its positive square root 
$\sqrt{A} \in  (\Cs_Q, \tq)$ is also an element of $\Cs^\infty$.
\end{lemma}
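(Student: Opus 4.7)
The strategy is to obtain $\sqrt{A}$ through holomorphic functional calculus applied inside the $C^*$-algebra $(\Cs_Q, \tq)$ and then to show that the outcome of the calculus is preserved by the smooth subalgebra $\Cs^\infty$. Since $A - \delta \cdot 1 = B^* \tq B$ for some $\delta > 0$, the spectrum of $A$ in $(\Cs_Q, \tq)$ is contained in $[\delta, \|A\|_Q]$. Hence the principal branch of $z \mapsto \sqrt{z}$ is holomorphic on a neighborhood of this spectrum, and one has
\[
\sqrt{A} \;=\; \frac{1}{2\pi i} \oint_{\Gamma} \sqrt{z}\, (z\cdot 1 - A)^{-1}\, dz,
\]
where $\Gamma \subset \CC \setminus (-\infty, 0]$ is a closed contour enclosing $[\delta, \|A\|_Q]$, and both the inverse and the integral are taken in $(\Cs_Q, \tq)$.

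The heart of the argument is a spectral-invariance statement: if $C \in \Cs^\infty$ is $\tq$-invertible in $(\Cs_Q, \tq)$, then its $\tq$-inverse $C^{-1}$ again lies in $\Cs^\infty$. Since $\alpha$ extends strongly continuously to $(\Cs_Q, \tq)$ as a group of $\tq$-automorphisms (cf.\ \cite[Prop.\ 5.11]{Ri}), one has $\alpha_x(C^{-1}) = \alpha_x(C)^{-1}$, and the resolvent identity
\[
\alpha_x(C)^{-1} - C^{-1} \;=\; C^{-1} \tq \big(C - \alpha_x(C)\big) \tq \alpha_x(C)^{-1}
\]
together with the smoothness of $x \mapsto \alpha_x(C)$ in the $\Cs$-norm allows one to show inductively that $x \mapsto \alpha_x(C^{-1})$ has derivatives of all orders, with the $k$-th derivative expressible as a finite $\tq$-polynomial in $C^{-1}$ and the $\alpha$-derivatives $\partial^\mu C$, $|\mu| \leq k$. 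Applied to $C = z\cdot 1 - A$ for each $z \in \Gamma$, this yields $(z\cdot 1 - A)^{-1} \in \Cs^\infty$; a further use of the resolvent identity shows that $z \mapsto (z\cdot 1 - A)^{-1}$ is continuous from $\Gamma$ into $\Cs^\infty$ with respect to every seminorm $\|\cdot\|_m$. Since $\Gamma$ is compact, the contour integral above then converges in the Fr\'echet topology of $\Cs^\infty$, and one concludes that $\sqrt{A} \in \Cs^\infty$.

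The main obstacle I expect is in making the spectral-invariance step quantitative: one must verify that the formal expressions for the derivatives $\partial^\mu \alpha_x(C^{-1})$, which are \textit{a priori} controlled only in the $\|\cdot\|_Q$-norm of the $C^*$-completion, actually produce bounded derivatives in the original $\Cs$-norm that defines membership in $\Cs^\infty$. This reduces to combining the norm estimate of Lemma \ref{1.1}(iii) with the comparability between $\|\cdot\|$ and $\|\cdot\|_Q$ on $\Cs^\infty$ furnished by Rieffel's construction of the $C^*$-completion; once this bookkeeping is carried out, the remainder of the argument is a routine application of functional calculus.
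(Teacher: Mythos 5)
Your approach is the same as the paper's at its core: bound the spectrum away from zero, then invoke holomorphic functional calculus and stability of the smooth subalgebra $\Cs^\infty$ under that calculus. The paper, however, discharges the stability step with a single citation to \cite[Corollary 7.6]{Ri}, which states precisely that the smooth subalgebra is closed under the holomorphic calculus of $(\Cs_Q, \tq)$; your proposal instead attempts to reprove this from scratch via the resolvent identity.

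Your reproof is on the right track in outline, but the ``main obstacle'' you flag at the end is indeed a genuine gap, not mere bookkeeping. The resolvent-identity argument delivers smoothness of $x \mapsto \alpha_x(C^{-1})$ only in the deformed norm $\|\cdot\|_Q$, whereas membership in $\Cs^\infty$ requires smoothness in the original norm $\|\cdot\|$. These two norms are \emph{not} comparable on all of $\Cs^\infty$ by any elementary estimate (Lemma \ref{1.1}(iii) bounds $\|\cdot\|_Q$ by a Sobolev-type norm in $\|\cdot\|$, but you also need an estimate in the reverse direction, which is equally nontrivial). The correct way to close the gap is to invoke Rieffel's identification of the smooth subalgebras, \cite[Theorem 7.1]{Ri} — already used in the proof of Proposition \ref{survey}(ii) — which says that the $\alpha$-smooth elements of $(\Cs_Q, \tq)$ coincide exactly with $\Cs^\infty$. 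With that theorem in hand, your $\|\cdot\|_Q$-smoothness of the resolvent already implies $(z\cdot 1 - A)^{-1} \in \Cs^\infty$, and the rest of your contour-integral argument goes through. In short: the approach is correct, but the hard part you correctly isolate is exactly the content of a lemma of Rieffel that the paper simply cites, and your attempt to reprove it stalls at the point where that lemma is needed.
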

\begin{proof}
The form of $A$ implies that its spectrum is contained in the interval 
$[\delta, \, \|A\|_Q]$. As the square root $z \mapsto \sqrt{z}$ is holomorphic 
in a complex neighborhood of this region and $\Cs^\infty$ is closed
under the holomorphic calculus
\cite[Corollary 7.6]{Ri}, it follows that $\sqrt{A} \in \Cs^\infty$.
\end{proof}

     With the help of this lemma we can show now that the operators
$\pi_Q(A) = A_Q$, $A \in \Cs^\infty$ are bounded. For the 
operators $(\delta + a)^2 1 - A^* \tq A$, $a \doteq \| A \|_Q$, are elements
of $\Cs^\infty$ and strictly positive in $(\Cs_Q, \tq)$
for every $\delta > 0$. Thus their positive square roots 
$B \doteq \sqrt{(\delta + a)^2 1 - A^* \tq A} \in (\Cs_Q, \tq)$
are elements of $\Cs^\infty$ according to the preceding lemma. 
Bearing in mind the properties of $\pi_Q$ established thus far, we
therefore have for any $\Phi \in {\cal D}$
\begin{equation*}
\begin{split}
& (\delta + a)^2 \| \Phi \|^2 - \| \pi_Q (A) \Phi \|^2 \\
& = \langle \Phi, \pi_Q \big((\delta + a)^2 1 - A^* \tq A \big) 
\Phi \rangle 
= \langle \Phi, \pi_Q (B^* \tq B) \Phi \rangle = \| \pi_Q(B) \Phi \|^2 \geq 0, 
\end{split}
\end{equation*}
where we made use of the fact that $B^* = B$ since $B$ is positive.
Hence we obtain 
$\| \pi_Q(A) \Phi \| \leq (\| A \|_Q + \delta) \, \| \Phi \|$, $\Phi \in \Ds$. 
Since $\Ds$ is dense in $\Hs$ and $\delta > 0$ was arbitrary, we conclude that 
the operators $\pi_Q(A)$ can be extended to the whole Hilbert space
with operator norms satisfying the bound 
\begin{equation} \label{bound}
\| \pi_Q (A) \| \leq \| A \|_Q \, , \quad A \in \Cs^\infty \, . 
\end{equation}
Moreover, it follows from this estimate and the preceding results that the 
representation $\pi_Q : \Cs^\infty \rightarrow \Bs(\Hs)$ can be continuously 
extended to a representation of the $C^*$--algebra  $(\Cs_Q, \tq)$ on $\Hs$.
We summarize these findings.

\begin{theorem} \label{warpedrep}
The map
\begin{align*}
 \pi_Q(A) \doteq A_Q \, , \qquad A \in \Cs^\infty\,,
\end{align*}
extends to a representation of the Rieffel--deformed $C^*$--algebra 
$(\Cs_Q, \tq)$ on $\Hs$. In particular, one has the bound
$$ \| \pi_Q(A) \| \leq \| A \|_Q \, ,  \quad A \in  (\Cs_Q, \tq) \, . $$
\end{theorem}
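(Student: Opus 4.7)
The plan is to consolidate the $*$-algebra structure of $\pi_Q$ on the dense domain $\Ds$, derive an operator-norm bound via a positive-square-root argument, and then extend by continuity to the Rieffel $C^*$-algebra. From the preceding lemmas I already have linearity of $\pi_Q$ (immediate from the definition of the warped convolution integral), multiplicativity $\pi_Q(A)\pi_Q(B) = \pi_Q(A \tq B)$ on $\Ds$ (Lemma~\ref{rep}), and the adjoint relation $\pi_Q(A^*) \subset \pi_Q(A)^*$ (Lemma~\ref{symmetric}(ii)). What remains is the norm bound $\|\pi_Q(A)\Phi\| \leq \|A\|_Q \|\Phi\|$ for $\Phi \in \Ds$; once established, standard density gives the extension to a representation of $(\Cs_Q, \tq)$.

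For the bound, I would invoke the classical $C^*$-trick, with Lemma~\ref{positive} serving as the crucial enabler that keeps us inside $\Cs^\infty$. Setting $a \doteq \|A\|_Q$, for each $\delta>0$ the element $(\delta + a)^2 \cdot 1 - A^* \tq A$ is strictly positive in $(\Cs_Q, \tq)$, so by Lemma~\ref{positive} its positive square root $B$ lies again in $\Cs^\infty$. Applying $\pi_Q$ and evaluating in the inner product on $\Phi \in \Ds$ gives
\begin{align*}
(\delta + a)^2 \|\Phi\|^2 - \|\pi_Q(A)\Phi\|^2
&= \langle \Phi, \pi_Q\bigl((\delta + a)^2\cdot 1 - A^* \tq A\bigr)\Phi \rangle \\
&= \langle \Phi, \pi_Q(B^* \tq B)\Phi \rangle
= \|\pi_Q(B)\Phi\|^2 \geq 0,
\end{align*}
using multiplicativity and self-adjointness of $B = B^*$. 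Letting $\delta \to 0$ yields the desired bound.

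Once the bound is in place, each $\pi_Q(A)$ extends by continuity from $\Ds$ to a bounded operator on $\Hs$, and the algebraic identities carry over. Because $\Cs^\infty$ is norm-dense in $(\Cs_Q, \tq)$ and $\pi_Q$ is norm-contractive on it, the representation extends uniquely by continuity to a representation of the full $C^*$-algebra with the bound preserved. The main delicate point is the second equality in the displayed computation: I only have $\pi_Q(B^*) \subset \pi_Q(B)^*$ as unbounded operators rather than equality at this stage, but this inclusion suffices for the quadratic form identity, since $\pi_Q(B) = \pi_Q(B^*)$ leaves $\Ds$ invariant and may therefore be moved across the inner product without exiting the domain on which everything is already known to agree.
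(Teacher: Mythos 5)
Your proof is correct and follows essentially the same route as the paper: invoke Lemma~\ref{positive} to obtain $B = \sqrt{(\delta+a)^2\,1 - A^*\tq A} \in \Cs^\infty$, apply multiplicativity and the adjoint inclusion on $\Ds$ to get $(\delta+a)^2\|\Phi\|^2 - \|\pi_Q(A)\Phi\|^2 = \|\pi_Q(B)\Phi\|^2 \geq 0$, let $\delta\to 0$, and extend by density. Your closing remark about needing only $\pi_Q(B^*)\subset\pi_Q(B)^*$ (combined with self-adjointness of $B$ and invariance of $\Ds$) to justify the middle equality is a genuine subtlety that the paper passes over silently, and your handling of it is sound.
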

\noindent \textbf{Example:} Of particular interest in physics are the cases 
where the spectrum of $U$ contains an atomic part. Without loss of generality 
one may then assume that $\{ 0 \}$ is part of the atomic spectrum with 
corresponding invariant vector~$\Omega$.\footnote{Note that proceeding from 
the group $U(x)$ to the group $U_q(x) = e^{iqx} U(x)$, $x \in \RR^n$,
merely amounts to a translation $A \rightarrow \alpha_{Qq}(A)$ of the operators 
$A$ in the original warped convolution.} 
Since the algebra $\Cs^\infty$ is weakly dense in $\BH$, it is clear 
that $\Omega$ is cyclic for  $\Cs^\infty$; moreover, because of the invariance
of $\Omega$ under the action of $U$, one also has  
$\Cs^\infty \Omega \subset {\cal D}$. Within this setting the
relation between the warped convolutions and the Rieffel deformations
can be exhibited quite easily. For, as a 
consequence of the invariance of $\Omega$, one obtains for $A,B \in \Cs^\infty$, 
\begin{equation*}
\begin{split}
A_Q \, B \Omega  &  =  
\lim_{\varepsilon \rightarrow 0} \, (2\pi)^{-n} 
\iint \! dx dy \, e^{-ixy} \, f(\varepsilon x, \varepsilon y) \, 
\alpha_{Qx}(A) \, U(y) \,  B \Omega \\
& = \lim_{\varepsilon \rightarrow 0} \, (2\pi)^{-n}
\iint \! dx dy \, e^{-ixy} \, f(\varepsilon x, \varepsilon y) \, 
\alpha_{Qx}(A) \, \alpha_y(B) \, \Omega \\
& = (A \tq B) \, \Omega \, .
\end{split}
\end{equation*}
In particular, 
\begin{equation}  \label{vacuum}
A_Q \Omega = A \Omega \, , \quad A \in \Cs^\infty \, .
\end{equation}
Making use of the associativity of the product $\tq$ on $\Cs^\infty$,
it is therefore clear that for $A,B,C \in \Cs^\infty$,
$$ A_Q B_Q \, C \Omega =  A_Q \, (B \tq  C) \Omega 
= (A \tq B \tq C) \Omega = (A \tq B)_Q \, C
\Omega \, . 
$$

     We return now to the discussion of the general case and exhibit
further interesting properties of the representations $\pi_Q$ introduced above.

\begin{prop} \label{survey}
Let $\pi_Q$ be the representation of the $C^*$--algebra $(\Cs_Q, \tq)$
established by the preceding theorem. \\[1mm]
\noindent (i) \ $\pi_Q$ is $\alpha$--covariant, \ie for any 
$A \in (\Cs_Q, \tq)$ 
$$ \pi_Q(\alpha_x(A)) = U(x) \pi_Q(A)  U(x)^{-1} \, , \quad x \in
\RR^n  \, .$$
\noindent (ii)  \ $\! \pi_Q$ induces a bijective map of \ $\Cs^\infty$ 
 onto itself. \\[1mm]
\noindent (iii) $\pi_Q$ is faithful, \ie 
$\| \pi_Q(A) \| = \| A \|_Q$, \ $A \in (\Cs_Q, \tq)$. \\[1mm]
\noindent (iv) $\pi_Q$ is irreducible. 
\end{prop}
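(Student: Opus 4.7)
The plan is to treat the four parts in sequence, using the integral representations (\ref{eq2.1}) and (\ref{convenient}) together with Rieffel's structural results on $(\Cs_Q, \tq)$. Part (i) should be a direct computation: conjugate either of the defining integrals for $A_Q$ by $U(x)$ and exploit that $U(x)$ commutes with the spectral measure $E$ (both are functions on the abelian group $\RR^n$). The integrand becomes $U(x) \alpha_{Qy}(A) U(x)^{-1} = \alpha_{x + Qy}(A) = \alpha_{Qy}(\alpha_x(A))$, so the full expression rearranges to $\pi_Q(\alpha_x(A))$; the interchange of $U(x)$ with the strong limit in (\ref{convenient}) is automatic since $U(x)$ is bounded, and the identity then extends by norm-continuity from $\Cs^\infty$ to all of $(\Cs_Q, \tq)$.

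For (ii), the inclusion $\pi_Q(\Cs^\infty) \subseteq \Cs^\infty$ is immediate from (i): norm-smoothness of $x \mapsto \alpha_x(A_Q) = (\alpha_x A)_Q$ is inherited from $x \mapsto \alpha_x(A)$ via the continuity estimate of Lemma \ref{1.1}(iii). For surjectivity and injectivity I would invoke Rieffel's observation that a second deformation of $(\Cs^\infty, \tq)$ by $-Q$ with respect to the same action $\alpha$ returns the original product on $\Cs^\infty$. Translated into warped convolutions via (\ref{convenient}), this yields $\pi_{-Q} \circ \pi_Q = \mathrm{id}_{\Cs^\infty}$ after collapsing the nested integrals by the mollifier-plus-partial-integration technique of Lemma \ref{rep}, from which bijectivity follows.

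For (iii), the bound $\|\pi_Q(A)\| \leq \|A\|_Q$ is supplied by (\ref{bound}), and the reverse inequality is the crux. I would derive it by identifying $\pi_Q$ with one of the faithful covariant representations entering Rieffel's construction of $\|\cdot\|_Q$ (see \cite[Ch.\ 4]{Ri}). Since the identity representation of $\Cs$ on $\Hs$ is already faithful and covariant, Rieffel's recipe produces from it a covariant representation of $(\Cs_Q, \tq)$ on $\Hs$ whose norm is exactly $\|\cdot\|_Q$; the defining formula (\ref{convenient}) then lets one match this representation with $\pi_Q$. Making this identification explicit is the principal technical obstacle and is the step most likely to require care.

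Part (iv) follows almost immediately once (ii) is in hand: since $\pi_Q$ is a bijection of $\Cs^\infty$ onto itself, one has $\pi_Q(\Cs^\infty) = \Cs^\infty$ as subsets of $\Bs(\Hs)$, so $\pi_Q(\Cs_Q)' \subseteq \pi_Q(\Cs^\infty)' = (\Cs^\infty)'$. Now $\Cs$ is weakly dense in $\Bs(\Hs)$: any $B \in \Bs(\Hs)$ is the weak limit of its $\alpha$-smoothings $\int h(x)\,\alpha_x(B)\,dx$ with $h \in L^1(\RR^n)$, each of which is strongly $\alpha$-continuous and therefore lies in $\Cs$. Combined with the norm-density of $\Cs^\infty$ in $\Cs$, this gives $(\Cs^\infty)' = \Bs(\Hs)' = \CC\cdot 1$, which is irreducibility.
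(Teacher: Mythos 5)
Parts (i), (ii) and (iv) follow the paper's approach closely: conjugation of the defining integrals for (i), the computation $(A_Q)_{-Q} = A$ via the mollifier-plus-partial-integration scheme for (ii), and weak density of $\Cs^\infty$ plus surjectivity for (iv). These are fine in outline.

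Part (iii) is where your plan has a genuine gap. You propose to "identify $\pi_Q$ with one of the faithful covariant representations entering Rieffel's construction of $\|\cdot\|_Q$," but Rieffel's $C^*$--norm is defined via operator-valued inner products on a Hilbert module, not by exhibiting a concrete covariant Hilbert-space representation that you could then match against $\pi_Q$. There is no ready-made "recipe" producing such a representation from the identity representation of $\Cs$, and you yourself flag this identification as the principal obstacle — which is a sign the route is likely to get bogged down. The paper's argument sidesteps all of this. It uses part (i) to note that $\ker \pi_Q$ is $\alpha$-invariant; the strong continuity of $\alpha$ on $(\Cs_Q, \tq)$ then makes $\ker \pi_Q \cap \Cs^\infty$ dense in $\ker \pi_Q$ (by smoothing with compactly supported approximate identities). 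By part (ii), $\pi_Q \upharpoonright \Cs^\infty$ is injective, so this dense subspace is $\{0\}$, hence $\ker \pi_Q = \{0\}$. Isometry is then automatic, since an injective $*$--homomorphism between $C^*$--algebras is isometric (the paper phrases this as uniqueness of the $C^*$--norm). This is the key idea your proposal is missing: faithfulness does not need to be proven by matching $\pi_Q$ against an independently constructed faithful representation — it drops out of covariance and the injectivity on smooth elements you already established, combined with a standard abstract fact about $C^*$--morphisms. You should replace the Rieffel-representation-matching plan with this density-of-smooth-elements argument.

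One smaller point on (ii): be careful about appealing to "a second deformation by $-Q$ returns the original product." Rieffel's result is about deforming the \emph{product} on the abstract algebra, while what is actually needed here is the statement $(A_Q)_{-Q} = A$ about warped \emph{elements}. These are related but not interchangeable — indeed the paper explicitly remarks after Proposition \ref{group} that no composition law for the representations $\pi_Q$ follows, precisely because domains and ranges do not match up. The correct route, which you do gesture at, is a direct integral computation (as in the proof of Proposition \ref{survey}(ii) or, equivalently, the $Q_1 = Q$, $Q_2 = -Q$ case of Proposition \ref{group}), not a translation of Rieffel's product-level statement.
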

\begin{proof}
(i) Let $A \in \Cs^\infty$. Since the domain $\Ds$ is stable under the action 
of the unitaries $U(x)$, relation (\ref{covariance}) and Lemma \ref{1.1}
imply $U(x) A_Q U(x)^{-1} = (\alpha_x(A))_Q$, proving the assertion for 
$A \in \Cs^\infty$. The continuity properties of $\pi_Q$ and the automorphic 
action of $\alpha_x$ on $(\Cs_Q, \tq)$ then yield assertion (i). \\[1mm]
(ii) According to \cite[Thm.\ 7.1]{Ri}, 
the smooth elements of $(\Cs_Q, \tq)$ are exactly the elements of 
$\Cs^\infty$. It therefore follows from the  
continuity of the map 
$\pi_Q$ that the functions $x \mapsto \pi_Q(\alpha_x(A))$, 
$A \in \Cs^\infty$, are smooth; hence $\pi_Q(A) = A_Q \in \Cs^\infty$ for 
$A \in \Cs^\infty$. The proof that $\pi_Q \upharpoonright\Cs^\infty $ is 
bijective requires a computation: In view of the preceding observation, 
one may apply the warping procedure with underlying matrix $-Q$ to the 
operator $A_Q$, giving $(A_Q)_{-Q}$. Now according to relation (\ref{convenient})
one has on the domain $\Ds$
\begin{equation*}
\begin{split}
& (2 \pi)^{2n} \, (A_Q)_{-Q} \\
& =  \lim_{\varepsilon, \delta \rightarrow 0} \,
\iiiint \! dv dw dx dy \, f(\varepsilon v, \varepsilon w) 
f(\delta x, \delta y) \, e^{-ivw -ixy} \, \alpha_{Qx - Qv}(A) U(y)
U(w) \, ,
\end{split}
\end{equation*}
where the limits are to be performed in the given order. 
Substituting $(v,w) \rightarrow (x-v,w-y)$, the integral can 
be transformed into 
$$ 
\iiiint \! dv dw dx dy \, f(\varepsilon (x-v), \varepsilon (w-y)) 
f(\delta x, \delta y) \,   e^{ivw -ixw - iyv} \,
\alpha_{Qv}(A) U(w) \, .
$$
As the $x,y$--integration in the latter integral involves only ordinary 
functions, it is straightforward to compute its limit for 
$\delta \rightarrow 0$, giving 
$$
 (2 \pi)^{n} \, \iint \! dv dw \, 
(1/\varepsilon)^{2n} \, \widehat{f}(w/\varepsilon, - v/\varepsilon) \, 
e^{-ivw} \, \alpha_{Qv}(A) U(w) \, ,
$$
where $\widehat{f}$ denotes the Fourier transform of $f$. It is also
apparent that the latter expression converges to 
$(2 \pi)^{2n} \, A$ as $\varepsilon \rightarrow 0$. Hence
$(A_Q)_{-Q} = A$ for $A \in \Cs^\infty$. Now if $\pi_Q(A) = A_Q = 0$,
it follows that $A = (A_Q)_{-Q} = 0$, so $\pi_Q \upharpoonright
\Cs^\infty$ is injective; similarly, interchanging the role of $Q$ and $-Q$,  
one has $\pi_Q(A_{-Q}) = (A_{-Q})_Q = A$, so 
$\pi_Q \upharpoonright \Cs^\infty$ is also surjective. \\[1mm]
(iii) Since $\pi_Q$ is $\alpha$--covariant, its kernel 
$\mbox{ker} \, \pi_Q$ is $\alpha$--invariant. Hence,  
in view of the strongly continuous action of $\alpha$
on $(\Cs^\infty, \tq)$, the space $\mbox{ker} \, \pi_Q \bigcap \Cs^\infty$
is dense in $\mbox{ker} \, \pi_Q$. But this space coincides with $\{0\}$,
since $\pi_Q \upharpoonright \Cs^\infty$ is injective according to the 
preceding result. Consequently, $\| \pi_Q (\, \cdot \,) \|$ defines a 
$C^*$--norm on $(\Cs_Q, \tq)$, which must coincide with 
$\| \cdot \|_Q$ because of the uniqueness of such norms. \\ 
(iv) The final assertion follows from the fact that 
$\pi_Q \upharpoonright \Cs^\infty$ is surjective. So its range 
contains $\Cs^\infty$, which is weakly dense in $\BH$. 
\end{proof}

    Let us turn now to the case of an abstractly given $C^*$--dynamical 
system $(\As,\RR^n)$ equipped with some strongly continuous representation
$\alpha : \RR^n \rightarrow \mbox{Aut} \, \As$. Denoting by $\As^\infty$ 
the smooth elements of $\As$, one obtains by arguments given by 
Rieffel \cite{Ri} and sketched at the end of Section 2.1 
a deformed ${}^*$--algebra $(\As^\infty, \times^Q)$
with \mbox{$C^*$--norm} $\| \, \cdot \, \|^Q$ for any given skew 
symmetric matrix $Q$.
Its $C^*$--completion will be denoted $(\As^Q, \tQ)$. Here we have
used $Q$ as an upper index in order to distinguish the abstract
setting from the concrete one used thus far. 

     Let $(\pi, \Hs)$ be an $\alpha$--covariant 
representation of $\As$ on a Hilbert space $\Hs$, \ie on $\Hs$ there exists 
a weakly continuous unitary representation $U$ of $\RR^n$ such that
\begin{align}  \label{pi-covariance}
 U(x)\pi(A)U(x)^{-1} = \pi(\alpha_x(A)) \, , \qquad A \in \As \, .
\end{align}
Consequently  $\pi(\As^\infty) \subset \Cs^\infty$, so one can define 
for any $A,B \in \As^\infty$ the operators $\pi(A)_{\, Q} \in \Cs^\infty$
and the product $\pi(A) \tq \pi(B)$; moreover,
$\pi(A) \tq \pi(B) = \pi(A \tQ B)$.

     After having established the properties of the warping procedure on
$\Cs^\infty$, it is almost evident that the covariant 
representation $(\pi, \Hs)$ of $\As$ induces a covariant representation 
$(\pi^Q , \Hs)$ of $(\As^Q, \tQ)$. It is fixed by setting 
\begin{equation} \label{abstract}
\pi^{\, Q}(A) \doteq \pi(A)_Q \, , \quad A \in \As^\infty \, .
\end{equation}
By Theorem \ref{warpedrep}, the operators $\pi^Q(A)$
are bounded. Moreover, it follows from Lemma \ref{symmetric} that
$$ \pi^Q(A)^* = (\pi(A)_Q)^* = (\pi(A)^*)_Q = \pi(A^*)_Q = \pi^Q(A^*) \, . $$
Similarly, Lemma \ref{rep} implies
$$  \pi^Q(A) \pi^Q(B) = \pi(A)_Q \pi(B)_Q = 
(\pi(A) \tq \pi(B))_Q = \pi(A \tQ B)_Q = \pi^Q(A \tQ B) \, .$$
Finally, one may employ the analogue of Lemma \ref{positive} in the 
abstract setting and the reasoning thereafter to obtain 
$\| \pi^Q (A) \| \leq \| A \|^Q$, $A \in \As^\infty$.  
Hence the homomorphism $\pi^Q : \As^\infty \rightarrow \BH$ can be extended
by continuity to a representation of $(\As^Q, \tQ)$, as claimed. 
>From the first part of Proposition \ref{survey} it follows that
\begin{align*}
& U(x) \pi^Q(A) U(x)^{-1} \\
& = U(x) \pi(A)_Q U(x)^{-1}
= (U(x) \pi(A) U(x)^{-1})_Q = \pi(\alpha_x(A))_Q = \pi^Q(\alpha_x(A))
\, ,
\end{align*}
for all $A \in \As^\infty$. So the representation $\pi^Q$ is also 
covariant, hence $\pi^Q (\As^\infty) \subset \Cs^\infty$.
Depending on the properties of the chosen representation
$(\pi, \Hs)$ of $\As$,  the map $\pi^Q: \As^\infty \rightarrow \Cs^\infty$
may not be injective or surjective. But according to part (ii) of 
the preceding proposition one has $\pi(A)_Q = 0$ if and only if $\pi(A) = 0$, 
$A \in \As^\infty$. Furthermore, in view of the continuity of the 
action $\alpha$ on $\As$ and $\As^Q$, the inclusions 
$\ker\pi \bigcap \As^\infty\subset\ker\pi$ and 
$\ker\pi^Q \bigcap 
\As^\infty\subset\ker\pi^Q$ are dense in the norms $\|\cdot\|$ 
and $\|\cdot\|^Q$, respectively. Thus it follows that $\pi^Q$ is faithful if 
and only if $\pi$ is faithful.
\begin{theorem} \label{faithful} 
Let $(\pi, \Hs)$ be an $\alpha$--covariant representation of the 
$C^*$--algebra $\As$. The homomorphism $\pi^Q : \As^\infty \rightarrow \BH$,
fixed by the relation 
$$
\pi^{\, Q}(A) \doteq \pi(A)_{\, Q} \, , \quad A \in \As^\infty \, ,
$$
extends continuously to an $\alpha$--covariant representation of the 
$C^*$--algebra $(\As^Q, \tQ)$. Moreover, $\pi^Q$ is faithful if and only
if $\pi$ is faithful.
\end{theorem}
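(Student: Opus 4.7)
The plan is to transfer the properties of the warping representation $\pi_Q$ from the concrete setting of Theorem \ref{warpedrep} and Proposition \ref{survey} to the abstract setting, exploiting the fact that an $\alpha$-covariant representation $\pi$ sends $\As^\infty$ into $\Cs^\infty$ and intertwines the abstract Rieffel product $\tQ$ with the concrete one $\tq$.

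First, I would check that $\pi^Q$ is a ${}^*$-homomorphism on $\As^\infty$. Since $\pi(A)\in\Cs^\infty$ whenever $A\in\As^\infty$, the formula $\pi^Q(A)=\pi(A)_Q$ is well defined as a bounded operator by Theorem \ref{warpedrep}. The ${}^*$-compatibility follows from Lemma \ref{symmetric}(ii) together with the boundedness of warped operators, which promotes the inclusion $\supset$ to an equality. Multiplicativity follows from Lemma \ref{rep}: one writes $\pi^Q(A)\pi^Q(B)=\pi(A)_Q\pi(B)_Q=(\pi(A)\tq\pi(B))_Q=\pi(A\tQ B)_Q=\pi^Q(A\tQ B)$, where the penultimate equality is nothing but the fact that $\pi$ intertwines the defining integral formulas for $\tQ$ and $\tq$, because $\pi\circ\alpha_x=\mathrm{Ad}\,U(x)\circ\pi$ by (\ref{pi-covariance}).

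Next I would establish the norm bound $\|\pi^Q(A)\|\le\|A\|^Q$ for $A\in\As^\infty$ by reproducing in the abstract algebra the argument that in the excerpt immediately follows Lemma \ref{positive}. The only ingredient needed beyond the ${}^*$-homomorphism property is the abstract analogue of Lemma \ref{positive}: every strictly positive element of $(\As^Q,\tQ)$ which lies in $\As^\infty$ has a positive square root still in $\As^\infty$. This follows verbatim from the original proof, since Rieffel's stability of $\As^\infty$ under the holomorphic functional calculus in $(\As^Q,\tQ)$ is already an abstract statement. Once the bound is in hand, $\pi^Q$ extends by continuity from the $\|\cdot\|^Q$-dense subalgebra $\As^\infty$ to a representation of $(\As^Q,\tQ)$ on $\Hs$. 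The $\alpha$-covariance on $\As^\infty$ is inherited at once from Proposition \ref{survey}(i) applied to $\pi(A)\in\Cs^\infty$, and extends to all of $(\As^Q,\tQ)$ by norm continuity of $\pi^Q$ together with the strong continuity of $\alpha$ on $(\As^Q,\tQ)$.

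For the faithfulness assertion I would invoke Proposition \ref{survey}(ii), which implies that on $\Cs^\infty$ one has $B_Q=0$ iff $B=0$; applied to $B=\pi(A)$ this yields $\ker\pi^Q\cap\As^\infty=\ker\pi\cap\As^\infty$. Strong continuity of $\alpha$ on $\As$ and on $(\As^Q,\tQ)$ lets one approximate any element of either kernel by smoothed versions (convolutions with Schwartz functions on $\RR^n$ whose integrals are $1$) that lie in $\As^\infty$, so the smooth intersection is dense in each kernel in the corresponding norm. Hence $\ker\pi^Q=\{0\}$ iff $\ker\pi=\{0\}$, which for representations of $C^*$-algebras is precisely faithfulness. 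The main obstacle is the abstract analogue of Lemma \ref{positive}; once the stability of $\As^\infty$ under the holomorphic calculus in $(\As^Q,\tQ)$ is invoked, the remainder is essentially a transcription of the concrete results through the homomorphism $\pi$.
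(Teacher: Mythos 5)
Your proposal is correct and follows essentially the same route as the paper: it transfers the concrete results (boundedness via Theorem \ref{warpedrep}, ${}^*$-compatibility via Lemma \ref{symmetric}, multiplicativity via Lemma \ref{rep}, the norm bound via the abstract analogue of Lemma \ref{positive}, covariance via Proposition \ref{survey}(i), and faithfulness via Proposition \ref{survey}(ii) together with the density of the smooth kernels) through the covariant homomorphism $\pi$. The paper's argument is a line-by-line match for this outline, including the observation that $\pi(A)\tq\pi(B)=\pi(A\tQ B)$ follows from $\alpha$-covariance.
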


     So the warping method provides a representation of the deformed algebras
in the same Hilbert space as the undeformed algebra, enabling the direct
comparison of deformed operators corresponding to different $Q$. This point 
will prove to be useful in the physical context treated below.

\subsection{Further Properties of Warped Convolutions}

     Even though the warped convolutions may be viewed as merely generating 
certain specific representations of Rieffel algebras, it will be 
advantageous to base the subsequent discussion directly on them 
without referring to the Rieffel setting. The reasons for this are threefold: 
(a) It will be necessary to deal with subalgebras of the algebra of 
smooth operators which are not invariant under the automorphic action of the 
translations. So there is no corresponding Rieffel algebra, but the 
warping procedure is still meaningful. (b) It will be necessary
to consider warped operators $A_{Q}, A'_{Q'}$ and their sums and products 
for different matrices $Q, Q'$. Such operations can be carried out 
in the framework of warped convolutions more easily than in the Rieffel setting,
where one has to use Hilbert modules instead of Hilbert spaces.
(c) We shall need to establish algebraic properties of the warped 
operators arising from spectral properties of the unitary representation $U$, 
which  are not available in the Rieffel setting.

     Returning to the Hilbert space framework, we first 
exhibit some general covariance properties of the warped 
convolutions, \cf \cite{BuSu2}. To this end we consider 
(anti)unitary operators $V$ whose adjoint actions on the translations $U$ 
induce linear transformations of $\RR^n$. It follows at once that for any 
such $V$ the algebra $\Cs^\infty$ is stable under the corresponding adjoint 
action, $V \Cs^\infty V^{-1} = \Cs^\infty$, and $V {\cal D} = {\cal D}$. The 
following result is the first instance where we must deal with
warped convolutions for different choices of the underlying matrix $Q$. 
\begin{prop} \label{2.5}
Let  \ $V$ be a unitary or antiunitary  
operator on $\cH$ such that $V U(x) V^{-1} = U(Mx)$,
$x \in \RR^n$, for some invertible matrix $M$. Then, for $A \in \Cs^\infty$,
$$ V  A_Q  V^{-1} = (VAV^{-1})_{\, \sigma MQM^T} \, , $$
where $M^T$ is the transpose of $M$ with respect to the chosen 
bilinear form, $\sigma = 1$ if $V$ is unitary and $\sigma = -1$ if $V$ is 
antiunitary. 
\end{prop}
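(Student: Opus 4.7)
The plan is to start from the strong-limit integral representation (\ref{convenient}) on the dense stable domain $\Ds$, push the (anti)unitary $V$ through the integral, rewrite the integrand using the intertwining relation $VU(x)V^{-1}=U(Mx)$, and then recover a warped-convolution integral for $VAV^{-1}$ with a new matrix by an affine change of integration variables. A key enabling fact is that the limit in Lemma \ref{1.1}(i) is independent of the chosen Schwartz mollifier $f$ as long as $f(0,0)=1$; this freedom absorbs the various Jacobians and argument rescalings produced by the change of variables.

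First I would handle the unitary case. For $\Phi\in\Ds$, write $A_Q\Phi$ as the double oscillatory Bochner integral in (\ref{convenient}); since $V$ is bounded and $V^{-1}\Ds=\Ds$, one may exchange $V$ with the limit and pass it inside the integral. A short algebraic computation gives $V\alpha_{Qx}(A)U(y)V^{-1}=\alpha_{MQx}(VAV^{-1})U(My)$. The substitution $y\mapsto M^{-1}y$, $x\mapsto M^Tx$ (whose Jacobians cancel because $|\det M^T|\cdot|\det M|^{-1}=1$) turns the phase $e^{-ixy}$ into itself using $(M^{-T}x)\cdot y=x\cdot(M^{-1}y)$, and turns the translation index into $MQM^T x$. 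The rescaled test function $h(u,v)\doteq f(M^Tu,M^{-1}v)$ still satisfies $h(0,0)=1$, so Lemma \ref{1.1}(i) identifies the limit as $(VAV^{-1})_{MQM^T}\Phi$, matching the claim with $\sigma=1$.

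For the antiunitary case the same scheme applies, but pulling $V$ past the scalar factor $f(\varepsilon x,\varepsilon y)\,e^{-ixy}$ now produces its complex conjugate, so the phase becomes $e^{+ixy}$. After the same $(x,y)\mapsto(M^Tx,M^{-1}y)$ substitution and an additional $x\mapsto -x$ (used to restore the sign in the phase to $e^{-ixy}$), the translation index becomes $-MQM^T x$, yielding $\sigma=-1$. The mollifier produced this way is $h(u,v)=\overline{f(-M^Tu,M^{-1}v)}$, which lies in $\cS(\RR^n\times\RR^n)$ and satisfies $h(0,0)=1$, so again Lemma \ref{1.1}(i) identifies the limit as $(VAV^{-1})_{-MQM^T}\Phi$.

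The only real subtleties to track are bookkeeping items: (a) that $VAV^{-1}\in\Cs^\infty$ and $V\Ds=\Ds$, both already recorded in the paragraph preceding the proposition; (b) that for an antiunitary $V$ the scalar factors in the Bochner integrand get conjugated while the real Lebesgue weights do not, which is what forces the phase flip and produces the sign $\sigma$; and (c) that $M^T$ is defined with respect to the chosen bilinear form, so the identity $x\cdot(M^{-1}y)=(M^{-T}x)\cdot y$ is available exactly as needed. None of these is deep; the main point requiring care is simply to keep the antilinearity of $V$ separated from the linear change of variables, so that the interaction of conjugation and the substitution $x\mapsto -x$ is correctly responsible for the factor $\sigma=-1$.
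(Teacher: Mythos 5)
Your proposal is correct and follows essentially the same route as the paper: push $V$ through the strong-limit integral of (\ref{convenient}), use the intertwining relation to rewrite the integrand in terms of $VAV^{-1}$, and absorb the resulting Jacobian-free change of variables $(x,y)\mapsto(\sigma M^Tx,M^{-1}y)$ into a new mollifier satisfying $h(0,0)=1$. The only cosmetic difference is that the paper stipulates a real $f$ from the outset so that the antiunitary case produces a single phase flip $e^{-ixy}\to e^{-i\sigma xy}$ and one combined substitution, whereas you keep a general $f$, track its conjugation explicitly, and split the substitution into two steps; both are valid.
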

\begin{proof} Making use of relation (\ref{convenient}) for real $f$, 
one commences from the equalities of strong integrals
\begin{equation*}
\begin{split}
& V \! \iint \! dx dy \, e^{-ixy} \, f(\varepsilon x, \varepsilon y) \, 
\alpha_{Qx}(A) \, U(y) \, V^{-1} \\
& = \iint \! dx dy \, e^{-i\sigma xy} \, f(\varepsilon x, \varepsilon y) \, 
\alpha_ {MQx}(VAV^{-1}) \, U(My) \\
& = \iint \! dx dy \, e^{-ixy} \, f(\varepsilon \sigma M^Tx, \varepsilon M^{-1}y) \, 
\alpha_ {\, \sigma MQM^Tx}(VAV^{-1}) \, U(y) \, ,
\end{split}
\end{equation*}
where the last equality is obtained by substituting  
$(x, y) \rightarrow (\sigma M^Tx, M^{-1}y)$. 
Applying these relations to any vector $\Phi \in {\cal D}$ and taking into 
account $V^{-1} {\cal D} =  {\cal D}$, the assertion follows in the limit 
of small $\varepsilon$.
\end{proof}

     Next, we establish a result which is fundamental for the applications 
to physics. We shall show that the warped convolutions preserve certain 
specific commutation properties of the operators in $\Cs^\infty$ for 
appropriate choices of the underlying skew symmetric matrices depending on the 
spectrum of the representation $U$ \cite{BuSu2}. 

\begin{prop} \label{2.4}
Let $A,B \in \Cs^\infty$ be operators such that
$[\alpha_{Qx}(A), \alpha_{-Qy}(B)] = 0$ for all $x,y \in \mbox{sp} \, U$. Then
$$[ A_Q, B_{-Q}] = 0 \, .$$
\end{prop}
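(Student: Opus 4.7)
My approach would be to establish $A_Q B_{-Q}\Psi = B_{-Q} A_Q\Psi$ on the common stable dense domain $\Ds$; by the boundedness of both operators (Theorem \ref{warpedrep}) this identity extends to all of $\Hs$. The conceptual heart is that, after writing $A_Q = \int\alpha_{Qp}(A)\,dE(p)$ and $B_{-Q} = \int\alpha_{-Qq}(B)\,dE(q)$ via Lemma \ref{symmetric}, the spectral measures confine $p, q$ to $\mbox{sp}\,U$, which is precisely where the hypothesis supplies a pointwise commutation. Schematically, the matrix element of $A_Q$ from $E(\{q\})\Hs$ to $E(\{p\})\Hs$ equals $e^{iQq\cdot p}$ times the corresponding matrix element of $A$, and similarly for $B_{-Q}$ with sign reversed; a direct computation of the two products $A_Q B_{-Q}$ and $B_{-Q} A_Q$ over intermediate spectral labels reduces their equality to the hypothesis evaluated at $x = p,\, y = q$, with the skew-symmetry identity $Qp\cdot p = 0$ eliminating the self-pairing phases.

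To implement this rigorously I would employ the oscillatory integral representations (\ref{convenient}) twice. With $\Psi\in\Ds$, using the second line of (\ref{convenient}) for $A_Q$ and the first line for $B_{-Q}$ yields
\begin{equation*}
\begin{split}
A_Q B_{-Q}\Psi = \lim_{\varepsilon,\delta\to 0}(2\pi)^{-2n}\!\iiiint dx_1 dy_1 dx_2 dy_2\, &f(\varepsilon x_1,\varepsilon y_1) g(\delta x_2,\delta y_2)\, e^{-i(x_1 y_1 + x_2 y_2)} \\
&\times U(x_1)\alpha_{Qy_1}(A)\alpha_{-Qx_2}(B) U(y_2)\Psi,
\end{split}
\end{equation*}
with an analogous representation of $B_{-Q} A_Q\Psi$ from the symmetric choice of forms. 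The operators $\alpha_{Qy_1}(A)$ and $\alpha_{-Qx_2}(B)$ are now adjacent; moreover, the $x_1$-integration coupled with $U(x_1)$ and the $y_2$-integration coupled with $U(y_2)$ serve as Fourier inversions (heuristically $\int dx\, e^{-ixp}U(x) = (2\pi)^n dE(p)$) that effectively restrict $y_1, x_2$ to $\mbox{sp}\,U$. There the hypothesis commutes the two $\alpha$'s in the integrand; substitutions of the integration variables that exploit the skew-symmetry of $Q$, in the spirit of the proofs of Lemmas \ref{symmetric} and \ref{rep}, would then convert the $A_Q B_{-Q}\Psi$ integral into the $B_{-Q} A_Q\Psi$ integral.

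The rigorous spectral confinement would be achieved by restricting $y_1, x_2$ to bounded Borel sets via spectral projections $E(B_R)$, paralleling the derivation of (\ref{eq2.1})--(\ref{convenient}); the commutation then applies pointwise on the cutoff integrals, and the limit $R\to\infty$ is recovered using the stability of $\Ds$ under the relevant operators. The main technical hurdle is the careful interchange of the three limits ($\varepsilon, \delta \to 0$ and $R \to \infty$) in the double oscillatory integral, together with the bookkeeping of the accompanying phase substitutions; this parallels the mollifier-Fourier manipulations in the proofs of Lemmas \ref{1.1} and \ref{symmetric}, but now at the level of a product of two warped convolutions.
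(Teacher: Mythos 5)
Your outline is correct and matches the paper's proof in all essentials: establish equality of matrix elements of $A_Q B_{-Q}$ and $B_{-Q}A_Q$ between vectors of compact spectral support, apply the commutation hypothesis pointwise on $\mbox{sp}\,U$, and reassemble via oscillatory-integral manipulations and variable substitutions exploiting the skew-symmetry of $Q$. The one implementation detail worth noting is that the paper achieves the rigorous spectral confinement not by inserting cutoffs $E(B_R)$ into the four-fold oscillatory integral, but by first returning to the spectral-calculus form $\lim_{F,F'\nearrow 1}\iint\langle\Phi,\, dE(x)\,F\,\alpha_{Qx}(A)\,\alpha_{-Qy}(B)\,F'\,dE(y)\,\Psi\rangle$ with finite-dimensional $F,F'$, where the $\alpha$-arguments and the spectral variables coincide by construction so the hypothesis applies directly, and only then passing to the four-fold oscillatory integral with the mollifier and polynomial machinery of Lemmas \ref{1.1} and \ref{rep} to carry out the phase substitution and justify the limit interchanges.
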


\begin{proof} Returning to the definition of the warped convolutions 
by the spectral calculus and making use of Lemma \ref{symmetric}, one finds for 
vectors $\Phi, \Psi$ with compact spectral support
$$ \langle \Phi, A_Q B_{-Q} \Psi \rangle =
\lim_{F, F' \nearrow 1} \langle \Phi, \big( \! \int \! dE(x) F \alpha_{Qx}(A) \big)
 \big(\! \int \! \alpha_{-Qy}(B) F' dE(y) \big) \Psi \rangle \, ,
$$
where $F, F'$ are finite--dimensional projections. Now
\begin{equation*}
\begin{split} 
 & \langle \Phi, \big( \! \int \! dE(x) F \alpha_{Qx}(A) \big)
 \big(\! \int \! \alpha_{-Qy}(B) F' dE(y) \big) \Psi \rangle \\
& = \iint \langle \Phi,  dE(x) F \alpha_{Qx}(A) \, \alpha_{-Qy}(B) F'
dE(y) \Psi \rangle \\
& = \iint  \langle \Phi,  dE(x) F \alpha_{-Qy}(B) \, \alpha_{Qx}(A) F'
dE(y) \Psi \rangle \, ,
\end{split}
\end{equation*}
where the step from the first to the second line is justified
by the fact that the given expression can be decomposed into a finite
sum of product measures multiplied with smooth functions.
The second step is a consequence of the commutativity properties of
$A$ and $B$. Introducing the notation 
$\boldsymbol{u} = (v,w,x,y) \in \RR^{4n}$ and picking any test
function $\boldsymbol{u} \mapsto h(\boldsymbol{u})$ which 
which is equal to $1$ at $0$, it follows from the spectral
representation of $U$ that the latter 
integral is equal to
$$ 
\lim_{\varepsilon \rightarrow 0} \ (2 \pi)^{-2n} \! 
\iiiint \! d\boldsymbol{u} \, h(\varepsilon \boldsymbol{u})\, 
e^{-ivx -iyw} \,
 \langle \Phi,  U(v) F \alpha_{-Qy}(B) \, \alpha_{Qx}(A) F' U(w) \Psi 
 \rangle \, . $$
Adopting now the arguments and notation in the final part of the proof of
Lemma \ref{rep}, one finds that for the polynomial 
$L$ \eqref{polynomial-L} there exists a 
corresponding polynomial $P$ such that
\begin{equation*}
\begin{split} 
& \iiiint \! d\boldsymbol{u} \, h(\varepsilon \boldsymbol{u}) \, e^{-ivx -iyw} \,
 \langle \Phi,  U(v) F \alpha_{-Qy}(B) \, \alpha_{Qx}(A) F' U(w) \Psi \rangle \\
= & \iiiint \! d\boldsymbol{u} \,  e^{-ivx -iyw} \,
P(\boldsymbol{\partial}) \, h(\varepsilon \boldsymbol{u}) \,
L(\boldsymbol{u})^{-1} 
 \langle \Phi,  U(v) F \alpha_{-Qy}(B) \, \alpha_{Qx}(A) F' U(w) \Psi \rangle \, .
\end{split}
\end{equation*}
After having performed the differentiations in the last integral, one sees 
by an application of the dominated convergence theorem that the composite limit
$\varepsilon \rightarrow 0$, $F, F' \nearrow 1$ is independent of the order in 
which the individual limits are carried out and also does not depend on the 
choice of $h$. Thus one has, in particular,
\begin{equation*}
\begin{split}  
& \langle \Phi, A_Q B_{-Q} \Psi \rangle \\
& = \lim_{\varepsilon \rightarrow 0} \ (2 \pi)^{-2n}
\iiiint \! d\boldsymbol{u} \,  h(\varepsilon \boldsymbol{u}) \,
e^{-ivx -iyw} \,
 \langle \Phi,  U(v) \alpha_{-Qy}(B) \, \alpha_{Qx}(A) U(w) \Psi \rangle \, .
\end{split}
\end{equation*}
As before, one takes advantage of the fact that the integration
may be restricted to the submanifold 
$(\mbox{ker} \, Q)^\perp \times \dots \times (\mbox{ker} \, Q)^\perp
\subset \RR^{4n}$, since the remaining integrals merely produce
factors of $2 \pi$. So the preceding integral can be recast as
\begin{equation*}
\begin{split}
& \iiiint \! d\boldsymbol{u} \,  e^{-ivx -iyw} \,
h(\varepsilon\boldsymbol{u}) \,
 \langle \Phi,  U(v) \alpha_{-Qy}(B) \, \alpha_{Qx}(A) U(w) \Psi \rangle \\
& = \iiiint \! d\boldsymbol{u} \,  e^{-ivx -iyw} \,
h(\varepsilon\boldsymbol{u}) \,
 \langle \Phi,  U(w) \alpha_{-Qy+v-w}(B) \, \alpha_{Qx+v-w}(A) U(v) \Psi \rangle \\
& = \iiiint \! d\boldsymbol{u} \,  e^{-ivx -iyw} \,
k(\varepsilon\boldsymbol{u}) \,
 \langle \Phi,  U(w) \alpha_{-Qy}(B) \, \alpha_{Qx}(A) U(v) \Psi \rangle \, ,
\end{split}
\end{equation*}
where 
$k(v,w,x,y) = h(v,w,x - Q^{-1}(v - w),y + Q^{-1}(v - w))$.
The last equality is the result of the substitution  
$(x,y) \rightarrow (x - Q^{-1}(v - w), y + Q^{-1}(v - w)$,
under which $ e^{-ivx -iyw} $ does not change because of the
skew symmetry of $Q$. Proceeding to the limit of small $\varepsilon$,
one obtains by relation (\ref{convenient}) and Lemma \ref{symmetric} 
\begin{equation*}
\begin{split}  
& \lim_{\varepsilon \rightarrow 0} \ (2 \pi)^{-2n}
\iiiint \! d\boldsymbol{u} \,  e^{-ivx -iyw} \,
k(\varepsilon\boldsymbol{u}) \,
 \langle \Phi,  U(w) \alpha_{-Qy}(B) \, \alpha_{Qx}(A) U(v) \Psi \rangle \\
= & \ \langle \Phi,  B_{-Q} A_Q \Psi \rangle \, .
\end{split}
\end{equation*}
This shows that 
$\langle \Phi, A_Q B_{-Q} \Psi \rangle = \langle \Phi,  B_{-Q} A_Q \Psi \rangle$. 
Since $\Phi, \Psi$ were arbitrary elements of a dense set of vectors, 
the assertion now follows.
\end{proof}

     We finally discuss the structure of the family of maps given by the 
warped convolutions. According to Proposition \ref{survey} (ii), these 
maps act bijectively on $\Cs^\infty$ and therefore can be composed 
and have inverses.
In fact, they form a group which is homomorphic to 
$\RR^{n(n-1)/2}$, as can be seen from the next proposition. 

\begin{prop}  \label{group}
Let $Q_1$, $Q_2$ be skew symmetric matrices. Then 
$$(A_{Q_1})_{Q_2} = A_{Q_1 + Q_2} \, , \quad A \in \Cs^\infty \, .$$
\end{prop}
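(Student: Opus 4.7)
The plan is to exploit the integral representation (\ref{convenient}) of warped convolution twice, then perform a substitution and a mollifier-replacement argument to reduce the iterated warping to a single one with matrix $Q_1 + Q_2$. By Proposition \ref{survey}(ii), $A_{Q_1}\in\Cs^\infty$, so $(A_{Q_1})_{Q_2}$ is well-defined on $\Ds$, and the boundedness of both operators reduces the identity to vectors in $\Ds$.

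The starting ingredient is the translation-covariance identity $\alpha_{Q_2 x}(A_{Q_1}) = (\alpha_{Q_2 x}(A))_{Q_1}$, which follows from Proposition \ref{2.5} applied to the unitary $V = U(Q_2 x)$ (so $M = I$). Combining this with a double use of (\ref{convenient}), one obtains for $\Phi \in \Ds$
\begin{equation*}
(A_{Q_1})_{Q_2}\Phi = \lim_{\varepsilon,\delta\to 0}(2\pi)^{-2n}\!\iiiint dx\,dy\,du\,dv\, f(\varepsilon x,\varepsilon y) f(\delta u,\delta v)\, e^{-ixy-iuv}\, \alpha_{Q_1 u + Q_2 x}(A)\, U(v+y)\,\Phi.
\end{equation*}
The substitution $(y,u)\mapsto (\tilde y - v,\, u' + x)$ has unit Jacobian; the cross-terms $e^{\pm ixv}$ in the phase cancel, leaving the separated exponent $e^{-ix\tilde y - iu'v}$; the operator-valued factor becomes $\alpha_{Q_1 u' + (Q_1+Q_2)x}(A)\, U(\tilde y)$; and the mollifier turns into $f(\varepsilon x,\varepsilon(\tilde y - v))\, f(\delta(u'+x),\delta v)$, which still equals $1$ at the origin of $\RR^{4n}$.

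The key step is then to show that this coupled mollifier may be replaced by the factored one $f(\varepsilon x,\varepsilon\tilde y)\,f(\delta u',\delta v)$. This follows the Fourier/polynomial template of Lemma \ref{1.1}(i): choose the polynomial $L(x,\tilde y,u',v)=(L_n(x)L_n(\tilde y)L_n(u')L_n(v))^{-1}$ together with its Fourier companion $P$, so that $L\,e^{-ix\tilde y - iu'v} = P(-\boldsymbol\partial)\,e^{-ix\tilde y - iu'v}$, and integrate by parts. Any derivative hitting a mollifier produces a factor of $\varepsilon$ or $\delta$ that vanishes in the limit, while the derivatives of $\alpha_{\,\cdot\,}(A)\,U(\tilde y)\Phi$ are controlled using the smoothness of $A$ under $\alpha$ and of $\Phi$ under $U$; the limit therefore depends only on the value of the mollifier at the origin.

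With the factored mollifier the quadruple integral splits into independent pieces. A direct Fourier-transform-in-$v$ followed by the rescaling $u'=\delta s$ shows that the inner $(u',v)$-integral converges in norm to $\alpha_{(Q_1+Q_2) x}(A)$ as $\delta\to 0$, uniformly in $x$ (exploiting $\alpha$-smoothness of $A$ and dominated convergence). The remaining $(x,\tilde y)$-integral is then, by (\ref{convenient}), exactly $A_{Q_1+Q_2}\Phi$. I expect the main obstacle to be the mollifier-replacement step: carrying out the four-variable version of the Fourier/polynomial argument of Lemma \ref{1.1}(i) cleanly in the presence of a mollifier that couples all four integration variables and must be shown to decouple in the joint $\varepsilon,\delta\to 0$ limit.
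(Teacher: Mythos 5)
Your argument is correct, but it follows a genuinely different route from the paper's. The paper proves Proposition~\ref{group} directly from the spectral--calculus presentation of the warped convolution: it first notes $A_Q f(P)=(Af(P))_Q$ for bounded continuous $f$ of the generators, then for $\Phi$ with compact spectral support and $f(P)\Phi=\Phi$ writes $(A_{Q_1})_{Q_2}\Phi$ as an iterated limit of finite--dimensional--projection approximants $\iint f(x)f(y)\,\alpha_{Q_1x+Q_2y}(A)\,F\,dE(x)F'\,dE(y)\,\Phi$ and invokes the weak convergence of the ``product measure'' $dE(x)F'dE(y)$ to $\delta(x-y)\,dx\,dE(y)$ as $F'\nearrow 1$, which collapses the two smearings onto a single one with matrix $Q_1+Q_2$. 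Your proof instead stays entirely inside the oscillatory--integral machinery of Lemma~\ref{1.1}: expand both warpings via (\ref{convenient}) using the covariance $\alpha_{Q_2x}(A_{Q_1})=(\alpha_{Q_2x}(A))_{Q_1}$ (Proposition~\ref{2.5} with $M=I$), perform the unit--Jacobian substitution $(y,u)\mapsto(\tilde y - v, u'+x)$ which decouples the phase, and then invoke the same polynomial/partial--integration mollifier--replacement argument that the paper itself uses in the proof of Lemma~\ref{rep} to pass from the coupled mollifier $f(\varepsilon x,\varepsilon(\tilde y-v))f(\delta(u'+x),\delta v)$ to the factored one, after which the inner $\delta$--limit produces $\alpha_{(Q_1+Q_2)x}(A)$ and the outer $\varepsilon$--limit yields $A_{Q_1+Q_2}\Phi$. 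The paper's approach is shorter and exploits the extra structure available when $U$ is concretely represented (a projection--valued measure), whereas your approach is more uniform with Lemmas~\ref{1.1}--\ref{rep} and makes no appeal to the delta--function heuristic for $dE(x)F'dE(y)$; the price you pay is the bookkeeping of the four--variable partial integration and the domination needed to interchange the inner $\delta$--limit with the outer Bochner integral (controlled, as you indicate, by the $\|\cdot\|_{n+1}$ bounds of Lemma~\ref{1.1}(iii) and the smoothness of $\Phi$). Both proofs are sound; yours is the more pedestrian but equally rigorous variant.
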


\begin{proof}
To begin, note that for any continuous bounded function $f$
of the generator $P$ of $U$ one has $A_Q f(P) = (A f(P))_Q$, as a consequence 
of relation (\ref{covariance}) and part (iv) of Lemma \ref{1.1}. Let 
$\Phi \in \Ds$ be any vector with compact spectral support and let $f$ be a 
test function  such that $f(P) \Phi = \Phi$. It follows that  
$(A_{Q_1})_{Q_2} \Phi = (A_{Q_1} f(P) )_{Q_2} f(P) \Phi$. 
Picking nets of finite--dimensional projections $F, F^\prime$
converging to $1$, making use of the spectral calculus, which implies 
$f(P) \, dE(z) = f(z) \, dE(z)$, $z \in \RR^n$, and 
recalling the definition of the warped convolutions, one obtains
in the sense of weak convergence 
$$
(A_{Q_1})_{Q_2} \Phi =
\lim_{F^\prime \nearrow 1, F \nearrow 1} 
\iint \!  f(x) f(y) \, \alpha_{Q_1x  + Q_2y}(A) \,  F dE(x) F^\prime dE(y)
\Phi \, .
$$
Here the limits are taken in the given order and the  
(strong) limit $F \nearrow 1$ has been interchanged with the 
$y$--integration by an application of the dominated convergence theorem.
Since the function 
$x,y \mapsto  f(x) f(y) \alpha_{Q_1x  + Q_2y}(A)$ is smooth
and rapidly decreasing in norm, one can interchange the limits.
The product measure $dE(x) F^\prime dE(y)$ converges 
weakly in the sense of distributions to $\delta(x-y) \, dx dE(y)$
as $F^\prime \nearrow 1$, where $\delta(x-y) \, dx$ is the Dirac measure at 
$y$; hence one obtains 
$$
(A_{Q_1})_{Q_2} \Phi =
\lim_{F \nearrow 1} 
\int \!  \alpha_{(Q_1 +  Q_2)x}(A) \,  F  dE(x) f(P)^2 \Phi =  A_{Q_1 + Q_2} \Phi \, .
$$
The desired conclusion then follows, because the space of vectors $\Phi$ with 
compact spectral support is dense in $\Hs$.  
\end{proof} 

\noindent Note that this result does not entail a composition law
of the representations $\pi_Q$ of the Rieffel algebras, since
their ranges do not, in general, fit with their respective domains. \\[5mm]
\textbf{Further Results:} Most of the preceding results can be
established in a setting of unbounded operators. One proceeds again from 
a continuous unitary representation $U$ of $\RR^n$ and considers the 
${}^*$--algebra ${\cal F}$ of \textit{all}  operators $F$ for which there is 
some $n_F \in \NN$ such that the functions 
$x \mapsto (1+P^2)^{-n_F} \alpha_x(F) (1+P^2)^{-n_F}$
are arbitrarily often differentiable in norm. The operators $F \in {\cal F}$ 
are defined on the domain ${\cal D}$ and leave it invariant. Making use of 
the fact that there is a version of Lemma \ref{1.1} in this setting, one can 
define the Rieffel product $\tq$ on  ${\cal F}$; the warped convolutions of the 
elements of ${\cal F}$ can be defined as well and are elements of ${\cal F}$. 
Moreover, Lemmas \ref{symmetric} and \ref{rep} hold without changes, so
the warped convolutions define an (unbounded) ${}^*$--representation of
$(\Fs,\tq)$, and Propositions \ref{2.5}, \ref{2.4} 
and~\ref{group} hold as well. We refrain from giving the proofs here.

\section{Warped Convolutions and Borchers Triples}  \label{triple}
\setcounter{equation}{0}

     We consider now warped convolutions in the context of Borchers triples, 
invented by Borchers \cite{Bo} for the construction and analysis of 
relativistic quantum field theories. This setting is, on the one hand, 
more restrictive than the preceding one, since one deals with unitary 
representations $U$ of the translations $\RR^n$, $n \geq 2$, with certain 
specific spectral properties. On the other hand, one considers 
subalgebras of $\BH$ on which the adjoint action $\alpha$ of $U$ merely induces 
endomorphisms for semigroups of translations in the set  
$\Ws \doteq \{ x = (x_0, x_1, \ldots, x_{n-1}) \in \RR^n : x_1 \geq \vert
x_0 \vert \} $.
\begin{definition}  \label{deftriple}
A Borchers triple $({\cal R}, U, \Omega)$ (relative to $\Ws$) consists of
\begin{itemize}
\item[(a)] a von Neumann algebra ${\cal R} \subset \BH$, 
\item[(b)] a weakly continuous unitary
  representation $U$ of 
${\mathbb R}^n$ on ${\cal H}$ whose spectrum is contained in the 
closed forward light cone 
$V_+ = \{p = (p_0, p_1, \ldots, p_{n-1}) \in~\RR^n : 
p_0 \geq  \sqrt{p_1^2 + \ldots  + p_{n-1}^2}  \}$ 
and which satisfies 
$\alpha_x(\Rs)  \subset \Rs$, $x \in \Ws$,
\item[(c)] and a unit vector $\Omega \in {\cal H}$ which is invariant under 
the action of $U$ and is cyclic and separating for ${\cal R}$.
\end{itemize}
\end{definition}
By condition (c), Tomita--Takesaki theory \cite{Ta, Ta2} 
is applicable to the pair $({\cal R}, \Omega)$, and we 
shall denote by $\Delta, J$ the associated modular operator and
involution. In this context 
Borchers \cite{Bo} proved the following remarkable theorem (see \cite{Fl}
for a simpler proof). 

\begin{theorem} \label{Borchers}
Let $({\cal R}, U, \Omega)$ be a Borchers triple relative to $\Ws$.
Denoting by $\vartheta(t)$, $t \in \RR$, 
and $j$ the
transformations acting  on 
$x =  (x_0, x_1, \dots x_{n-1}) \in \RR^n$ by
\begin{align*} 
\vartheta(t) \, x &  \doteq (\cosh(2\pi t) x_0 + \sinh(2\pi t) x_1, \sinh(2\pi t) x_0 + \cosh(2\pi t) x_1,
x_2, \dots , x_{n-1}) \, , \\
j x  & \doteq (-x_0, -x_1, x_2, \dots x_{n-1}) \, ,
\end{align*} 
one has
\begin{enumerate} 
\item[(i)] $\Delta^{it} U(x) \Delta^{-it} = U(\vartheta(t) x)$
for $x \in \RR^n$ and  $t \in \RR$, 
\item[(ii)] $J U(x) J = U(j x)$
for $x \in \RR^n$.
\end{enumerate}
Moreover, $(\Rs^\prime, U, \Omega)$ is a Borchers triple relative 
to $- \Ws$, where $\Rs^\prime = J \Rs J$ is the commutant of $\Rs$.
\end{theorem}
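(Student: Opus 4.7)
My plan is to split the translations into a transverse part (handled trivially by uniqueness of modular data) and a two-dimensional part (which is Borchers' theorem proper). Write $\RR^n = E \oplus E^\perp$ with $E = \mathrm{span}(e_0,e_1)$ the ``wedge plane'' and $E^\perp = \mathrm{span}(e_2,\ldots,e_{n-1})$ the transverse directions. For $x \in E^\perp$ one has $x \in \Ws \cap (-\Ws)$, so both $\alpha_x(\Rs) \subset \Rs$ and $\alpha_{-x}(\Rs) \subset \Rs$, giving $\alpha_x \in \mathrm{Aut}\,\Rs$. Since also $U(x)\Omega = \Omega$, the unitary $U(x)$ leaves the standard pair $(\Rs,\Omega)$ invariant and therefore, by a standard uniqueness result in Tomita--Takesaki theory, commutes with both $\Delta^{it}$ and $J$. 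As $\vartheta(t)$ and $j$ act as the identity on $E^\perp$, statements (i) and (ii) hold for $x \in E^\perp$. Using $U(x+y) = U(x)U(y)$, it remains to establish them for $x \in E$.

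\textbf{Two-dimensional case.} For the 2D part I would run Borchers' original argument, in the streamlined form due to Florig cited in the excerpt. The essential ingredients are: (a) the spectrum condition, which in light-cone coordinates $P_\pm = (P_0 \pm P_1)/2$ says $P_\pm \geq 0$ and consequently, for $a \in \Ws \cap E$ (so $a_+ \geq 0$, $a_- \leq 0$), makes $U(a)$ analytically continuable to appropriate complex half-planes, with $U(a)\Omega = \Omega$; (b) the inclusion $U(a)\Rs U(a)^{-1} \subset \Rs$ for $a \in \Ws \cap E$; (c) the KMS condition satisfied by $\omega = \langle\Omega,\,\cdot\,\Omega\rangle$ on $\Rs$ with respect to the modular group $\sigma_t(X) = \Delta^{it} X \Delta^{-it}$ in the strip $\{0 < \mathrm{Im}\,z < 1\}$. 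The geometric pivot is that $\vartheta(t)$ rescales $a_\pm = a_0 \pm a_1$ by $e^{\pm 2\pi t}$, aligning the $a$-analyticity of (a) with the $t$-analyticity of (c). One forms matrix elements $\langle A\Omega, \Delta^{it} U(a) \Delta^{-it} B\Omega\rangle$ for $A,B \in \Rs$, analytically continues in $t$ using (c) and in $a$ using (a)--(b), and an edge-of-the-wedge/Schwarz reflection argument identifies the result with $\langle A\Omega, U(\vartheta(t)a) B\Omega\rangle$. Density of $\Rs\Omega$ yields (i) first for $a \in \Ws \cap E$ and then for arbitrary $a \in E$ by writing $a = a^+ - a^-$ with $a^\pm \in \Ws \cap E$.

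\textbf{Statement (ii) and the ``moreover'' clause.} From general Tomita--Takesaki theory one has $J\Delta^{it}J = \Delta^{it}$ and $J\Rs J = \Rs'$; combined with (i) and the antilinearity of $J$, the candidate identity $JU(x)J = U(jx)$ is verified by a parallel analytic-continuation argument on the dense subspace $\Rs\Omega$ (or, alternatively, by a direct reduction to (i) applied to the paired Borchers triples $(\Rs,U,\Omega)$ and $(\Rs',U,\Omega)$). For the ``moreover'' clause: $\Omega$ is cyclic and separating for $\Rs' = J\Rs J$ by Tomita--Takesaki; $U$ has the same spectrum contained in $V_+$ and still fixes $\Omega$; and for $x \in -\Ws$ the inclusion $U(-x)\Rs U(x) \subset \Rs$ (since $-x \in \Ws$) passes to commutants as $U(-x)\Rs' U(x) \supset \Rs'$, that is, $\alpha_x(\Rs') \subset \Rs'$. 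The substantive obstacle is precisely the analytic-continuation step of the 2D case --- Borchers' key insight --- which I would cite from the Florig paper rather than reproduce in detail.
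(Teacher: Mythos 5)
Your proposal is correct and follows essentially the same route as the paper: reduce to the $n=2$ Borchers theorem (cited from Borchers/Florig) for the $(e_0,e_1)$-plane, and for the transverse translations observe that $U(x_\perp)$ fixes both $\Rs$ (since $x_\perp \in \Ws\cap(-\Ws)$) and $\Omega$, so by uniqueness of the modular data $\Delta^{it}$ and $J$ commute with $U(x_\perp)$. The paper's proof is just a terse version of this, leaving the ``moreover'' clause implicit; your explicit commutant computation $\alpha_x(\Rs')\subset\Rs'$ for $x\in-\Ws$ fills that in correctly.
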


\begin{proof}
The assertion for $n = 2$ is proven in \cite{Bo}. 
Setting $x_\perp = (0,0,x_2,\ldots,x_{n-1})$, conditions (b), (c) 
in Definition \ref{deftriple} imply 
$U(x_\perp) \Rs U(x_\perp)^{-1} = \Rs$ and $U(x_\perp) \Omega =
\Omega$.
The uniqueness of the modular objects then entails that $\Delta$ and $J$ 
both commute with all $U(x_\perp)$, completing the proof in 
the general case.
\end{proof}

     We shall show now that the family of Borchers triples is stable
under the deformations induced by warped convolutions corresponding to 
certain  specific choices of the skew symmetric 
matrix\footnote{\label{Lorentz} Having in mind
applications to quantum field theory, we choose henceforth the Lorentz 
product $xy = x_0 y_0 - \sum_{m=1}^{n-1} x_m y_m $, $x,y \in \RR^n$, as 
the bilinear form on $\RR^n$.} $Q$. Moreover, the
modular objects of the deformed triples coincide with those of the
original one. This observation is of relevance in quantum
field theory, which will be discussed at the end of this section.

     We begin with some technical remarks. Let $\Cs^\infty$ be, as above, 
the ${}^*$--algebra of all smooth elements in $\BH$ under the adjoint action 
of the translations and let $\Rs^\infty = \Rs \bigcap \Cs^\infty$.  In view of 
condition (b) in Definition \ref{deftriple}, one obtains elements of 
$\Rs^\infty$ by smoothing any element $R \in {\cal R}$ with Schwartz test 
functions $f$ having support in ${\cal W}$,  
\begin{equation} \label{smoothing}
R(f) \doteq \int \! dx \, f(x) \, \alpha_x(R) \, .
\end{equation}
These weak integrals are elements of $\Rs^\infty$ since, by construction, 
they are smooth and contained in the von Neumann algebra $\Rs$. Choosing 
sequences $f_n$  of test functions with support in ${\cal W}$ which 
approximate the Dirac measure at $0$, one sees that $\Rs^\infty$ is dense in 
$\Rs$ in the strong operator topology, and consequently $\Omega$ is cyclic 
for $\Rs^\infty$. By the same reasoning one finds that $\Omega$ is also
cyclic for $\Rs^{\prime \, \infty} \doteq \Rs^\prime \bigcap \Cs^\infty$.

     Now let $Q$ be any real skew symmetric matrix on $\RR^n$  
which is ${\cal W}$--{\em compatible} in the sense that
$Q V_+ \subset {\cal W}$. This constraint on $Q$ will become important in 
the following. The corresponding warped operators $A_Q$, $A \in \Rs^\infty$, are
defined as in the previous section. Since they are bounded and satisfy 
${A_Q}^* = {A^*}_Q$, they generate a von Neumann algebra, called a warped 
algebra for short. With a slight abuse of notation, we write  
$$ {\cal R}_Q \doteq \{A_Q : A \in \Rs^\infty \}^{\prime \prime} \, . $$

     For the proof that the warped triple $({\cal R}_Q, U, \Omega)$ 
is again a Borchers triple, we note that, as a consequence of 
Proposition \ref{2.5}, one has 
$\alpha_x( {\Rs}_Q) = \alpha_x({\Rs})_Q \subset {\Rs}_Q$
for~$x \in {\Ws}$. So condition (b) in Definition~\ref{deftriple} is 
satisfied. Furthermore, since $\Omega$ is cyclic for $\Rs^\infty$, it 
is also cyclic for ${\cal R}_Q$ as a consequence of equation 
\eqref{vacuum}. In order to see that $\Omega$ is separating for ${\cal R}_Q$,
let $A \in \Rs^\infty$, $A^\prime \in \Rs^{\prime \, \infty}$. Then   
$[\alpha_x(A) , \alpha_y(A^\prime)] = 0$ for $x \in {\cal W}$, 
$y \in - {\cal W}$, and taking into account that 
$Q \, \mbox{sp} \, U \subset Q \, V_+ \subset \Ws$, it follows from
Proposition \ref{2.4} that $[A_Q, {A^\prime}_{-Q}] = 0$. Thus 
$(\Rs^\prime)_{-Q} \subset {({\cal R }_Q)}^\prime$. 
But equation \eqref{vacuum} implies that $\Omega$ is cyclic for
$(\Rs^\prime)_{-Q}$ and thus \textit{a fortiori} for~${({\cal R }_Q)}^\prime$.
Hence $\Omega$ is separating for ${\cal R}_Q$,
and condition (c) in Definition~\ref{deftriple} holds as well.

\begin{theorem}  \label{qtriple}
Let $(\Rs,U,\Omega)$ be a Borchers triple relative to $\Ws$ and
let $Q$ be $\cW$--compatible. Then the resulting 
warped triple $(\Rs_Q,U,\Omega)$ is also a Borchers triple relative to $\Ws$.
\end{theorem}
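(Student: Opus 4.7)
The plan is to verify the three defining conditions of Definition \ref{deftriple} for the warped triple $(\Rs_Q, U, \Omega)$. Condition (a) is automatic, since $\Rs_Q$ is defined as a double commutant. The unitary representation $U$ in condition (b) is unchanged, so its spectral and continuity properties are inherited from the original triple; only the semigroup covariance $\alpha_x(\Rs_Q) \subset \Rs_Q$ for $x \in \Ws$ needs to be checked. I would obtain this from Proposition \ref{2.5} applied with $V = U(x)$ and $M$ the identity, which yields $U(x) A_Q U(x)^{-1} = \alpha_x(A)_Q$ for every $A \in \Rs^\infty$. Since $\alpha_x$ preserves $\Cs^\infty$ and preserves $\Rs$ for $x \in \Ws$, it preserves $\Rs^\infty = \Rs \cap \Cs^\infty$, and passing to the double commutant yields the desired inclusion.

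For condition (c), the vector $\Omega$ is $U$-invariant by hypothesis. To show cyclicity of $\Omega$ for $\Rs_Q$, I would invoke the vacuum identity $A_Q \Omega = A\Omega$ valid for $A \in \Cs^\infty$ (equation \eqref{vacuum}), combined with the density of $\Rs^\infty$ in $\Rs$ in the strong operator topology. The latter is a direct consequence of the smoothing \eqref{smoothing} by Schwartz functions supported in $\Ws$ approximating a delta at the origin, as noted immediately before the theorem. Thus $\Rs_Q \Omega$ contains the dense set $\Rs^\infty \Omega$. To establish the separating property, I would pass to the commutant and show $\Omega$ is cyclic for $(\Rs_Q)'$ via the inclusion $(\Rs')_{-Q} \subset (\Rs_Q)'$. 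To verify this inclusion, pick $A \in \Rs^\infty$ and $A' \in \Rs^{\prime \, \infty}$; the $\Ws$-compatibility $QV_+ \subset \Ws$ ensures that for any $x, y \in \sp U \subset V_+$ one has $Qx \in \Ws$ and $-Qy \in -\Ws$, and Borchers' Theorem \ref{Borchers} furnishes the mirror triple $(\Rs', U, \Omega)$ relative to $-\Ws$, so $\alpha_{Qx}(A) \in \Rs$ and $\alpha_{-Qy}(A') \in \Rs'$ commute. Proposition \ref{2.4} then upgrades this pointwise commutativity into $[A_Q, A'_{-Q}] = 0$, giving the inclusion. A second appeal to the vacuum identity applied to $\Rs^{\prime \, \infty}$, together with the density of $\Rs^{\prime \, \infty}$ in $\Rs'$ and cyclicity of $\Omega$ for $\Rs'$, then shows that $(\Rs')_{-Q}\Omega$, and hence $(\Rs_Q)'\Omega$, is dense in $\Hs$, so $\Omega$ is separating for $\Rs_Q$.

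The conceptual heart of the proof, and the only place where the geometric input genuinely enters, is the derivation of $[A_Q, A'_{-Q}] = 0$ via Proposition \ref{2.4}: this is where the $\Ws$-compatibility $QV_+ \subset \Ws$ is indispensable, since it is precisely what aligns the spectral support of $U$ (contained in $V_+$) with the causal structure encoded in $\Ws$. Everything else amounts to an organized application of the representation-theoretic results of Section \ref{strict}, most notably Propositions \ref{2.5}, \ref{2.4} and the vacuum identity \eqref{vacuum}.
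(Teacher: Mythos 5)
Your proof is correct and follows essentially the same route as the paper: Proposition \ref{2.5} for semigroup covariance, the vacuum identity \eqref{vacuum} for cyclicity of $\Omega$, and Proposition \ref{2.4} together with $Q\,\sp U \subset Q V_+ \subset \Ws$ to get $(\Rs')_{-Q} \subset (\Rs_Q)'$ and hence the separating property. The only (harmless) stylistic difference is that you invoke Theorem \ref{Borchers} to obtain the mirror triple, whereas the paper gets $\alpha_y(\Rs') \subset \Rs'$ for $y \in -\Ws$ directly from condition (b) of the definition.
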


    In view of this theorem, we may apply modular theory to the warped 
triple. We shall show next that the corresponding modular objects coincide 
with the original ones. To this end we need the following technical lemma.

\begin{lemma} \label{3.4}
Let $(\Rs, U, \Omega)$ be a Borchers triple relative to ${\cal W}$
and let \mbox{$S = J \Delta^{1/2}$} be the corresponding Tomita conjugation 
given by the closure of the map
$$  S A  \, \Omega = A^*  \, \Omega \, , \quad A \in \Rs \, . $$
Then the subdomain $\Rs^\infty \, \Omega$  is a core for $S$.
\end{lemma}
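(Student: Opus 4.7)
The plan is to approximate any $A\Omega$ with $A\in\Rs$ by vectors $A_\lambda\Omega$ with $A_\lambda\in\Rs^\infty$ in such a way that \emph{both} $A_\lambda\Omega\to A\Omega$ and $A_\lambda^*\Omega\to A^*\Omega$ hold. Since $S$ is the closure of the map $A\Omega\mapsto A^*\Omega$ on $\Rs\Omega$, this suffices to show that $\Rs^\infty\Omega$ is a core for $S$.

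Concretely, I would fix a non-negative Schwartz function $f$ with $\int\!f(x)\,dx = 1$ and support in the interior of $\Ws$, which is non-empty since $\Ws$ is a solid cone. I then set $f_\lambda(x)\doteq\lambda^n f(\lambda x)$ for $\lambda>0$. Because $\Ws$ is a cone containing the origin, $\operatorname{supp} f_\lambda\subset\Ws$ for every $\lambda$, while the Fourier transform scales as $\tilde f_\lambda(p)=\tilde f(p/\lambda)$ and tends pointwise to $\tilde f(0)=1$ as $\lambda\to\infty$, uniformly bounded by $\|\tilde f\|_\infty$. Using the construction from equation \eqref{smoothing}, I define
$$
A_\lambda \doteq \int\!dx\,f_\lambda(x)\,\alpha_x(A) \,.
$$
The support condition together with $\alpha_x(\Rs)\subset\Rs$ for $x\in\Ws$ places $A_\lambda$ in $\Rs$, and smoothness of $y\mapsto\alpha_y(A_\lambda)$ follows by differentiating under the integral sign against the Schwartz function $f_\lambda$. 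Hence $A_\lambda\in\Rs^\infty$.

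Using $U$-invariance of $\Omega$ I would then compute, with $P$ denoting the generators of $U$,
$$
A_\lambda\Omega \;=\; \int\!dx\,f_\lambda(x)\,U(x)A\Omega \;=\; \tilde f_\lambda(P)\,A\Omega,
$$
which converges to $A\Omega$ by dominated convergence against the spectral measure of $P$. Analogously, since $\alpha_x$ commutes with the $*$-operation,
$$
A_\lambda^*\Omega \;=\; \int\!dx\,\overline{f_\lambda(x)}\,U(x)A^*\Omega \;=\; \overline{\tilde f_\lambda(-P)}\,A^*\Omega,
$$
and the same dominated convergence argument applies because $\tilde f_\lambda(-p)=\tilde f(-p/\lambda)\to 1$ pointwise with the same uniform bound. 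Both convergences together yield the claim.

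The main obstacle is a compatibility issue: the support constraint $\operatorname{supp} f_\lambda\subset\Ws$ is indispensable to keep $A_\lambda$ inside $\Rs$, yet on the Fourier side one needs $\tilde f_\lambda\to 1$ not only on $\sp U\subset V_+$ (to control $A_\lambda\Omega$) but also on $-\sp U\subset -V_+$ (to control $A_\lambda^*\Omega$). The dilation trick resolves this tension by making $\tilde f_\lambda$ flatten around $p=0$ on all of $\RR^n$, exploiting the fact that $\Ws$ is scale-invariant; without the cone structure of $\Ws$ one could not shrink the support of $f_\lambda$ while retaining $\int f_\lambda=1$.
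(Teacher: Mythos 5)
Your proposal is correct and is essentially the same argument the paper gives: smooth a generic $A\in\Rs$ against a real test function supported in $\Ws$ (to stay inside $\Rs^\infty$), exploit reality of the mollifier so that $A_\lambda^*=(A^*)_\lambda$, and conclude via the fact that $\Rs\Omega$ is by definition a core for $S$. The paper simply takes an unspecified sequence of real test functions approximating $\delta_0$ with support in $\Ws$, while you make the dilation construction explicit and verify convergence through the spectral calculus for $P$; this is a presentation difference, not a different proof. One small remark on your closing discussion: the "tension" you describe between support in $\Ws$ and convergence of $\tilde f_\lambda$ on both $\sp U$ and $-\sp U$ is less sharp than it might appear, since for real $f$ one has $\tilde f_\lambda(-p)=\overline{\tilde f_\lambda(p)}$, so convergence on $\sp U$ already gives convergence on $-\sp U$; the dilation trick is there to produce an approximate identity with the right support, not to resolve a genuine Fourier-side obstruction.
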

\begin{proof}
Let $R \in {\cal R}$ and let  $f_n$ be a sequence of real test
functions with support in ${\cal W}$ such that in the sense of strong
convergence $\lim_n  R(f_n) \, \Omega = R \, \Omega$, \cf
relation~(\ref{smoothing}) and the remarks thereafter. Since  
$ R(f_n) \in \Rs^\infty \subset \Rs$ and 
$$ \lim_n S  R(f_n)  \, \Omega = \lim_n R(f_n)^* \, \Omega =
\lim_n R^*(f_n)  \, \Omega =  R^*  \, \Omega
=  S R \Omega \, , $$
the conclusion follows, because $\Rs \Omega$ is a core for $S$ 
by definition.
\end{proof}

     We are now in a position to establish the invariance of the 
modular objects of Borchers triples under the warping procedure. 

\begin{theorem} \label{modular}
Let $({\cal R}, U, \Omega)$ be a Borchers triple relative to $\Ws$ with 
modular objects $\Delta, J$, and let $Q$ be a $\Ws$--compatible 
matrix. Then the modular objects $\Delta_Q, J_Q$ associated with the warped 
triple $({\cal R}_Q, U, \Omega)$ coincide with those of the original 
triple, \ie
$$ \Delta_Q = \Delta, \quad J_Q = J \, . $$
\end{theorem}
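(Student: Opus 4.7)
The strategy is to prove the equality of the Tomita operators $S=J\Delta^{1/2}$ and $S_Q=J_Q\Delta_Q^{1/2}$: once $S=S_Q$ is established, the uniqueness of the polar decomposition of a closed, densely defined, conjugate-linear operator immediately yields $J=J_Q$ and $\Delta=\Delta_Q$. I would prove the two inclusions $S\subset S_Q$ and $S_Q\subset S$ separately, the first directly and the second by a symmetric argument applied to the commutant triple.

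For $S\subset S_Q$, note that equation~\eqref{vacuum} gives $A_Q\Omega=A\Omega$ for every $A\in\Rs^\infty$; by Theorem~\ref{warpedrep} the operator $A_Q$ extends to a bounded operator, so $A_Q^*=(A^*)_Q$ and hence $A_Q^*\Omega=A^*\Omega$. Since $A\in\Rs$ and $A_Q\in\Rs_Q$, both Tomita operators are defined on the vector $A\Omega=A_Q\Omega$ and
\[
 S A\Omega \;=\; A^*\Omega \;=\; A_Q^*\Omega \;=\; S_Q A_Q\Omega \, .
\]
Thus $S$ and $S_Q$ coincide on the subdomain $\Rs^\infty\Omega$, which is a core for $S$ by Lemma~\ref{3.4}. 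Since $S_Q$ is closed and extends $S$ restricted to $\Rs^\infty\Omega$, its closure contains $S$, so $S\subset S_Q$.

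For the reverse inclusion, I would apply the same reasoning to the commutant Borchers triple. By Theorem~\ref{Borchers}, $(\Rs',U,\Omega)$ is a Borchers triple relative to $-\Ws$, and the matrix $-Q$ is $(-\Ws)$-compatible since $QV_+\subset\Ws$ implies $-QV_+\subset-\Ws$. Theorem~\ref{qtriple} (whose proof is valid verbatim with $\Ws$ replaced by $-\Ws$) then yields a warped Borchers triple $((\Rs')_{-Q},U,\Omega)$; denote its Tomita operator by $\tilde S$. The argument of the preceding paragraph, applied to $(\Rs',U,\Omega)$ with parameter $-Q$, gives $S^*\subset\tilde S$, where $S^*=J\Delta^{-1/2}$ is the Tomita operator of $(\Rs',\Omega)$. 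Now the proof of Theorem~\ref{qtriple} supplied the inclusion $(\Rs')_{-Q}\subset(\Rs_Q)'$, and for von Neumann algebras with a common cyclic and separating vector the Tomita operator is monotone under inclusion (the map $B\Omega\mapsto B^*\Omega$ on the smaller algebra is a restriction of the corresponding map on the larger one, and closures respect such restrictions). Hence $\tilde S\subset S_Q^*$. Combining, $S^*\subset S_Q^*$; together with the adjoint of $S\subset S_Q$, namely $S_Q^*\subset S^*$, this forces $S^*=S_Q^*$ and therefore $S=S_Q$.

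The main obstacle will be the reverse inclusion: one has to verify that the identity $A_Q\Omega=A\Omega$ and Theorem~\ref{qtriple} both survive the substitution $(\Rs,\Ws,Q)\rightsquigarrow(\Rs',-\Ws,-Q)$, and that the monotonicity of Tomita operators applies through the inclusion $(\Rs')_{-Q}\subset(\Rs_Q)'$, which requires $\Omega$ to be cyclic and separating for both of these algebras --- a consequence of the Borchers-triple property of the original triple and of its warped counterpart provided by Theorem~\ref{qtriple}.
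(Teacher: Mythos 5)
Your proof is correct and follows essentially the same route as the paper's: both establish $S\subset S_Q$ by evaluating on the core $\Rs^\infty\Omega$ using $A_Q\Omega=A\Omega$ and $(A_Q)^*\supset(A^*)_Q$, and both obtain the reverse inclusion (in adjoint form) by passing to the commutant triple and the inclusion $(\Rs')_{-Q}\subset(\Rs_Q)'$. The only cosmetic difference is that you interpolate the explicit intermediate Tomita operator $\tilde S$ of $(\Rs')_{-Q}$ together with the monotonicity of Tomita conjugations under inclusion, whereas the paper evaluates $S_Q^*$ directly on $(\Rs')_{-Q}\Omega$; the underlying computation and inputs (equation \eqref{vacuum}, Lemma \ref{symmetric}, Lemma \ref{3.4}) are identical.
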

\begin{proof} Let $S_Q$ be the Tomita conjugation associated with 
the warped triple $({\cal R}_Q, U, \Omega)$ and let $S$ be the
Tomita conjugation associated with $({\cal R}, U, \Omega)$. Since 
$A_Q \in \Rs_Q$ for $A \in \Rs^\infty$, equation \eqref{vacuum}
and Lemma \ref{symmetric} imply 
$$ 
S_Q \, A \Omega = S_Q \, A_Q \Omega = (A_Q)^* \, \Omega = (A^*)_Q \, \Omega 
= A^* \, \Omega = S A \Omega \, .
$$
According to the preceding lemma, $\Rs^\infty \Omega$ is a core for
$S$, hence $S_Q \supset S$. By the Tomita--Takesaki theory
\cite{Ta,Ta2}, the adjoint ${S_Q}^*$ of $S_Q$ is the Tomita conjugation
associated with $((\Rs_Q)^\prime, U, \Omega)$, and similarly
$S^*$ is the Tomita conjugation associated with $(\Rs^\prime, U, \Omega)$.
It was shown in the proof of Theorem \ref{qtriple} that
$(\Rs^\prime)_{-Q} \subset {({\cal R }_Q)}^\prime$. Thus, as
${A^\prime}_{-Q} \in {\Rs^\prime}_{-Q}$ for $A^\prime \in \Rs^{\prime
  \, \infty}$, one obtains by another application of equation 
\eqref{vacuum} and Lemma \ref{symmetric}
$$
{S_Q}^* \, A^\prime \, \Omega = {S_Q}^* \, {A^\prime}_{-Q} \, \Omega
=  ({A^\prime}_{-Q})^* \, \Omega =  (A^{\prime \, *})_{-Q} \, \Omega
= A^{\prime \, *} \, \Omega = S^* \, A^\prime \, \Omega \, .
$$
By the preceding lemma $\Rs^{\prime \, \infty} \, \Omega$ is a core
for $S^*$, hence ${S_{Q}}^* \supset S^*$ and consequently  
$S \supset S_Q$, since both conjugations are closed operators.
Thus $S_Q = S$ and, by the uniqueness of the polar decomposition, the 
desired conclusion follows. 
\end{proof}

     An immediate consequence of this theorem is the observation that 
\begin{align}\label{RQ-commutant}
 {\Rs_Q}' = {\Rs'}_{-Q}\,.
\end{align}
     Indeed, Theorem \ref{Borchers} and Proposition~\ref{2.5}  
imply  $J \Rs_Q J = (J \Rs J)_{-jQj}$ and it is also straightforward to 
verify that $j Q j = Q$ for any $\Ws$--admissible matrix $Q$. 
Since $J = J_Q$, the asserted equation then follows from Tomita--Takesaki
theory.

     Let us discuss now the physical significance of these findings.
As was pointed out in \cite{Bo}, Theorem \ref{Borchers} allows one to use 
the Borchers triple $(\Rs, U, \Omega)$ as a building block for the 
construction of a quantum field theory in two spacetime dimensions. 
Identifying the cone ${\Ws} \subset \RR^2$ defined above with the corresponding 
wedge shaped region in two--dimensional Minkowski space, one interprets 
$\As(\Ws) \doteq \Rs$ as the algebra generated by observables
which are localized in $\Ws$. Moreover, noticing that the transformations 
$\vartheta(t)$, $t \in \RR$, and $j$ introduced in Theorem \ref{Borchers}
have the geometrical meaning of Lorentz boosts and spacetime
reflection, respectively, one can consistently extend the representation
$U$ of the translations $\RR^2$ to a continuous (anti)unitary representation 
of the proper Poincar\'e group $\Ps_+$. It is given by 
$$ 
U(\lambda) \doteq U(x)  J^\sigma  \Delta^{it}  \, ,
\quad \lambda = (x, j^\sigma  \vartheta(t)) \in \Ps_+ \, ,
$$  
where $x \in \RR^2$, $t \in \RR$ and $\sigma \in \{0,1 \}$. 
Thus $J$ represents the PCT--operator. With the help of this representation 
one can define the algebras generated by observables in the 
transformed wedge regions~$\lambda \Ws $, $\lambda \in \Ps_+$ by setting
$$ 
\As(\lambda  \Ws) \doteq U(\lambda) \Rs U(\lambda)^{-1} \, , \quad
\lambda \in \Ps_+ \, .
$$
This definition is consistent, since the stability group of the 
wedge $\Ws$ in $\Ps_+$ consists of the boosts $\vartheta(t)$,
$t \in \RR$, whose corresponding automorphic action leaves the 
algebra $\Rs$ invariant according to Tomita--Takesaki theory. The resulting 
assignment $\Ws_{.} \mapsto \As(\Ws_{.})$ of wedge regions to algebras defines 
a net (pre--cosheaf) on $\RR^2$. It is Poincar\'e covariant by
construction and causal. In fact, since $j$ maps the 
wedge $\Ws$ onto its spacelike complement $\Ws^\prime=-\Ws$, one has
$$
\As(\Ws^\prime) = U(j) \As(\Ws) U(j)^{-1} = J \Rs J = \Rs^\prime =
\As(\Ws)^\prime \, ,
$$
where the third equality follows from Tomita--Takesaki theory.
So the observables in spacelike separated wedges commute, in 
accordance with the principle of Einstein causality.
In this way any Borchers triple defines a relativistic quantum 
field theory in two spacetime dimensions, \cf \cite{Bo} for more 
details.

     The upshot of these considerations is the insight that, as a consequence 
of the preceding three theorems, the warped triples $(\Rs_Q, U, \Omega)$ 
generate in the same manner another causal and covariant net  
\mbox{$\Ws_{.} \mapsto \As_Q(\Ws_{.})$} by setting 
$$ 
\As_Q(\lambda  \Ws) \doteq U(\lambda) \Rs_Q U(\lambda)^{-1} \, , \quad
\lambda \in \Ps_+ \, .
$$
Thus the warping procedure provides a tool for the consistent
deformation of two--dimensional quantum field theories 
without changing the underlying representation of the Poincar\'e
group. We shall further elaborate on this observation in the next  
section.

\section{\hspace*{-5mm} 
Warped Convolutions in  Quantum Field Theory}  \label{application}     
\setcounter{equation}{0}

     In this section we examine applications of the warping procedure to
relativistic quantum field theories in more than two spacetime
dimensions. Thus we interpret $\RR^n$,~$n > 2$, as 
Minkowski space equipped with the standard metric fixed by the 
Lorentz product, \cf footnote \ref{Lorentz}. The
identity component of its isometry group, the 
Poincar\'e group, is the semidirect product  $\Pid = \RR^n \rtimes \Lid$ 
of the spacetime translations $\RR^n$ and the proper orthochronous Lorentz
transformations $\Lid$.

     In a manner similar to  the preceding section, we 
describe the theories in the algebraic setting of local quantum 
physics \cite{Haag} by a qualified version of the concept of 
Borchers triple. Additional constraints arise since, 
on the one hand, the group generated by the translations, along with the 
boosts and reflection emerging from the modular structure 
of the triple,  does not act transitively on the set of wedge regions 
in $\RR^n$ if $n > 2$. The smallest subgroup of the Poincar\'e group which 
fulfills this condition is  $\Pid$. So one needs from the outset an 
action of this group on the underlying algebra $\Rs$, which one interprets 
again as the algebra of observables localized  in the given wedge 
region 
$\Ws \doteq \{ x = (x_0, x_1, \ldots, x_{n-1}) \in \RR^n : x_1 \geq \vert
x_0 \vert \} $. 
On the other hand, one must  ensure that this action is
consistent with the principle of Einstein causality, 
according to which observables in spacelike separated regions
must  commute. The resulting consistency conditions can be expressed
in terms of the triple in an evident manner,  \cf
\cite[Proposition 7.3.22]{BaWo}. They lead to the concept of 
a causal Borchers triple.
\begin{definition} \label{cBorchers}
A causal Borchers triple $(\Rs, U, \Omega)$ relative to $\Ws$ consists of 
\begin{itemize}
\item[(a)] a von Neumann algebra $\Rs \subset \BH$, 
\item[(b)] a weakly continuous unitary representation $U$ of $\Pid$
such that, $\lambda \in \Pid$,   
\begin{align*}
U(\lambda) \Rs U(\lambda)^{-1} & \subset \Rs \ \ \mbox{if} \ \ \lambda \Ws \subset
\Ws  \, , \\
U(\lambda) \Rs U(\lambda)^{-1} & \subset \Rs^\prime 
\ \ \mbox{if} \ \ \lambda \Ws \subset \Ws^\prime,
\end{align*}
and the spectrum of the abelian subgroup $U \rest \RR^n$ of the spacetime 
translations is contained in the closed forward lightcone $V_+$,
\item[(c)] and a unit vector $\Omega \in \Hs$, 
describing the vacuum, which is invariant under the
  action of $U$ and is cyclic and separating for $\Rs$.
\end{itemize}
\end{definition}

\noindent \textbf{Remark} \ In two spacetime dimensions any Borchers triple
determines a causal Borchers triple by the modular construction in the 
preceding section. As there is no element in $\Pid$ which maps the 
wedge $\Ws$ into its spacelike (causal) complement $\Ws^\prime$,
the second constraint in condition (b) is trivially satisfied in this case.
In order to flip the wedge one needs the spacetime reflection $j$,
which is an element of $\Ps_+ \supset \Pid$. As we have seen, its corresponding 
action on $\Rs$ is consistent with Einstein causality as a consequence of 
modular theory. In higher dimensions one either has to posit 
causality from the outset, as we do, or one has to impose additional 
constraints on the modular structure of the triple which imply   
it, \cf \cite{BrGuLo2,BDFS,BuSuIII,Gu,Bo1}.

\vspace*{1mm}
     With the above input one can define the algebras corresponding to 
arbitrary regions in $\RR^n$ in a straightforward manner, which we 
briefly recall. Making use of the fact that $\Pid$ acts transitively on the 
wedge regions, one begins with the wedge algebras by setting 
\begin{equation} \label{wedgealgebras}   
\cA(\lambda \cW) \doteq U(\lambda) \Rs U(\lambda)^{-1} \, ,
\quad \lambda \in \Pid \, .
\end{equation}
This definition is consistent, since $\lambda_1 \cW = \lambda_2 \cW$ implies 
that the transformation $\lambda_2^{-1}  \lambda_1$ is an element of the 
stability group of $\cW$, and $\cR$ is stable under the adjoint action of the 
corresponding unitary operators according to the first part of
condition (b). Similarly, if $\lambda_1 \cW \subset \lambda_2 \cW$, 
it follows  that 
$U(\lambda_2^{-1}  \lambda_1) \cR U(\lambda_2^{-1}  \lambda_1)^{-1} \subset \cR$,
hence  $\cA(\lambda_1 \cW) \subset \cA(\lambda_2 \cW)$. Thus the 
family of wedge algebras complies with the condition of isotony. 
The wedge algebras also transform covariantly under the adjoint
action of the representation $U$ by their very definition.
Moreover, if $\lambda_1 \cW \subset (\lambda_2 \cW)^\prime$, then  
$U(\lambda_2^{-1}  \lambda_1) \cR U(\lambda_2^{-1}  \lambda_1)^{-1} 
\subset \cR^\prime$
according to the second part of condition (b). Hence
$\cA(\lambda_1 \cW) \subset \cA(\lambda_2 \cW)^\prime$
in accordance with Einstein causality. The algebras corresponding to 
arbitrary causally closed convex regions $\cO \subset \RR^n$ are determined 
from the wedge algebras $\cA(\cW_{\cdot})$ by setting  
$\cA(\cO) = \bigcap_{\cW_{\cdot} \supset \cO} \, \cA(\cW_{\cdot})$. 
It is apparent that the resulting assignment $\cO \mapsto \cA(\cO)$ 
inherits the structure of a causal and covariant net on $\RR^n$, 
\ie of a local quantum theory \cite{Haag}. It should be noted, however, that 
within the present general framework the algebras
corresponding to bounded regions may happen to be trivial. 
We shall comment on the physical significance of this 
possibility at the end of this section.

     We now want to use our warping procedure to deform causal Borchers 
triples. Additional constraints on the underlying skew symmetric matrices arise
due to the extra conditions imposed on such triples. In fact, $Q$ must 
have the following form with respect to the coordinates chosen in the
specification of the wedge $\cW \subset \RR^n$:  
\begin{equation} \label{q1}
Q \doteq 
\left( \begin{array}{ccccc}  0 & \zeta & 0 & \cdots & {0} \\ 
                       \zeta &    0   & 0 & \cdots & {0} \\
                            0 &    0   & 0 & \cdots & {0} \\    
                       \vdots & \vdots & \vdots & \ddots &  \vdots \\
                       {0} &    {0}   & {0} & \cdots & {0} \end{array} \right)
\end{equation}
for fixed $\zeta \geq 0$. In the special but physically most 
interesting case of $n = 4$ dimensions, one can admit matrices of the more 
general form
\begin{equation*}  \label{q2}
Q \doteq 
\left( \begin{array}{cccc}  0 & \zeta & 0 & 0 \\ 
                          \zeta   &  0 & 0 & 0 \\
                            0 &    0   & 0 & \eta \\
                            0 &    0   & -\eta & 0 \end{array} \right)
\end{equation*}
for fixed $\zeta \geq 0$, $\eta \in \RR$. Note that these matrices are 
skew symmetric with respect to the Lorentz product. 
The following facts pointed out in \cite{GrLe} are crucial for the 
consistent deformation of the triples and, in turn, determine the
choice of the admissible matrices~$Q$ \ \cite[Lemma 2]{GrLe}. 

\pagebreak
\begin{enumerate}
\item[(i)] \ $Q \, V_+ \subset \cW $. 
\item[(ii)] 
\ Let $\lambda = (x,\Lambda) \in \Pid$ be such that $\lambda \cW \subset
\cW $. Then $\Lambda Q \Lambda^T = Q$. 
\item[(iii)] 
\ Let $\lambda = (x, \Lambda) \! \in \! \Pid$ 
be such that $\lambda \, \cW \! \subset \! \cW^\prime $. Then 
$\Lambda Q \Lambda^{T} \! = \! - Q$.
\end{enumerate}
Any matrix $Q$ with these properties is said to be 
$\cW$--{\em admissible} (qualifying the notion of $\cW$--compatibility
introduced in the preceding section). 

     Given a causal Borchers triple $(\cR, U, \Omega)$ relative to $\cW$,
we proceed as in the preceding section and define for fixed $\cW$--admissible 
matrix $Q$ the warped von Neumann algebra 
$$ \cR_Q \doteq \{ A_Q : 
A \in \cR^\infty \}^{\prime \prime} \, . $$
The corresponding warped triple $(\cR_Q, U, \Omega)$
is again a causal Borchers triple. For the proof of this fact we make use of 
Proposition \ref{2.5}, according to which
$$ 
U(\lambda) A_Q U(\lambda)^{-1} = (U(\lambda) AU(\lambda)^{-1})_{\Lambda Q \Lambda^T} \, ,
\quad \lambda = (x, \Lambda) \in \Pid \, ,
$$
for all  $A \in \cC^\infty$. Taking into account properties (ii) and 
(iii) of $Q$ given above, we conclude that 
\begin{align*}
U(\lambda) \, \cR_Q \, U(\lambda)^{-1} = (U(\lambda) \, \cR \, U(\lambda)^{-1})_Q
\subset \cR_Q \quad & \mbox{if}  \quad \lambda \cW \subset \cW \, , \\
U(\lambda) \, \cR_Q \, U(\lambda)^{-1} 
= (U(\lambda) \, \cR \, U(\lambda)^{-1})_{-Q}
\subset (\cR^\prime)_{-Q} \quad & \mbox{if}  \quad \lambda \cW \subset \cW^\prime \, .
\end{align*}
But from equation \eqref{RQ-commutant} one has  
$(\cR^\prime)_{-Q} = (\cR_Q)^\prime$; \ hence the warped triple satisfies condition 
(b) in Definition \ref{cBorchers}. In the proof of Theorem \ref{qtriple}, it 
was shown that $\Omega$ is cyclic and separating for $\cR_Q$,  
so the triple also complies with condition~(c).
\begin{theorem} \label{net} 
Let $(\cR, U, \Omega)$ be a causal Borchers triple relative to $\cW$
and let $Q$ be a $\cW$--admissible matrix. The corresponding warped
triple $(\cR_Q, U, \Omega)$ is again a causal Borchers triple 
relative to  $\cW$.
\end{theorem}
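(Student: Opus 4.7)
The plan is to check each of the three defining conditions of Definition \ref{cBorchers} for the warped triple $(\cR_Q, U, \Omega)$, leaning on Proposition \ref{2.5}, the admissibility properties (i)--(iii) of $Q$, and the results already obtained for general Borchers triples in Section \ref{triple}. Condition (a) is immediate: $\cR_Q = \{A_Q : A \in \cR^\infty\}^{\prime\prime}$ is a von Neumann algebra on $\Hs$ by construction. The spectrum condition in (b) is also inherited trivially, since the representation $U$ is untouched by warping.

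For the covariance parts of (b), I would start from Proposition \ref{2.5}: for $\lambda = (x,\Lambda) \in \Pid$ and $A \in \cC^\infty$,
$$U(\lambda) A_Q U(\lambda)^{-1} = (U(\lambda) A U(\lambda)^{-1})_{\Lambda Q \Lambda^T}.$$
If $\lambda \cW \subset \cW$, admissibility (ii) gives $\Lambda Q \Lambda^T = Q$ while condition (b) for the undeformed triple gives $U(\lambda) A U(\lambda)^{-1} \in \cR \cap \cC^\infty$ (smoothness is preserved under conjugation by the fixed unitary $U(\lambda)$), so the right-hand side lies in $\cR_Q$. If instead $\lambda \cW \subset \cW^\prime$, admissibility (iii) gives $\Lambda Q \Lambda^T = -Q$ and $U(\lambda) A U(\lambda)^{-1} \in \cR^\prime \cap \cC^\infty$, placing the right-hand side in $(\cR^\prime)_{-Q}$. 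The identity \eqref{RQ-commutant}, $(\cR^\prime)_{-Q} = (\cR_Q)^\prime$, then closes the argument.

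For condition (c), $\Omega$ is invariant under $U$ since $U$ is unchanged. Cyclicity and the separating property for $\cR_Q$ were effectively already established in the proof of Theorem \ref{qtriple}: cyclicity follows from the fact $A_Q \Omega = A\Omega$ (equation \eqref{vacuum}) combined with density of $\cR^\infty \Omega$ in $\Hs$, obtained by the smoothing construction in Section \ref{triple}. The separating property comes from admissibility (i), $Q V_+ \subset \cW$, which via Proposition \ref{2.4} yields $[A_Q, A^\prime_{-Q}] = 0$ for $A \in \cR^\infty$, $A^\prime \in \cR^{\prime\,\infty}$; hence $(\cR^\prime)_{-Q} \subset (\cR_Q)^\prime$, and $\Omega$ is cyclic for the former algebra by equation \eqref{vacuum}, hence \emph{a fortiori} for the latter.

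The only point requiring mild care, rather than a real obstacle, is that the covariance inclusions must be lifted from the generators $A_Q$ to the whole von Neumann algebra $\cR_Q$. This is automatic: conjugation by $U(\lambda)$ is a weakly continuous ${}^*$-automorphism of $\BH$ and therefore maps bicommutants to bicommutants, preserving inclusions established on generating sets. In sum, the proof is essentially a bookkeeping exercise assembling the ingredients developed in Sections \ref{strict} and \ref{triple}; the three admissibility conditions on $Q$ have been tailored precisely so that properties (b) and (c) transport through the warping.
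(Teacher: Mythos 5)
Your proof is correct and follows essentially the same route as the paper: Proposition~\ref{2.5} combined with admissibility conditions (ii), (iii), the identity $(\cR^\prime)_{-Q} = (\cR_Q)^\prime$ from equation~\eqref{RQ-commutant}, and the cyclicity/separating argument already carried out in the proof of Theorem~\ref{qtriple} (applicable since a $\cW$-admissible $Q$ is in particular $\cW$-compatible by property (i)). Your explicit remark about lifting the covariance inclusions from the generating set $\{A_Q : A\in\cR^\infty\}$ to the bicommutant is a detail the paper leaves implicit, but it is the same argument.
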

 
    So the deformations induced by the warped convolutions 
are consistent with the basic principles of local quantum physics. 
It is noteworthy that also certain more specific features persist under 
these deformations, such as the physically significant 
property of wedge duality. This property can be encoded 
into a maximality condition on the Borchers triple, which implies that
the underlying algebra cannot be enlarged without coming into
conflict with causality. 
\begin{definition}
Let $(\cR, U, \Omega)$ be a causal Borchers triple relative to $\cW$.
The triple is said to be maximally causal   
if \ $U(\lambda) \cR U(\lambda)^{-1} = \cR^\prime$ 
for any $\lambda \in \Pid$ such that 
$\lambda \cW = \cW^{\, \prime}$.  
\end{definition} 
It immediately follows from the definition of the wedge algebras that,    
under these circumstances,   
$\cA(\cW_{\cdot}{}^\prime) = \cA(\cW_{\cdot})^\prime$ 
for all wedges $\cW_{\cdot}$, \ie wedge duality obtains.
\begin{prop}
Let $(\cR, U, \Omega)$ be a maximally causal Borchers triple relative
to $\cW$ and let $Q$ be a $\cW$--admissible matrix.
Then the corresponding warped Borchers triple $(\cR_Q, U, \Omega)$ 
is also maximally causal. 
\end{prop}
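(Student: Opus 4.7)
The plan is to assemble the proof from three ingredients already at hand: Proposition \ref{2.5} on the covariance of warping under (anti)unitary transformations, property (iii) of $\cW$-admissible matrices (that boosts sending $\cW$ to $\cW'$ flip the sign of $Q$), and the relation \eqref{RQ-commutant} identifying the commutant of a warped algebra with the warping of the commutant by $-Q$. Since Theorem \ref{net} already tells us that $(\cR_Q, U, \Omega)$ is a causal Borchers triple, I only need to verify the single additional equality $U(\lambda)\,\cR_Q\,U(\lambda)^{-1} = (\cR_Q)'$ for any $\lambda = (x,\Lambda) \in \Pid$ with $\lambda\cW = \cW'$.

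First I would pick such a $\lambda$ and apply Proposition \ref{2.5} to every generator $A_Q$ with $A \in \cR^\infty$ to obtain $U(\lambda)\,A_Q\,U(\lambda)^{-1} = (U(\lambda) A U(\lambda)^{-1})_{\Lambda Q \Lambda^T}$; since $Q$ is $\cW$-admissible and $\lambda\cW \subset \cW'$, property (iii) gives $\Lambda Q \Lambda^T = -Q$. Passing to the double commutant and using that conjugation by a unitary is a von Neumann algebra automorphism, this yields
\[
U(\lambda)\,\cR_Q\,U(\lambda)^{-1} = \bigl(U(\lambda)\,\cR\,U(\lambda)^{-1}\bigr)_{-Q}.
\]
Now invoke the hypothesis of maximal causality of the undeformed triple: $U(\lambda)\,\cR\,U(\lambda)^{-1} = \cR'$. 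Substituting gives $U(\lambda)\,\cR_Q\,U(\lambda)^{-1} = (\cR')_{-Q}$, and a single application of \eqref{RQ-commutant} identifies the right-hand side with $(\cR_Q)'$, completing the proof.

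There is no genuine obstacle here; the only mild subtlety is to make sure that the identity from Proposition \ref{2.5}, which is stated for the smooth generators $A \in \cR^\infty$, propagates to the full warped algebra $\cR_Q = \{A_Q : A \in \cR^\infty\}''$. This is automatic because $\text{Ad}\,U(\lambda)$ is weakly continuous and preserves double commutants, so the equality of generating sets $\{U(\lambda) A_Q U(\lambda)^{-1}\} = \{(U(\lambda)AU(\lambda)^{-1})_{-Q}\}$ lifts to equality of the generated von Neumann algebras.
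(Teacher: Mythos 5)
Your proof is correct and follows essentially the same route as the paper: apply Proposition \ref{2.5} together with property (iii) of $\cW$-admissible matrices to get $U(\lambda)\,\cR_Q\,U(\lambda)^{-1} = (U(\lambda)\,\cR\,U(\lambda)^{-1})_{-Q}$, invoke maximal causality of the undeformed triple, and finish with equation \eqref{RQ-commutant}. The extra remark about the passage from generators to the generated von Neumann algebra is a sound (and harmless) piece of due diligence that the paper leaves implicit.
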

\begin{proof}
Let $\lambda\in\Pid$ with $\lambda\Ws=\Ws'$. Then property (iii) of the 
$\Ws$--admissible matrix $Q$, Proposition~\ref{2.5} and the maximality 
condition imply that
$$ U(\lambda) \, \cR_Q \, U(\lambda)^{-1}
= (U(\lambda) \, \cR \, U(\lambda)^{-1})_{-Q}
= \cR^\prime{}_{-Q} \, .$$ 
Equation \eqref{RQ-commutant} completes the proof. 
\end{proof}

     Let us turn now to the question whether the deformed Borchers triples 
generate new theories. It is apparent that equivalent triples, as defined 
below, give rise to isomorphic nets of observable algebras and therefore 
must be identified.
\begin{definition} Let $(\Rs_1, U_1, \Omega_1)$ and $(\Rs_2, U_2, \Omega_2)$
be two causal Borchers triples. The triples are equivalent if there exists
an isometry $V : \Hs_1 \rightarrow \Hs_2$ between the underlying Hilbert 
spaces such that $V \Rs_1 = \Rs_2 V$, $V U_1(\lambda) = U_2(\lambda) V$ 
for all $\lambda \in \Pid$, and $V \Omega_1 = \Omega_2$.
\end{definition}

\noindent Note that the algebras encountered in Borchers triples 
are generically isomorphic to the unique hyperfinite factor of type III${}_1$ 
and hence to each other. 
Thus the nontrivial requirement in the definition is the 
condition that the isometry $V$ intertwines, besides the 
algebras, the respective representations 
of the Poincar\'e group.  

     Although one may expect that the warped Borchers triples are
generally inequivalent to the original ones, there does not yet
exist an  argument to that effect. It has been shown in \cite{GrLe,BuSu2} 
that in theories describing massive particles the elastic scattering matrix 
changes under these deformations, thereby providing a rather indirect proof 
that the respective Borchers triples must be inequivalent. We present here 
an alternative argument, covering a large family of theories in more than
two spacetime dimensions. It is based on the following 
lemma, whose proof is given in the appendix. There we also comment on the 
additional physically meaningful spectral constraint on the translations
made in the hypothesis.

\begin{lemma} \label{cyclic}
Let $(\Rs, U, \Omega)$ be a causal Borchers triple relative to 
$\Ws \subset \RR^n$, $n \geq 3$, such that 
$\mbox{sp} \ U \! \upharpoonright \! \RR^n$ contains some point in the 
interior of $V_+$ and let $Q \neq 0$ be a $\Ws$--admissible matrix of the
generic form (\ref{q1}). Then $\Omega$ is cyclic for at most one of the algebras 
$\bigcap_{\lambda \in \Ns}  \, \alpha_\lambda(\Rs)$
and $\bigcap_{\lambda \in \Ns}  \, \alpha_\lambda(\Rs_Q)$,
where $\Ns$ is any given neighborhood of the identity in~$\Pid$.
\end{lemma}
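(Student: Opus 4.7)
\emph{Strategy.} The plan is to argue by contradiction: assume $\Omega$ is cyclic both for $\cN\doteq\bigcap_{\lambda\in\Ns}\alpha_\lambda(\Rs)$ and for $\cN_Q\doteq\bigcap_{\lambda\in\Ns}\alpha_\lambda(\Rs_Q)$, and derive a contradiction. The guiding picture is that warping smears operators over the set $Q\cdot\mathrm{sp}\,U$, which by the hypotheses (namely $Q\neq 0$ of the form (\ref{q1}), and $\mathrm{sp}\,U$ meeting $\mathrm{int}\,V_+$) contains a neighborhood of a non-zero spacelike vector $Qp_0\in\Ws$. Thus elements of $\Rs_Q$ are genuinely smeared along nontrivial spacelike translations, which ought to be incompatible with the sharp ``inner-core'' localization encoded by the intersection $\cN_Q$ while, at the same time, $\Omega$ is already cyclic for the undeformed inner-core $\cN$.

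\emph{Steps.} First, pass to a smaller symmetric neighborhood $\Ns_0$ with $\Ns_0\cdot\Ns_0\subset\Ns$. By Proposition \ref{2.5} and the identity \eqref{RQ-commutant}, for every $\mu=(a,\Lambda)\in\Ns_0$ one has $\alpha_\mu(\Rs_Q)'=\alpha_\mu(\Rs')_{-\Lambda Q\Lambda^T}$; hence every $B\in\cN_Q$ commutes with $\alpha_\mu(A')_{-\Lambda Q\Lambda^T}$ for all $A'\in\Rs'$ and all such $\mu$. Varying $\mu$ yields a continuous family of such commutations with underlying matrices filling out an open neighborhood of $-Q$. Second, the spectrum condition delivers analyticity: for any $X,Y\in\Bs(\Hs)$ the function $x\mapsto\langle X\Omega,\alpha_x(Y)\Omega\rangle$ extends holomorphically to the tube $\RR^n+i\,\mathrm{int}\,V_+$. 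Using the commutations just established together with the identity $A'_{-Q}\Omega=A'\Omega$ (Lemma \ref{symmetric}) and the composition law of Proposition \ref{group}, I would show that for fixed $B\in\cN_Q$ and suitable $A\in\cN$, $A'\in\Rs'$, the correlator
\[
F(x)\doteq\langle B^*\Omega,\alpha_x(AA')\Omega\rangle-\langle (AA')^*\Omega,\alpha_x(B)\Omega\rangle
\]
vanishes on an open set of $x$ and hence, by analytic continuation, on all of $\RR^n$. Third, combined with cyclicity of $\Omega$ for $\cN$ and the cyclicity of $\Omega$ for $\Rs'$ (the latter via separation of $\Omega$ for $\Rs$), the resulting vanishing collapses to $B\Omega=0$. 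Since $(\Rs_Q,U,\Omega)$ is a causal Borchers triple by Theorem \ref{net}, $\Omega$ is separating for $\Rs_Q\supset\cN_Q$, so $B=0$. This holds for every $B\in\cN_Q$, giving $\cN_Q=\{0\}$ and contradicting the assumed cyclicity of $\Omega$ for $\cN_Q$ in the infinite-dimensional $\Hs$.

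\emph{Main obstacle.} The decisive difficulty is the open-set vanishing in the second step. The problem is structural: $\cN$ is a \emph{localized} subalgebra, while $B\in\cN_Q$ lives in a deformed, genuinely non-local algebra whose elements are spectral integrals, so one cannot simply invoke causal commutation in a region of spacetime. The transfer must exploit the continuous variation of $\Lambda Q\Lambda^T$ as $\Lambda$ ranges near the identity in $\Lid$, combined with the warping group law and the vacuum identities from Lemma \ref{symmetric}, to convert the warped commutations of Step 1 into ordinary commutators applied to $\Omega$ on a full-dimensional open set of translation parameters. This is precisely where the hypothesis $n\geq 3$ enters: the transverse rotations in the $x_2,\ldots,x_{n-1}$--plane move $\Lambda Q\Lambda^T$ off $Q$ in a genuinely multi-parameter way, producing the open-set commutations required for the Reeh--Schlieder continuation to close; for $n=2$ this freedom is simply absent and the argument as stated does not apply, consistent with the sharper modular control that was available in Section \ref{triple}.
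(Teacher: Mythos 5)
Your proposal sets up the right contradiction and correctly identifies the tools (Prop.~\ref{2.5}, eq.~\eqref{RQ-commutant}, the vacuum identity~\eqref{vacuum}, Prop.~\ref{group}, the spectrum condition), but the pivotal mechanism of the proof is missing, and your own ``main obstacle'' paragraph is precisely where it is needed. The paper does not prove the vanishing of a correlator on an open set of $x$ and then analytically continue. It proceeds differently: assuming $\Omega$ cyclic for both intersections, one first shows $\Omega$ is \emph{separating} for $\bigvee_{\lambda\in\Ns}\alpha_\lambda(\Rs_Q)$ (via the causal Borchers triple property of $(\Rs_Q,U,\Omega)$). Since $A_{\Lambda Q\Lambda^T}$ and $A_Q$ both lie in that von Neumann algebra and $A_{\Lambda Q\Lambda^T}\Omega=A\Omega=A_Q\Omega$, the separating property upgrades the vector identity to the operator identity $A_{\Lambda Q\Lambda^T}=A_Q$ for all $\lambda\in\Ns$. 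Proposition~\ref{group} then gives $A_{\Lambda Q\Lambda^T-Q}=A$, and because the sums of the matrices $\Lambda Q\Lambda^T-Q$ for $\lambda\in\Ns$ exhaust all integer multiples of $Q$, one gets $A_{mQ}=A$ for all $m\in\ZZ$. This escalation to unbounded multiples of $Q$ is what your proposal lacks: it is the step that converts the small-neighborhood freedom into something global.

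The second step you do not have is the geometric push-out. With $A_{mQ}=A$ and $A'_{-mQ}=A'$ in hand, one takes $m\to\infty$: for any compact $\Gamma\subset\mathrm{int}\,V_+$, the admissibility $Q\Gamma\subset\mathrm{int}\,\Ws$ guarantees that the shifted wedges $\Ws+x+mQu$ and $\Ws'+y-mQv$ become spacelike separated for $u\in\Gamma$, $v\in V_+$ and large $m$. Running the Proposition~\ref{2.4}-type spectral-support argument then gives $E(\Gamma)\,[\alpha_x(A),\alpha_y(A')]=0$ for all $x,y\in\RR^n$, and an appeal to Araki's theorem (\cite[Theorem~4.6]{Ar}) yields $E(\Gamma)A\Omega=0$, hence $E(\Gamma)=0$, contradicting the spectrum hypothesis. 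That is a different contradiction from the $\cN_Q=\{0\}$ you aim for, and it exploits the hypothesis $\mathrm{sp}\,U\cap\mathrm{int}\,V_+\neq\emptyset$ in a way your analytic-continuation sketch does not. In short: your Step 2 is an unsubstantiated placeholder precisely where the proof's two real ideas (separating property $\Rightarrow$ operator identity $\Rightarrow$ $A_{mQ}=A$, then large-$m$ causal disjointness on $E(\Gamma)\Hs$) must go, so the argument as written does not close.
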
 

     The following observation about the relation between Borchers
triples and their warped descendants is an immediate consequence of this 
result.

\pagebreak
\begin{prop} Let $(\Rs, U, \Omega)$ be a causal Borchers triple 
relative to $\Ws \subset \RR^n$, $n \geq 3$, 
such that $\mbox{sp} \ U \! \upharpoonright \! \RR^n$ contains
some point in the interior of $V_+$ and let 
$\Omega$ be cyclic for $\bigcap_{\lambda \in \Ns}  \,
\alpha_\lambda(\Rs)$ for some neighborhood $\Ns$ of the identity 
in~$\Pid$. Then  $(\Rs_Q, U, \Omega)$ and  $(\Rs, U, \Omega)$ 
are inequivalent for any $\Ws$--admissible matrix $Q \neq 0$ of the
generic form (\ref{q1}). 
\end{prop}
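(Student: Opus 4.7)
The plan is to argue by contradiction, transporting the cyclicity hypothesis across a hypothetical equivalence and then invoking Lemma \ref{cyclic} to derive an inconsistency. The key point is that the notion of equivalence of causal Borchers triples is strong enough to preserve every ingredient appearing in the statement of that lemma: the wedge algebra, the representation of $\Pid$, and the vacuum vector. Since the lemma rules out simultaneous cyclicity of $\Omega$ for the two intersections $\bigcap_{\lambda\in\Ns}\alpha_\lambda(\Rs)$ and $\bigcap_{\lambda\in\Ns}\alpha_\lambda(\Rs_Q)$, any such equivalence must fail.

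More concretely, I would suppose there exists an isometry $V:\Hs\to\Hs$ (which, since both triples live on the same Hilbert space and $V$ is required to intertwine cyclic and separating vacuum vectors, can be taken to be unitary) with $V\Rs=\Rs_Q V$, $VU(\lambda)=U(\lambda)V$ for all $\lambda\in\Pid$, and $V\Omega=\Omega$. The first two intertwining relations immediately imply, for every $\lambda\in\Pid$,
\begin{equation*}
V\,\alpha_\lambda(\Rs)\,V^{-1}=VU(\lambda)\Rs U(\lambda)^{-1}V^{-1}=U(\lambda)\,V\Rs V^{-1}\,U(\lambda)^{-1}=\alpha_\lambda(\Rs_Q),
\end{equation*}
so that $V$ maps the intersection over any neighborhood $\Ns$ of the identity in $\Pid$ onto its warped counterpart,
\begin{equation*}
V\Bigl(\bigcap_{\lambda\in\Ns}\alpha_\lambda(\Rs)\Bigr)V^{-1}=\bigcap_{\lambda\in\Ns}\alpha_\lambda(\Rs_Q).
\end{equation*}

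Combining this with $V\Omega=\Omega$, cyclicity of $\Omega$ for the undeformed intersection would transfer to cyclicity of $\Omega$ for the warped intersection. Together with the hypothesis on $(\Rs,U,\Omega)$ this would put us in the situation where $\Omega$ is cyclic for both algebras appearing in Lemma \ref{cyclic}, which is exactly what that lemma forbids under the stated spectral assumption on $U\!\upharpoonright\!\RR^n$ and for $Q\neq 0$ of the form (\ref{q1}). This contradiction shows that no equivalence $V$ can exist, and the proposition follows. Since all the intertwining and cyclicity bookkeeping is immediate once $V$ is assumed, the substantive input is entirely contained in Lemma \ref{cyclic}; the present argument is essentially a transport-of-structure observation, so I do not anticipate any genuine obstacle beyond correctly recording that $V$ commutes with the full Poincaré action and hence with the formation of the intersection over $\Ns$.
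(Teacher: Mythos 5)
Your argument is correct and is essentially the same as the paper's: assume an intertwining unitary $V$, observe that the intertwining relations force $V\bigl(\bigcap_{\lambda\in\Ns}\alpha_\lambda(\Rs)\bigr)V^{-1}=\bigcap_{\lambda\in\Ns}\alpha_\lambda(\Rs_Q)$, transport cyclicity of $\Omega=V\Omega$, and contradict Lemma \ref{cyclic}. The only difference is that you spell out why the isometry $V$ in the definition of equivalence may be taken unitary (its range is invariant under $\Rs_Q$ and contains the cyclic vector $\Omega$), a point the paper passes over silently.
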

\begin{proof}
Let $V$ be some unitary operator which intertwines the
two triples. 
Then $V \, \bigcap_{\lambda \in \Ns} \, \alpha_\lambda(\Rs) \ V^{-1}  =  
\bigcap_{\lambda \in \Ns} \, \alpha_\lambda(V \, \Rs \,  V^{-1})  =
\bigcap_{\lambda \in \Ns}  \, \alpha_\lambda(\Rs_Q) $. Hence $\Omega =
V \Omega$ is cyclic for the latter algebra as well, in conflict with
the preceding lemma.
\end{proof}

     In the familiar examples of quantum field theories which
have been rigorously constructed so far, such as (generalized) free field 
theories in physical spacetime and interacting field theories in lower 
dimensions  \cite{GlJa}, the vacuum~$\Omega$ is known to be  
cyclic for the algebras affiliated with compact spacetime regions
(Reeh--Schlieder property). Thus, applying the warping procedure to 
the corresponding Borchers triples, one ends up with inequivalent, \ie new 
theories. However, the local algebras in the deformed theories no longer 
have the Reeh--Schlieder property, according to the preceding lemma. 
In fact, even for the algebras affiliated with pointed 
spacelike cones, which are of relevance in gauge theory \cite{BuFr},
$\Omega$ is not cyclic. Thus in more than two spacetime dimensions
the warped algebras can, in general, not be interpreted in terms of some 
underlying point fields.

     Yet, as was pointed out in \cite{BuSu2}, the warped
theories admit a meaningful physical interpretation with respect to 
noncommutative Minkowski space (Moyal space). In fact, the first examples of 
such theories appeared in that setting \cite{GrLe}. We recall that 
noncommutative Minkowski space is described by  
coordinate operators $X_\mu$ satisfying the commutation relations
$ [X_\mu , X_\nu ] = i \, \theta_{\mu \nu} \, 1$, 
where $\theta_{\mu \nu} = - \theta_{\nu \mu}$ are real constants,
$\mu, \nu = 0, 1, \dots, n-1$. It is straightforward to verify that 
in more than two dimensions there always exist certain 
lightlike coordinates $X_\pm$ which commute and thus can be simultaneously 
diagonalized. Hence it should be possible to localize fields and observables  
with respect to these coordinates, thereby dislocalizing
them in the remaining ones. In particular, the wedges $\cW$ 
considered here are possible localization regions in noncommutative 
Minkowski space, whereas bounded regions and pointed spacelike cones are not. 
On the basis of this interpretation, the algebras corresponding to the latter 
regions are expected to be trivial, in line with the preceding lemma. 
Now, apart from the wedges, there are other cylindrical regions (such as the 
intersections of opposite wedges) which are possible localization regions. 
It is therefore an intriguing question whether the corresponding 
algebras in the warped theories are nontrivial. An affirmative
answer would support their interpretation in terms of noncommutative 
Minkowski space. We hope to return to this problem elsewhere.

\section{Conclusions} \label{conclusions}
\setcounter{equation}{0}

     In this investigation we have clarified the relation between the 
warped convolution of $C^*$--dynamical systems, proposed in \cite{BuSu2}, 
and the strict deformation of such systems, established by Rieffel \cite{Ri}.
It turned out that, for fixed deformation matrix $Q$, the warped convolution 
induces a faithful covariant representation of the corresponding Rieffel 
algebra, if the original dynamical system is given in a faithful 
covariant representation. Thus, from this point of view, the warped convolution 
provides little new information. Yet, whereas the Rieffel deformations were 
introduced for the purpose of quantizing classical systems with 
Poisson bracket given by a fixed $Q$, warped convolutions were conceived for 
the deformation of quantum field theories. Within the latter framework one must
deal simultaneously with a multitude of different deformation matrices $Q$
and establish relations between the resulting operators.
The warping procedure is more appropriate in this context, since all warped 
deformations of a given dynamical system are concretely presented in a single
Hilbert space, irrespective of the choice of $Q$.

     For the discussion of the field theoretic aspects it has proven to 
be convenient to make use of the concept of causal Borchers triples 
$(\Rs, U, \Omega)$. The algebras of observables attached to arbitrary 
regions in Minkowski space can be reconstructed from any such triple, thereby 
specifying a covariant and causal quantum theory.  
Within this setting the problem of constructing a theory thus
presents itself as follows. One first has to devise  
a continuous unitary representation $U$ of the Poincar\'e
group on some Hilbert space which satisfies the relativistic 
spectrum condition with vacuum vector $\Omega$. This task can 
be accomplished, \textit{e.g.}, by specifying the stable particle 
content of the theory and performing the standard Fock space
construction. In a second step one must  exhibit a von Neumann algebra 
$\Rs$ on this space satisfying certain specific compatibility conditions 
with respect  to the action of $U$, which allow one to interpret $\Rs$ 
as an  algebra of observables localized in a given wedge region of 
Minkowski space. It should be noted that the nets of local observable 
algebras appearing in any quantum field theory can be realized in this way. 

     Disregarding systems with an unreasonably large number of local degrees 
of freedom, the algebraic structure of $\Rs$ is known to be model independent, 
\ie the algebras corresponding to different theories are isomorphic 
\cite{BuDaFr}. One may thus take as prototype the von Neumann 
algebra $\Rs_0$ generated by free (non--interacting) fields 
on Fock space which are smeared with test functions having 
support in the given wedge region. Despite this concrete setting, 
the problem of identifying other proper examples of such algebras $\Rs$ 
is notoriously difficult. The strategy pursued in the present 
investigation is based on the general idea of deforming a given causal 
Borchers triple, such as $(\Rs_0, U, \Omega)$, without changing the 
representation $U$. The warping procedure provides a consistent 
method to that effect. It leads to a 
family of new examples of causal Borchers triples in any number 
of spacetime dimensions.

\pagebreak
     However, the deformations of Borchers triples obtained by the 
warping procedure are rather special and of limited physical interest. It 
therefore seems worthwhile to fathom the potential of the general idea 
underlying this construction. Since the representation $U$ of $\Pid$ induces 
the pertinent constraints on the admissible algebras $\Rs$, one may try to 
generalize the formula for the warped deformations by the ansatz
$$
A \doteq \iint \! d\lambda \, d\lambda^\prime \, 
K(\lambda, \lambda^\prime) \, \alpha_{\lambda\lambda^\prime}(A_0) \, L(\lambda,
\lambda^\prime) \, ,  \quad A_0 \in \Rs_0 \, ,
$$
where $d\lambda$ denotes the Haar measure on $\Pid$ (or a subgroup thereof)
and $K, L$ are suitable operator valued kernels. The consistency conditions on
the algebra $\Rs$ generated by the deformed operators can then be
re-expressed in terms of transformation properties of these kernels
under the adjoint action of the representation~$U$. 

     In two spacetime dimensions these constraints simplify considerably.
There it suffices if the kernels $K, L$ transform covariantly
under the adjoint action of the unitary representation $U$  
of $\Pid$ and $\Omega$ is cyclic and separating for
the resulting deformed von Neumann algebra $\Rs$. 
One may then proceed as in Section \ref{triple} 
and extend the representation $U$ to a representation of $\Ps_+$ by
adding to it the modular conjugation associated with 
$(\Rs, \Omega)$ which can be interpreted
as PCT--operator. The algebras corresponding to
arbitrary wedges can be obtained from $\Rs$ by the adjoint 
action of the resulting (anti)unitary representation of~$\Ps_+$.
Indeed, there is evidence that a large family of 
integrable models on two--dimensional Minkowski space,
considered by one of us, can be subsumed in this manner \cite{Le3, Le4}. 

     The prospect of finding other interesting deformations of this kind 
also in higher spacetime dimensions seems promising. Moreover, the method 
can also be transferred to quantum field theories on curved spacetimes 
having a sufficiently large isometry group \cite{DLM}.
Thus the algebraic methods presented here shed new light on the 
yet unsolved constructive problems in relativistic quantum field 
theory.

\appendix
\section*{\Large Appendix }
\setcounter{equation}{0}

     We give here the proof of Lemma \ref{cyclic}, which concludes 
that, given any neighborhood $\Ns$ of the identity in $\Pid$, 
$\Omega$ is cyclic for at most one of the algebras 
$\bigcap_{\lambda \in \Ns}  \, \alpha_\lambda(\Rs)$
and $\bigcap_{\lambda \in \Ns}  \, \alpha_\lambda(\Rs_Q)$.
Moreover, we  
comment on the significance of the spectral constraint made in the
hypothesis of the lemma. 
 
We begin by noting that it suffices 
to establish the assertion for arbitrarily small neighborhoods $\Ns$
of the identity in $\Pid$; for it then holds for all 
bigger neighborhoods as well. In particular, one may
assume that $\lambda_0 \, \Ns \lambda_0^{-1} = \Ns$, where
$\lambda_0 \in \Pid$ is a rotation by $\pi$ which maps $\Ws$ onto 
$\Ws^{\, \prime}$.  Assume now that $\Omega$ is cyclic for 
$\Ss \doteq \bigcap_{\lambda \in \Ns} \, \alpha_\lambda(\Rs) \subset
\Rs$ and let $A \in \Ss \bigcap \Cs^\infty$. Then for 
any $\lambda \in \Ns$ one has $\alpha_{\lambda^{-1}}(A) \in \Rs^\infty$, so the 
warped operators ${\alpha_{\lambda^{-1}}(A)}_{Q}$ are well--defined and
$\alpha_\lambda({\alpha_{\lambda^{-1}}(A)}_{Q}) \in \alpha_\lambda(\Rs_Q)$. 
By Proposition~\ref{2.5} \
$\alpha_\lambda({\alpha_{\lambda^{-1}}(A)}_Q) = A_{\Lambda Q \Lambda^T}$,
where $\Lambda$ is the image of $\lambda$ under the canonical 
homomorphism mapping $\Pid$ onto $\Lid$. Hence 
$A_{\Lambda Q \Lambda^T} \in \alpha_\lambda(\Rs_Q)$, $\lambda \in
\Ns$. 

     Assume now, for a \textit{reductio ad absurdum}, 
that $\Omega$ is also cyclic 
for $\bigcap_{\lambda \in \Ns} \, \alpha_\lambda(\Rs_Q) $. Then, since 
$$
\big( \bigvee_{\lambda \in \Ns} \, \alpha_\lambda(\Rs_Q) \big)^\prime =
\bigcap_{\lambda \in \Ns} \, \alpha_\lambda(\Rs_Q{}^\prime)
\supset \bigcap_{\lambda \in \Ns} \,
\alpha_\lambda(\alpha_{\lambda_0}(\Rs_Q)) 
= \alpha_{\lambda_0} \big( \bigcap_{\lambda \in \Ns_0} \,
\alpha_\lambda(\Rs_Q) \big) \, ,
$$
where the inclusion obtains because $(\Rs_Q, U, \Omega)$ is a 
causal Borchers triple, one concludes that $\Omega$ is separating for 
$\bigvee_{\lambda \in \Ns} \, \alpha_\lambda(\Rs_Q)$.
But equation \eqref{vacuum} entails 
$A_{\Lambda Q \Lambda^T} \Omega = A_Q \Omega$, and consequently 
$A_{\Lambda Q \Lambda^T} = A_Q$,  $\lambda \in \Ns$. Proposition \ref{group}
then yields 
$A_{\Lambda Q \Lambda^T - Q} = A = A_{Q - \Lambda Q \Lambda^T}$,  $\lambda \in \Ns$. 
By explicit computation one finds that the sums of matrices of the form  
$\Lambda Q \Lambda^T - Q$, $\lambda \in \Ns$, include all multiples
of $Q$. Hence $A_{mQ} = A$, $m \in \ZZ$, by another 
application of Proposition~\ref{group}.  
The same is true for the smooth operators 
$A^\prime \in \Ts \doteq \alpha_{\lambda_0} (\Ss) \subset \alpha_{\lambda_0} (\Rs)
\subset \Rs^\prime$, as one sees by applying again Proposition~\ref{2.5}. 

     Pick now an arbitrary compact subset $\Gamma$ in the
interior of the forward lightcone $V_+$, so $Q \, \Gamma$ is 
a compact subset in the interior of $\Ws$. Hence for
any given $x,y \in \RR^n$ and sufficiently large $m \in \NN$,
the wedges $\Ws + x + m \, Q u$ and $\Ws^\prime + y - m \, Q v$ lie spacelike
to each other for all $u \in \Gamma$ and $v \in V_+$. 
As explained in Section \ref{application}, one therefore has for any 
$A \in \Rs$, $\As^\prime \in \Rs^\prime$, the equality 
$ [ \alpha_{x + m \, Qu} (A), \alpha_{y - m \, Qv}(A^\prime) ] = 0$.
Now let $A \in \Ss \bigcap \Cs^\infty$, 
$A^\prime \in \Ts \bigcap \Cs^\infty$, let $\Phi$ be 
any vector with spectral support with respect to 
$U \upharpoonright \RR^n$ contained in $\Gamma$, and let $\Psi$ be any other
vector with compact spectral support. According to 
the preceding step and Proposition~\ref{2.5}  one has 
$\alpha_x(A) = \alpha_x(A_{m Q}) = (\alpha_x(A))_{mQ}$ and
similarly $\alpha_y(A^\prime) = (\alpha_y(A^\prime))_{-mQ}$. So
one obtains by the same line of arguments as in the proof of 
Proposition \ref{2.4}, 
\begin{align*}
& \langle \Phi, \, \alpha_x(A) \,  \alpha_y(A^\prime) \, \Psi \rangle 
=  \lim_{m \rightarrow \infty}
\langle \Phi, (\alpha_x(A))_{m Q} \,  
(\alpha_y(A^\prime))_{- m Q} \, \Psi \rangle \\
&  = \lim_{m \rightarrow \infty}
\langle \Phi, (\alpha_y(A^\prime))_{-m Q} \,     
(\alpha_x(A))_{m Q} \, \Psi \rangle  
= \langle \Phi, \, \alpha_y(A^\prime) \,  \alpha_x(A) \, \Psi \rangle 
\, ,
\end{align*}
where in the second equality the support properties of $\Phi$ and
the above commutation properties of $A, A^\prime$ have been used.
Thus, varying $\Phi, \Psi$ within the above limitations, one arrives at 
\begin{equation*} \label{triviality}
E(\Gamma) \, [\alpha_x(A), \alpha_y(A^\prime)] = 0 \quad
\mbox{for} \quad  x,y \in \RR^n \, ,  
\end{equation*}
where $E(\, \cdot \,)$ 
denotes the spectral resolution of $U \upharpoonright \RR^n$. 

   This equality has been established for 
$A \in \Ss \bigcap \Cs^\infty$ and $A^\prime \in \Ts \bigcap \Cs^\infty$.
But if $\Ns$ is sufficiently small, the algebra ${\Ss}$
is mapped into itself by all translations in the open convex 
cone $ \bigcap_{\lambda \in \Ns} \Lambda \Ws $;
appealing to the discussion following relation~(\ref{smoothing}) 
allows one to conclude that  $\Ss \bigcap \Cs^\infty$ is weakly dense
in ${\Ss}$ and, similarly,   $\Ts \bigcap \Cs^\infty$ is weakly dense
in ${\Ts}$.  So the equality holds for all 
$A \in \Ss$ and $A^\prime \in \Ts$. Moreover, for any 
$u,v \in \RR^n$ there is a  $w \in \RR^n$ such that 
$\alpha_w(\Ts) \supset \alpha_u(\Ts) \bigvee \alpha_v(\Ts)$.
(This follows from the the Poincar\'e covariance discussed in Section 
\ref{application} and the geometry of wedge regions).
Hence 
$E(\Gamma) \, [A, T] = 0 $ for $A \in \Ss$ 
and $T \in \bigvee_{y \in \RR^n} \alpha_y(\Ts)$.

  Since $\Omega$ is cyclic for ${\Ss}$ 
it is also cyclic  for ${\Ts} = \alpha_{\lambda_0}(\Ss)$. 
The spectral condition on $U \upharpoonright \RR^n$ therefore 
implies that 
the elements of $\bigcap_{\, y \in \RR^n} \alpha_y(\Ts)^\prime$
are invariant under translations. In particular 
$U(x) \in \bigvee_{y \in \RR^n} \alpha_y(\Ts)$, $x \in \RR^n$, 
\cf \cite[Theorem~4.6]{Ar}. Thus  
$E(\Gamma) \, [A, U(x)] = 0 $,  $x \in \RR^n$, and consequently  
$E(\Gamma) \, A \Omega = 0$, $A \in \Ss$.  
It is then clear that $E(\Gamma) = 0$ 
for any compact subset $\Gamma$ in the interior of $V_+$. So the 
spectrum of $U \upharpoonright \RR^n$ is confined to the boundary of 
the lightcone $V_+$, \ie there is no spectral point in its 
interior, contradicting the hypothesis of the lemma. 
This completes the proof of the lemma.

     Finally, let us discuss the significance of the assumption that
the spectrum of $U \upharpoonright \RR^n $ intersects the interior 
of $V_+$. As a matter of fact, disregarding the trivial case 
$\mbox{sp} \, U = \{ 0 \}$, this input is a consequence of the 
additivity of the energy--momentum spectrum, which can be established in 
the present setting if $\Omega$ is (apart from a phase) the only unit vector 
in the underlying Hilbert space which is invariant under translations 
\cite[Chapter II.5.4]{Haag}. The possibility that $\mbox{sp} \, U$ consists 
of the boundary of  $V_+$ (and thus is not additive) can only be realized in 
theories where the Lorentz symmetry is spontaneously broken. With the help 
of one--dimensional chiral fields which one assigns to lightrays, one can 
manufacture such examples, and these are stable under the warping procedure. 
Since these examples seem to be merely of academic interest, 
we do not present them here. 

\vspace*{7mm} 
\noindent \textbf{\Large Acknowledgments}  \\[2mm] 
{\small GL wishes to thank S.~Waldmann for interesting 
discussions about Rieffel deformations. }


\end{document}